\definecolor{gold}{rgb}{0.99,0.78,0.07}
\tikzstyle{background}=[rectangle,fill=gold!40, inner sep=0.08cm, rounded corners=1mm]
\newcommand\inter[1]{\llbracket #1 \rrbracket}
\newcommand{\matindex}[1]{\mbox{\scriptsize#1}}%
\newcommand{\Ss}{\mathcal{S}}
\newcommand{\Aa}{\mathcal{A}}
\newcommand{\Bb}{\mathcal{B}}
\newcommand{\Ee}{\mathcal{E}}
\newcommand{\Uu}{\mathcal{U}}
\newcommand{\Dd}{\mathcal{D}}
\newcommand{\Mm}{\mathcal{M}}
\newcommand{\Tt}{\mathcal{T}}
\newcommand{\seq}[1]{\langle #1 \rangle}
\newcommand{\N}{\mathbb N}
\newcommand{\Nat}{\mathbb N}
\newcommand{\set}[1]{\left\{ #1 \right\}}
\newcommand{\struc}[1]{\Xi_{#1}}
\newcommand{\sst}{\ensuremath{\mathsf{SST}}}
\newcommand{\fot}{\ensuremath{\mathsf{FOT}}} 
\newcommand{\twst}{\ensuremath{\mathsf{2WST}}}
\newcommand{\la}{\ensuremath{sf}}
\newcommand{\uniquepred}{\texttt{u\_pred}}
\newcommand{\uniquesucc}{\texttt{u\_succ}}
\newcommand{\istring}{\texttt{is\_string}_{\#}}
\newcommand{\isstring}{\texttt{is\_string}}
\newcommand{\reach}{\texttt{reach}_{\#}}
\newcommand{\ifirst}{\texttt{first}}
\newcommand{\beh}{\Bb}
\newcommand{\MSO}{\mathrm{MSO}}
\newcommand{\FO}{\mathrm{FO}}
\newcommand{\dom}{\mathrm{dom}}
\newcommand{\nodes}{\mathrm{pos}}
\newcommand{\flows}{\rightsquigarrow}
\newcommand{\varsst}{\mathcal{X}}
\newcommand{\contribute}{\text{useful}}
\newcommand{\useful}{\text{useful}}
\newcommand{\oomit}[1]{}
\newcommand{\linear}{copyless}
\newtheorem{proposition}{Proposition}
\tikzstyle{nloc}=[draw, text badly centered, rectangle, rounded corners, minimum size=2em,inner sep=0.5em]
\tikzstyle{loc}=[draw,rectangle,minimum size=1.4em,inner sep=0em]
\tikzstyle{trans}=[-latex, rounded corners]
\tikzstyle{trans2}=[-latex, dashed, rounded corners]
\def\rmdef{\stackrel{\mbox{\rm {\tiny def}}}{=}} 
\newcommand{\Ashu}[1]{}
\newcommand{\Manu}[1]{}
\newcommand{\krish}[1]{}
\title{FO-definable transformations of infinite strings}
\author[1]{V. Dave}
\author[1]{S. Krishna} 
\author[1,2]{A. Trivedi}
\affil[1]{Indian Institute of Technology Bombay  (\texttt{krishnas,vrunda@cse.iitb.ac.in})}
\affil[2]{University of Colorado Boulder (\texttt{ashutosh.trivedi@colorado.edu})}
\authorrunning{Dave, Krishna, and Trivedi}
\begin{document}

\maketitle
\begin{abstract}
The theory of regular and aperiodic transformations of finite strings has
recently received a lot of interest.
These classes can be equivalently defined using logic (Monadic second-order
logic and first-order logic), two-way machines (regular two-way and aperiodic
two-way transducers), and one-way register machines (regular streaming string
and aperiodic streaming string transducers).
These classes are known to be closed under operations such as sequential
composition and regular (star-free)  choice; and problems such 
as functional equivalence and  type checking,  are decidable for these classes.
On the other hand, for infinite strings these results are only known for
$\omega$-regular transformations: Alur, Filiot, and Trivedi studied
transformations of infinite strings  and introduced an extension of streaming
string transducers over $\omega$-strings and showed that they capture
monadic second-order definable transformations for infinite strings. 
In this paper we extend their work to recover connection for
infinite strings among first-order logic definable transformations, aperiodic
two-way transducers, and aperiodic streaming string transducers.

\end{abstract}

\section{Introduction}
\label{sec:introduction}
The beautiful theory of regular languages is the cornerstone of theoretical
computer science and formal language theory.
The perfect harmony among the languages of finite words definable using abstract machines
(deterministic finite automata, nondeterministic finite automata, and two-way
automata), algebra (regular expressions and finite monoids), and logic (monadic
second-order logic (MSO)~\cite{Bu60}) did set the stage for the generalizations of the
theory to not only for the theory of regular languages of infinite
words~\cite{Bu62,McN66}, trees~\cite{AM09}, partial orders~\cite{Tho96}, but
more recently for the theory of regular transformations of the finite
strings~\cite{AC10}, infinite strings~\cite{FiliotTrivediLics12,ADT13}, and
trees~\cite{AD12}.
For the theory of regular transformations it has been shown that abstract
machines (two-way transducers~\cite{EH01} and streaming string
transducers~\cite{AC10}) precisely capture the transformations definable via
monadic second-order logic transformations~\cite{Cour94}.
For a detailed exposition on the regular theory of languages and
transformations, we refer to the surveys by Thomas~\cite{Tho96,Tho97} and
Filiot~\cite{Fil15}, respectively.  

There is an equally appealing and rich theory for first-order logic (FO) definable
subclasses of regular languages. 
McNaughton and Papert~\cite{McP71} observed the equivalence between
FO-definability and star-free regular expressions for finite words, while 
Ladner~\cite{Ladner77} and Thomas~\cite{thomas79} extended this connection to
infinite words.
The equivalence of star-free regular expressions and languages defined via
aperiodic monoids is due to  Sch\"utzenberger~\cite{Sch65}  and corresponding
extension to infinite words is due to Perrin~\cite{Perrin84}.
For a detailed introduction to FO-definable language we refer the
reader to Diekert and Gastin~\cite{dg08SIWT}.

The results for the theory of FO-definable transformations are relatively recent. 
While Courcelle's definition of logic based transformations~\cite{Cour94}
provides a natural basis for FO-definable transformations of finite as
well as infinite words, ~\cite{FKT14} observed that over finite words,
streaming string transducers~\cite{AC10} with an appropriate notion of
aperiodicity precisely capture the same class of transformations.
Carton and Dartois~\cite{CD15} introduced aperiodic two-way transducers for finite words and
showed that it precisely captures the notion of FO-definability.
We consider transformations of infinite strings and generalize
these results by showing that appropriate aperiodic restrictions on two-way
transducers  and streaming string transducers on infinite strings
capture the essence of FO-definable transformations.
Let us study an example to see how the following $\omega$-transformation
can be represented using logic, two-way transducers, and streaming string
transducers. 

\begin{example}[Example Transformation] Let $\Sigma = \{a,b,\#\}$. 
Consider an $\omega$-transformation $f_1 : \Sigma^\omega \rightharpoonup
\Sigma^\omega$ such that it replaces any maximal $\#$-free finite string $u$ by
$\overline{u}u$, where $\overline{u}$ is the reverse of $u$.
Moreover $f_1$ is defined only for strings with finitely many $\#'s$, e.g. for
all $w {=} u_1\#u_2\#\dots u_n\# v$ s.t $u_i\in\{a,b\}^*$ 
and $v\in \{a,b\}^\omega$, we have $f_1(w) {=} \overline{u_1}u_1\#\dots
\#\overline{u_n}u_n\#v$.
\end{example}
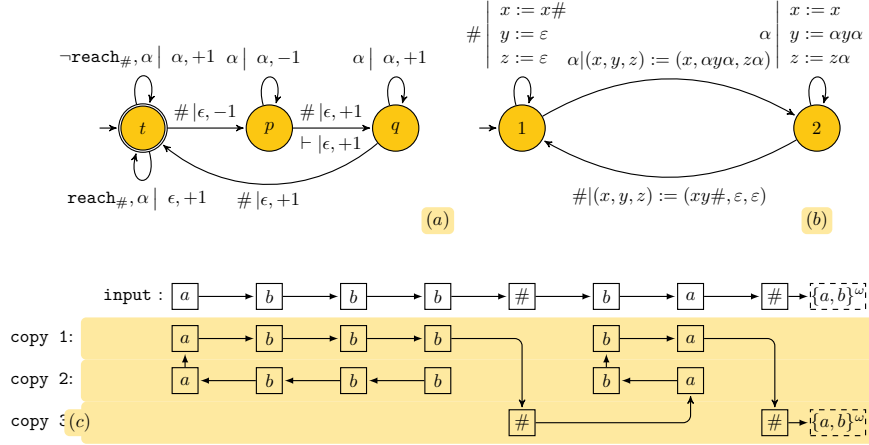
\begin{figure}[t]
  \tikzstyle{trans}=[-latex, rounded corners]
  \begin{center}
    \scalebox{0.7}{
      
      \begin{tikzpicture}[->,>=stealth',shorten >=1pt,auto, semithick,scale=.8]
        \tikzstyle{every state}=[fill=gold]
        
        \node[state] at (0,4) (A) {$p$} ;
        \node[state] at (3,4) (B) {$q$} ;
        \node[state,initial,initial where=left,initial text={}, accepting] at (-3,4) (C) {$t$} ;
        
        \path(A) edge[loop above] node
             {$\alpha\left|\begin{array}{lll} \alpha, -1 \end{array}\right.$}  (A);

             \path(A)  edge node[above] {$\# \left|
               \epsilon, +1 \right.$} node[below] {$\vdash \left| \epsilon,
               +1 \right.$} 
             (B);

             \path(B) edge[loop above] node
                  {$\alpha\left|\begin{array}{llllllll} \alpha, +1 \end{array}\right.$}
                  (B);
                  
                  \path(B) edge [bend left=40] node {$\# \left| \epsilon, +1 \right.$} (C);
                
                \path(C) edge[loop above] node
                     {$\neg \reach, \alpha\left|\begin{array}{llllllll} \alpha, +1 \end{array}\right.$} (C);
                     \path(C) edge[loop below] node
                          {$\reach, \alpha\left|\begin{array}{llllllll} \epsilon, +1 \end{array}\right.$} (C);
                          \path(C) edge node  {$\# \left| \epsilon, -1 \right.$} (A);
                          
                          \node[background] at (4, 1.8) {($a$)};
                          \node[background] at (13, 1.8) {($b$)};
                          \node[background] at (-4.5, -3) {($c$)};
      
      \node[initial,state, initial text={}] at (6, 4) (A1) {$1$} ;
      \node[state] at (13,4) (B1) {$2$} ;

  \path (A1) edge [loop above] node {$\#\left|\begin{array}{llllllll}x:=x\#\\y:=\varepsilon\\z:=\varepsilon\end{array}\right.$} (A1);
  \path (A1) edge [bend left] node [above]
        {$\begin{array}{lllll} \vspace{-2mm}\\ \alpha|(x,y,z):=(x,\alpha
            y \alpha, z \alpha)\end{array}$} (B1);

  \path (B1) edge [loop above] node {$\alpha\left|\begin{array}{lllll} x:=x\\
   y:=\alpha y \alpha \\ z := z\alpha \end{array}\right.$} (A1);

  \path (B1) edge [bend left] node [below]
        {$\begin{array}{l}
            \#|(x,y,z):=(xy\#,\varepsilon,\varepsilon) \\ \vspace{-2mm}\end{array}$} (A1);

        \node[loc] at (-2, 0) (B) {$a$} ;
        \node[loc] at (0,0) (B0) {$b$} ;
        \node[loc] at (2,0) (B1) {$b$} ;
        \node[loc] at (4,0) (B2) {$b$} ;
        \node[loc] at (6,0) (B3) {$\#$} ;
        \node[loc] at (8,0) (B4) {$b$} ;
        \node[loc] at (10,0) (B5) {$a$} ;
        \node[loc] at (12,0) (B6) {$\#$} ;
        \node[loc,dashed] at (13.5,0) (B7) {$\{a,b\}^\omega$} ;
        
        \draw[trans] (B) -- (B0);
        \draw[trans] (B0) -- (B1); 
        \draw[trans] (B1) -- (B2); 
        \draw[trans] (B2) -- (B3); 
        \draw[trans] (B3) -- (B4); 
        \draw[trans] (B4) -- (B5); 
        \draw[trans] (B5) -- (B6); 
        \draw[trans] (B6) -- (B7); 
    
        \node[loc] at (-2, -1) (C) {$a$} ;
        \node[loc] at (0,-1) (C0) {$b$} ;
        \node[loc] at (2,-1) (C1) {$b$} ;
        \node[loc] at (4,-1) (C2) {$b$} ;
        \node[loc] at (8,-1) (C4) {$b$} ;
        \node[loc] at (10,-1) (C5) {$a$} ;
        \node at (13.5,-1) (C7) {$~~~~~~$} ;
        
        \draw[trans] (C) -- (C0);
        \draw[trans] (C0) -- (C1); 
        \draw[trans] (C1) -- (C2); 
        \draw[trans] (C4) -- (C5); 
        
        \node[loc] at (-2, -2) (D) {$a$} ;
        \node[loc] at (0,-2) (D0) {$b$} ;
        \node[loc] at (2,-2) (D1) {$b$} ;
        \node[loc] at (4,-2) (D2) {$b$} ;
        \node[loc] at (8,-2) (D4) {$b$} ;
        \node[loc] at (10,-2) (D5) {$a$} ;
        \node at (13.5,-2) (D7) {$~~~~~~$} ;
        
        \draw[trans] (D) -- (C);
        \draw[trans] (D2) -- (D1);
        \draw[trans] (D1) -- (D0); 
        \draw[trans] (D0) -- (D); 
        \draw[trans] (D5) -- (D4);
        
        \node at (-2,-3) (E) {$~~$} ;
        \node[loc] at (6,-3) (E3) {$\#$} ;
        \node[loc] at (12,-3) (E6) {$\#$} ;
        \node[loc,dashed] at (13.5,-3) (E7) {$\{a,b\}^\omega$} ;
        
        \draw[->,rounded corners] (C2) --  (6, -1) --  (E3);
        \draw[trans] (D4) -- (C4);
        \draw[->, rounded corners] (E3) -- (10, -3) -- (D5);
        \draw[->, rounded corners] (C5) -- (12, -1) -- (E6);
        \draw[trans] (E6) -- (E7);

        \begin{pgfonlayer}{background}
          \node at (-2.2, 0) [label=left:\texttt{input} :] {};
          \node [background, fit=(C) (C7), label=left:\texttt{copy 1}:] {};
          \node [background, fit=(D) (D7), label=left:\texttt{copy 2}:] {};
          \node [background, fit=(E) (E7), label=left:\texttt{copy 3}:] {};
        \end{pgfonlayer}
        
      \end{tikzpicture}
    }
  \end{center}
  \vspace{-1em}
  \caption{Transformation $f_1$ given as (a) two-way
    transducers with look-ahead  (b) streaming string transducers with
    $F(\set{2}) = x z$ is the output associated with Muller set $\set{2}$, and
    (c) FO-definable transformation for the string
    $abbb\#ba\#\{a,b\}^\omega$. Here symbol $\alpha$ stands for
  both symbols $a$ and $b$, and the predicate $\reach$ is the lookahead
  that checks whether string contains a $\#$ in
  future. \label{fig:fo-example}}  
  \vspace{-1em}
\end{figure}

\noindent{\bf Logic based transformations.}
Logical descriptions of transformations of structures---as introduced by
Courcelle~\cite{Cour94}---work 
by introducing a fixed number of copies of the  vertices of the input graph; and
the domain, the labels and the edges of the output graph are defined by MSO
formulae with zero, one or two free variables, respectively, interpreted over
the input graph.  
Figure \ref{fig:fo-example}(c) shows a way to express transformation $f_1$ using
three copies of the input with a) logical formula $\phi_{\text{dom}}$  expressing
domain of the transformation, b) logical formulae $\phi^c_\alpha(i)$ (with one free
variable) for every copy $c \in \set{1, 2}$ and letter $\alpha \in \set{a, b}$
expressing the label of a position  $i$ for copy $c$,  and c) logical formulae
$\phi^{c, d}(i, j)$ with two free variables expressing the edge from position
$i$ of copy $c$ to position $j$ of copy $d$.
The formulae $\phi_\dom$, $\phi^c_a$, and $\phi^{c, d}$ are interpreted over
input structure (in this paper always an infinite string), and it is easy to see
that these formulae for our example can easily be expressed in MSO. 
In this paper we study logical transformations expressible with FO and to cover
a larger class of transformations, we use natural order 
relation $\prec$ for positions instead of the successor relation. 
We will later show that the transformation $f_1$ indeed can be expressed
using FO.

\noindent{\bf Two-Way Transducers.} 
\label{two-way}
For finite string transformations, Engelfriet and Hoogeboom~\cite{EH01} showed
that the finite-state transducers when equipped with a two-way input tape have
the same expressive power as MSO transducers, and Carton and Dartois~\cite{CD15}
recovered this result for FO transducers and two-way transducers with
aperiodicity restriction. 
A crucial property of two-way finite-state transducers exploited in these
proofs~\cite{EH01,CD15} is the fact that transitions capable
of regular (star-free) \emph{look-ahead}  (i.e., transitions that test the
whole input string against a regular property) do not increase the
expressiveness of regular (aperiodic) two-way transducers. 
However, this property does not hold in case of $\omega$-strings.
In Figure~\ref{fig:fo-example}(a), we show a two-way transducer characterizing
transformation $f_1$.
The transducer uses the lookahead $\reach$ to check if the remaining part
of the string contains a $\#$ in future.
A transition labeled $<\phi, \alpha \left | \beta, +1 \right. >$ of the  two-way
transducer should be read as: if the current position on the string
satisfies the look-ahead $\phi$ and the current symbol is $\alpha$ then
output symbol $\beta$ and move the input tape head to the right.
This transducer works by first checking if the string contains a $\#$ in the
future of the current position, if so it moves its head all the way to the
position before $\#$ and starts outputting the symbols in reverse, and when it
sees the end-marker or a $\#$ it prints the string before the $\#$; however, if
there is no $\#$ in future, then the transducer outputs the rest of the string. 
It is straightforward to verify that this transducer characterizes the
transformation $f_1$. 
However, in the absence of the look-ahead a
two-way transducer can not express this transformation. 

\noindent{\bf Streaming String Transducers.}
Alur and  {\v C}ern\'y~\cite{AC10,AC11} proposed a one-way finite-state
transducer model, called the \emph{streaming string transducers} (\sst{}), that
manipulates a finite set of string variables to compute its output, and showed
that they have same expressive power as MSO transducers. 
\sst{}, instead of appending symbols to the output tape, concurrently update all
string variables using a concatenation of string variables and output symbols in
a \emph{copyless} fashion, i.e. no variable occurs more than once in each
concurrent variable update.   
The transformation of a string is then defined using an output (partial)
function $F$ that associates states with a copyless concatenation of string
variables, s.t.\ if the state $q$ is reached after reading the string and
$F(q) {=} xy$, then the output string is the final valuation of $x$ concatenated
with that of $y$. 
\cite{FiliotTrivediLics12} generalized this by introducing
a Muller acceptance condition to give an \sst{} to characterize
$\omega$-transitions. 
Figure~\ref{fig:fo-example}(b) shows a streaming string transducer accepting the
transformation $f_1$.
It uses three string variables and concurrently prepends and/or appends these
variables in a copyless fashion to construct the output.
The acceptance set and the output is characterized by a Muller set (here
$\set{2}$ and its output $xz$), such that if the infinitely visiting states set
is $\set{2}$ then the output is limit of the values of the concatenation $xz$.
Again, it is easy to verify that \sst{} in Figure~\ref{fig:fo-example}(b) captures
the transformation $f_1$. 

\noindent{\bf Contributions and Challenges.}
Our main contributions include the definition of aperiodic streaming
string transducers and aperiodic two-way transducers, and the proof of the
following key theorem connecting FO and transducers for transformations of infinite strings. 

\begin{theorem}
  \label{thm:main}
  Let $F: \Sigma^\omega \to \Gamma^\omega$.
  Then the following assertions are equivalent:
  \begin{enumerate}
  \item
    $F$ is first-order definable.
  \item
    $F$ is definable by some aperiodic two-way transducer with
    star-free look-around.
  \item
    $F$ is definable by some aperiodic streaming string transducers. 
  \end{enumerate}
\end{theorem}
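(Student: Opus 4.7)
The plan is to prove the three-way equivalence by a cyclic chain $(3) \Rightarrow (2) \Rightarrow (1) \Rightarrow (3)$, lifting the finite-string arguments of Carton--Dartois~\cite{CD15} and~\cite{FKT14} to the $\omega$-setting while carefully handling Muller acceptance. Before any of the three translations, I would fix compatible notions of aperiodicity: for an \sst{} via the transition monoid on states paired with the monoid of variable-update substitutions, and for a two-way transducer with star-free look-around via the transition monoid induced by crossing sequences on the input structure. The basic sanity check is that aperiodicity of such a machine implies that the look-around predicates it uses, as well as the domain $\omega$-language, are FO-definable (equivalently aperiodic/counter-free $\omega$-languages in the sense of Perrin).

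For $(3) \Rightarrow (2)$, I would simulate an aperiodic \sst{} $T$ by an aperiodic two-way transducer with star-free look-around. The idea is the standard one: using a star-free look-around the two-way machine predicts the Muller class $C$ visited infinitely often and the associated output expression $F(C) = X_{i_1} X_{i_2} \cdots X_{i_k}$, and then makes one back-and-forth sweep of the input for each surviving variable, emitting symbols in the order they appear in the final valuation of that variable. The $\omega$-twist is that the last variable, whose limit is produced, must be output by a single left-to-right pass conditioned on a look-ahead recognizing the tail; copylessness of $T$ guarantees that each output symbol originates from a unique input position. Aperiodicity of the two-way machine follows because its look-around predicates are star-free and its control is the product of $T$'s state monoid with the variable flow-graph monoid, both aperiodic by construction.

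For $(2) \Rightarrow (1)$, I would carry out the Courcelle-style encoding of a two-way run as a finite-copy logical interpretation. The copies of the input are indexed by the finitely many ``visit types'' of the two-way transducer at each position. The domain, labels, and edge formulas require quantifying over the run and expressing the ``next-visit'' relation between input positions, and this relation can be written in FO precisely because aperiodicity of the transducer (combined with star-freeness of the look-around) makes the set of crossing sequences compatible with a given position expressible in FO rather than needing MSO. The Muller acceptance condition contributes only an FO sentence to $\phi_{\dom}$.

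The main obstacle is $(1) \Rightarrow (3)$. Here I would mimic the finite-word construction of~\cite{FKT14}: starting from the FO interpretation, first build an aperiodic two-way transducer with star-free look-around that reproduces the output (routing between copies via the FO edge formulas, with look-around checking whether a position belongs to a given copy), and then apply the $\omega$-variant of the Alur--\v{C}ern\'y flow-graph construction to obtain an \sst{} whose states are pairs of a two-way state and an equivalence class of variable flows. The delicate point is to verify that this \sst{} is aperiodic: the resulting transition monoid must inherit aperiodicity from the two-way monoid and the star-free look-around monoid, and the Muller acceptance, which records \emph{which} variables' contents survive to the limit and \emph{how} $F(\cdot)$ concatenates them, must itself be FO-definable over states. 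I expect the principal technical work to be showing that the canonical product with the variable flow-graph monoid preserves aperiodicity, and that the partition of runs into Muller classes together with the output assignment $F$ can be derived from an FO-definable analysis of the tail of the input. Closing the cycle then yields the theorem.
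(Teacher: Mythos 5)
Your cycle $(3)\Rightarrow(2)\Rightarrow(1)\Rightarrow(3)$ differs from the paper's in one leg. The paper proves $(1)\Leftrightarrow(2)$ directly in both directions (Section~\ref{sec:fot-2wst}), then $(2)\Rightarrow(3)$ via an intermediate \sst{} with star-free look-around (Section~\ref{sec:2wst-sst}), and closes the cycle with a \emph{direct} construction $(3)\Rightarrow(1)$: from an aperiodic $1$-bounded \sst{} it builds an \fot{} by showing that variable flow, usefulness of a pair $(X,i)$, and the path relation in the \sst{}-output structure are all FO-definable (Propositions~\ref{prop:foflow},~\ref{prop:contribution} and Lemma~\ref{lem:fopath}). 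Your $(2)\Rightarrow(1)$ and your $(1)\Rightarrow(3)$ (routed through a two-way machine with FO instructions and then through $\mathsf{SST}_{\la}$) match the paper's arguments and identify the right key lemmas. What you do differently is replace the paper's \sst{}-to-\fot{} translation by a direct \sst{}-to-\twst{} simulation. That is a legitimate way to close the cycle in principle, and it would buy you a machine-to-machine translation the paper never states; the paper's route instead buys a logical object (the output structure) whose FO-definability does double duty for both the output order and the Muller condition.

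However, your $(3)\Rightarrow(2)$ leg as described has a genuine gap. The claim that the two-way machine ``makes one back-and-forth sweep of the input for each surviving variable, emitting symbols in the order they appear in the final valuation of that variable'' is false in general: the final content of a single copyless variable is the label of a path in the \sst{}-output structure that can reverse direction unboundedly many times (already $X:=aXb$ at every step forces one full reversal per position, and nesting through several variables compounds this), so no fixed number of sweeps per variable suffices. The simulating \twst{} must instead navigate the output-structure path edge by edge, using star-free look-behind to recover the \sst{} state at each position and hence the local variable update; the same issue recurs for the last variable $x_n$, whose appended chunks $u$ may themselves contain variables whose contents lie arbitrarily far back, so its limit is not produced by ``a single left-to-right pass.'' Moreover, aperiodicity of the resulting \twst{} must be established for all four behaviour languages $L^{xy}_{pq}$, and your justification (``the control is the product of two aperiodic monoids'') does not engage with the fact that the simulating machine's left-left and right-right behaviours are determined by the zigzag of the output path, not merely by the product of the state and flow monoids. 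Either this leg needs the full navigation argument spelled out, or you should replace it by the paper's direct $(3)\Rightarrow(1)$ construction via FO-definable variable flow.
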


We introduce the notion of transition monoids for automata, \twst{}, and \sst{}
with the Muller acceptance condition; and recover the classical result proving
aperiodicty of a language using the aperiodicity of the transition monoid of its
underlying automaton.
The equivalence between  \fot{} and \twst{} with star-free
look-around (Section~\ref{sec:fot-2wst}), 
crucially uses the transition monoid 
with Muller acceptance, which is necessary 
 to show
aperiodicity of the underlying language of the $\twst{}$. 
On the other hand, while going from aperiodic  \sst{} to \fot{}
(Section~\ref{sec:sst-fot}), the main difficulty is the construction of the \fot{}
using the aperiodicity  of the \sst{}, and while going from \twst{} with star-free
look-around to \sst{} (Section~\ref{sec:2wst-sst}), the hard part is to
establish the aperiodicity of the  \sst{}.
Due to space limitation, we only provide key definitions and sketches of our
results---complete proofs and related supplementary material can be found in the
appendix. 

\section{Preliminaries}
\label{sec:prelims}
A finite (infinite) string over alphabet $\Sigma$ is a finite (infinite)
sequence  of letters from  $\Sigma$.
We denote by $\epsilon$ the empty string.
We write $\Sigma^*$ for the set of finite strings, $\Sigma^{\omega}$ for the set
of $\omega$-strings over $\Sigma$, and $\Sigma^\infty = \Sigma^* \cup
\Sigma^\omega$ for the set of finite and $\omega$-strings.  
A language $L$ over an alphabet $\Sigma$ is defined as a set of  
strings, i.e. $L \subseteq \Sigma^\infty$. 

For a string $s \in \Sigma^{\infty}$ we write $|s|$ for its length; 
note that $|s|=\infty$ for an $\omega$-string $s$.  
Let $\dom(s)=\{1,2,3,\dots,\}$ be the set of positions 
in $s$.
For all $i\in \dom(s)$ we write $s[i]$  for the $i$-th letter of the string $s$. 
For two $\omega$-strings $s,s' \in \Sigma^{\omega}$, we define the distance
$d(s,s')$  as $\frac{1}{2^j}$ where $j$=min$\{k  \mid s[k] \neq s'[k]\}$. We say
that a string $s \in \Sigma^{\omega}$ is the limit of a sequence  
$s_1, s_2, \dots$ of $\omega$-strings $s_i \in \Sigma^{\omega}$ if for every
$\epsilon >0$, there is an index $n_{\epsilon} \in \Nat$ such that  
for all $i \geq  n_{\epsilon}$, we have that $d(s,s_i) \leq \epsilon$.
Such a limit, if exists, is unique and is denoted as $s=\lim_{i \rightarrow
  \infty}s_i$.
For example, $b^{\omega}= \lim_{i \rightarrow \infty}b^ic^{\omega}$. 

\subsection{Aperiodic Monoids for $\omega$-String Languages}
A monoid $\Mm$ is an algebraic structure $(M, \cdot, e)$ with a non-empty set
$M$, a binary  operation $\cdot$, and an identity element $e \in M$ such that 
for all $x, y, z \in M$ we have that  $(x \cdot (y \cdot z)) {=} ((x \cdot y)
\cdot z)$, and $x \cdot e = e  \cdot x$ for all $x \in M$.
We say that a  monoid $(M, \cdot, e)$ is \emph{finite} if the set $M$ is finite.
A monoid that we will use in this paper is the \emph{free} monoid, 
$(\Sigma^*,\cdot, \epsilon)$, which has a set of finite strings  
over some alphabet $\Sigma$ with the empty string $\epsilon$ as the identity. 

We define the notion of acceptance of a language via monoids. 
A morphism (or homomorphism) between two monoids $\Mm = (M, \cdot, e)$ and $\Mm'
= (M', \times, e')$ is a mapping $h: M \rightarrow M'$ such that $h(e) = e'$ and
$h(x \cdot y) = h(x)  \times h(y)$.
Let $h: \Sigma^* \rightarrow M,$ be a morphism from free monoid $(\Sigma^*,
\cdot, \epsilon)$ to a finite monoid $(M, \cdot, e)$.
Two strings $u, v \in \Sigma^{*}$ are said to be similar with respect to
$h$ denoted $u \sim_h v$, if for some  
$n \in \mathbb{N} \cup \{\infty\}$, we can factorize $u, v$ as $u=u_1u_2 \dots u_n$ and       
$v=v_1 v_2 \dots v_n$ with $u_i, v_i \in \Sigma^+$ and $h(u_i)=h(v_i)$ for all $i$. 
Two $\omega$-strings are $h$-similar if we can find factorizations $u_1u_2
\dots$ and $v_1v_2 \dots$ such that $h(u_i) =h(v_i)$ for all $i$. Let $\cong$ be
the transitive closure of $\sim_h$.
$\cong$ is an equivalence relation.
Note that since $M$ is finite, the equivalence relation $\cong$ is of finite index. 
For $w \in \Sigma^{\infty}$ we define $[w]_h$ as the set $\set{u \mid u \cong
  w}$. 
We say that a morphism $h$ accepts a language $L \subseteq \Sigma^{\infty}$ if $w \in L$ implies $[w]_h
\subseteq L$ for all $w \in \Sigma^{\infty}$.

We say that a monoid $(M, ., e)$ is \emph{aperiodic}~\cite{Strau94} if there
exists $n \in \Nat$ such that for all $x \in M$, $x^n = x^{n+1}$. 
Note that for finite monoids, it is equivalent to require that for all $x\in M$,
there exists $n\in \Nat$ such that $x^n= x^{n+1}$.
A language $L \subseteq \Sigma^{\infty}$ is said to be aperiodic iff
it is recognized by some morphism to a finite and aperiodic monoid 
(See Appendix~\ref{app:example-aperiodic}).
 
 \subsection{First-Order Logic for $\omega$-String Languages}                 
A string $s \in \Sigma^{\omega}$ can be represented as a relational structure 
$\struc{s} {=} (\dom(s), \preceq^s, (L^s_a)_{a \in \Sigma})$, called the string
model of $s$, where 
  $\dom(s) = \set{1, 2, \ldots}$ is the set of positions in $s$, 
  $\preceq^s$ is a binary relation over the positions in $s$ characterizing the
  natural order, i.e. $(x, y) \in \preceq^s$ if $x \leq y$;
  $L^s_a$, for all $a \in \Sigma$, are the unary predicates that hold for the
  positions in $s$ labeled with the alphabet $a$, i.e., $L^s_a(i)$ iff 
  $s[i]=a$, for all $i\in \dom(s)$.
When it is clear from context we will drop the superscript $s$ from the
relations $\preceq^s$ and $L^s_a$.

Properties of string models over the alphabet $\Sigma$ can be formalized by
first-order logic denoted by $\FO(\Sigma)$. 
Formulas of $\FO(\Sigma)$ are built up from variables $x, y, \ldots$ ranging
over positions of string models along with \emph{atomic formulae} of the
form   
$x{=} y, x {\preceq} y$, and  $L_a(x)$  for all $a \in \Sigma$  
where formula $x{=}y$ states that variables $x$ and $y$ point to the same
position, the formula $x \preceq y$ states that position corresponding to
variable $x$ is not larger than that of $y$, 
and the formula $L_a(x)$ states that position $x$ has the label $a \in
\Sigma$.
Atomic formulae are connected with \emph{propositional connectives} $\neg$,
$\wedge$, $\lor$, $\to$, and \emph{quantifiers} $\forall$ and $\exists$ that
range over node variables and we use usual semantics for them.  
We say that a variable is \emph{free} in a formula if it does not occur in the
scope of some quantifier.  
A \emph{sentence} is a formula with no free variables.
We write $\phi(x_1, x_2, \ldots, x_k)$ to denote that  at most the variables
$x_1, \ldots, x_k$ occur free in $\phi$.
For a string $s\in \Sigma^*$ and for positions $n_1, n_2, \ldots, n_k \in
\dom(s)$ we say that $s$ with valuation $\nu = (n_1, n_2, \ldots, n_k)$
satisfies the formula $\phi(x_1, x_2, \ldots, x_k)$ and we write 
$(s, \nu) \models \phi(x_1, x_2, \ldots, x_k)$ or  $s \models
\phi(n_1, n_2, \ldots, n_k)$ if formula $\phi$ with $n_i$ as the
interpretation of $x_i$ is satisfied in the string model $\struc{s}$. 
The language defined by an FO sentence $\phi$ is $L(\phi) \rmdef \set{s \in \Sigma^{\omega} \::\:
  \struc{s} \models \phi}$. 
We say that a language $L$ is FO-definable if there is an FO sentence $\phi$
such that $L = L(\phi)$.
The following is a well known result. 
\begin{theorem}\cite{Strau94}
	\label{thm:fo-aperiodic}
	A language $L\subseteq \Sigma^*$ is \ensuremath{\FO}-definable iff it is
	aperiodic.
\end{theorem}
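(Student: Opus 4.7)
The plan is to prove the biconditional by two independent arguments, with star-free expressions serving as a convenient intermediate.

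For the direction FO $\Rightarrow$ aperiodic, I would argue directly via Ehrenfeucht-Fra\"iss\'e games on the string models $\struc{u}$. For each $k$, define $u \equiv_k v$ iff the duplicator wins the $k$-round EF game on $\struc{u}$ and $\struc{v}$; this forces $u$ and $v$ to agree on every FO sentence of quantifier rank at most $k$. The relation $\equiv_k$ has finite index and is a congruence with respect to concatenation (EF strategies compose across factors), so the quotient $\Sigma^*/\equiv_k$ is a finite monoid recognising every rank-$k$ FO-definable language. The key EF computation is $u^{2^k} \equiv_k u^{2^k+1}$ for every $u$ and $k$, proved by induction on $k$: the duplicator matches positions by relative block, using the fact that in $k$ rounds no player can distinguish more than $2^k$ consecutive copies. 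Since the syntactic monoid of any rank-$k$ definable $L$ is a quotient of $\Sigma^*/\equiv_k$, it inherits aperiodicity.

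For the direction aperiodic $\Rightarrow$ FO, the plan is to route through star-free expressions. First invoke Sch\"utzenberger's theorem: a language whose syntactic monoid is finite and aperiodic is star-free. Then translate a star-free expression $E$ into a two-variable FO formula $\phi_E(x,y)$ asserting that the factor of the input strictly between positions $x$ and $y$ lies in $L(E)$, by structural induction on $E$. Singletons $\{a\}$ and Boolean operations translate directly, while concatenation $E_1 \cdot E_2$ becomes $\exists z\,(x \preceq z \wedge z \preceq y \wedge \phi_{E_1}(x,z) \wedge \phi_{E_2}(z,y))$. Wrapping the result with quantifiers pinned to the first and last positions of the input (both FO-definable via $\preceq$) yields a sentence defining $L$.

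The principal obstacle is the Sch\"utzenberger direction (aperiodic implies star-free), which carries the deep content of the theorem. The standard proof proceeds by induction on the size of the syntactic monoid: using Green's relations one decomposes $L$ into a star-free Boolean combination of languages recognised by strictly smaller aperiodic monoids. Aperiodicity is essential here because it forces every $\mathcal{H}$-class to be trivial, so no nontrivial group operations need to be encoded and Kleene star can always be eliminated in favour of complementation and concatenation. The remaining parts of the argument are routine bookkeeping once EF games and Sch\"utzenberger's decomposition are in hand.
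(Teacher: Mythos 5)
The paper does not actually prove Theorem~\ref{thm:fo-aperiodic}: it is imported wholesale from the literature (the citation to Straubing), so there is no in-paper argument to compare yours against. Your proposal is the classical McNaughton--Papert/Sch\"utzenberger proof and is correct in outline: Ehrenfeucht--Fra\"iss\'e games give that $\equiv_k$ is a finite-index congruence with $u^{2^k}\equiv_k u^{2^k+1}$, hence the syntactic monoid of a rank-$k$ definable language is a quotient of an aperiodic monoid and thus aperiodic; the converse correctly factors through Sch\"utzenberger's theorem and the routine star-free-to-FO translation. Two small points of hygiene. First, since you invoke Sch\"utzenberger's theorem as a black box, the real content of the ``aperiodic $\Rightarrow$ FO'' direction lives outside your write-up; your closing sketch of the induction on the syntactic monoid via Green's relations is the right shape, but as written it is a pointer, not a proof. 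Second, your concatenation clause has an off-by-one: if $\phi_E(x,y)$ speaks about the factor \emph{strictly} between $x$ and $y$, then $\exists z\,(\phi_{E_1}(x,z)\wedge\phi_{E_2}(z,y))$ loses the letter at position $z$; you should work with half-open intervals (say, the factor from $x$ inclusive to $y$ exclusive) so that $E_1\cdot E_2$ splits cleanly as $[x,z)\cdot[z,y)$, and handle $\epsilon$ and the endpoints of the word separately. Neither issue is a conceptual gap; both are standard repairs. Note also that the paper's notion of ``aperiodic language'' is defined via recognition by a morphism to a finite aperiodic monoid, which for finite words coincides with aperiodicity of the syntactic monoid, so your use of the syntactic monoid matches the paper's definition.
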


\subsection{Aperiodic Muller Automata for $\omega$-String Languages}
\label{ap-omega}
A deterministic Muller automaton (DMA) is a tuple $\Aa = (Q, q_0, \Sigma, \delta, F)$
where
$Q$ is a finite set of states, $q_0 \in Q$ is the initial state, $\Sigma$
is an input alphabet,
$\delta: Q \times \Sigma \to Q$ is a transition function,
and $F \subseteq 2^Q$ are the accepting (Muller) sets.
For states $q, q' \in Q$ and letter $a \in \Sigma$ we say that $(q, a, q')$ is a
transition of the automaton $\Aa$ if $\delta(q,a ) = q'$ and we write $q
\xrightarrow{a} q'$.  
We say that there is a run of $\Aa$ over a finite string $s = a_1 a_2 \ldots a_n
\in \Sigma^*$ from state $p$ to state $q$  if there is a finite sequence of
transitions $\seq{(p_0, a_1, p_1), (p_1, a_2, p_2), \ldots,  (p_{n-1}, a_n,
  p_n)} \in (Q \times \Sigma \times Q)^*$ with $p = p_0$ and $q = p_n$.
We write $L_{p, q}$ for the set of finite strings such that there is a run of
$\Aa$ over $w$ from $p$ to $q$.
We say that there is a run of $\Aa$ over an $\omega$-string $s = a_1 a_2 \ldots
\in \Sigma^\omega$ if there is a sequence of
transitions $\seq{(q_0, a_1, q_1), (q_1, a_2, q_2), \ldots} \in (Q \times \Sigma
\times Q)^\omega$. 
For an infinite run $r$, we denote by $\Omega(r)$ the set of states that occur
infinitely often  in $r$.
We say that an $\omega$-string $w$ is accepted by a Muller automaton
$\Aa$ if the run  of $\Aa$  on $w$ is such that $\Omega(r) \in F$ and we write
$L(\Aa)$ for the set of all $\omega$-strings accepted by $\Aa$. \\
A Muller automaton $\Aa$ is \emph{aperiodic} iff there exists some
$m {\geq} 1$ s.t. $u^m {\in} L_{p,q}$ iff  $u^{m+1} {\in} L_{p,q}$ for all $u \in
\Sigma^*$ and $p, q \in Q$. 
Another equivalent way to define aperiodicity is using the transition monoid, which, to the best of our knowledge, 
has not been defined in the literature for Muller automata. 
Given a DMA $\mathcal{A} {=} (Q, q_0, \Sigma, \Delta, \set{F_1, \ldots, F_n})$,  
we define the transition monoid $\Mm_\Aa {=} (M_\Aa, \times, \textbf{1})$ of $\mathcal{A}$ 
as follows: 
 $M_\Aa$ is a set of $|Q|\times|Q|$ square matrices over
$(\set{0, 1} \cup 2^Q)^n \cup \set{\bot}$. Matrix multiplication
$\times$ is defined for matrices in $M_\Aa$ with identity element $\textbf{1} \in M_\Aa$, 
where $\textbf{1}$ is the matrix whose diagonal entries are
$(\emptyset, \emptyset, \ldots, \emptyset)$ and non-diagonal entries are all
$\bot$'s. Formally,  $M_\Aa {=} \set{M_s \::\: s \in \Sigma^*}$ is defined using matrices
$M_s$ for strings $s\in \Sigma^*$ s.t. $M_s[p][q] {=} \bot$ if there is no
run from $p$ to $q$ over $s$ in $\Aa$. Otherwise, let $P$ be the set of states 
(excluding $p$ and $q$) witnessed in the unique run from $p$ to $q$.
Then $M_s[p][q] = (x_1, \dots, x_n) \in (\set{0, 1} \cup 2^Q)^n$ where  
(1) $x_i = 0$ iff $\exists r \in P \cup \{p,q\}$, $r \notin F_i$;
(2)  $x_i=1$ iff $P \cup \{p,q\}=F_i$,  and
(3) $x_i=P \cup \{p,q\}$ iff $P \cup \{p,q\} \subset F_i$.
It is easy to see that $M_\epsilon = \textbf{1}$, since $\epsilon$ takes a state to itself and nowhere else. 
The operator $\times$ is simply matrix multiplication for  matrices in $M_\Aa$,
however we need to define addition $\oplus$ and multiplication $\odot$ for
elements $(\set{0, 1} \cup 2^Q)^n \cup \set{\bot}$ of the matrices.
We have $\alpha_1 \odot \alpha_2 = \bot$ if $\alpha_1 = \bot$ or $\alpha_2 =
\bot$, and if $\alpha_1 = (x_1, \ldots, x_n)$ and $\alpha_2 = (y_1, \ldots,
y_n)$ then $\alpha_1 \odot \alpha_2 = (z_1, \ldots, z_n)$ s.t.:
\begin{equation}
\resizebox{0.7 \textwidth}{!}
{
\tag{$\star$}
\label{eq1:product}
$z_i =
\begin{cases}
 0 & \text{if $x_i = 0$ or $y_i = 0$}\\
 1 & \text{if $(x_i = y_i = 1)$ or if $(x_i, y_i \subset F_i \text { and } x_i \cup y_i = F_i)$}\\
 1 & \text{if ($x_i = 1$ and $y_i \subset F_i$) or ($y_i = 1$ and $x_i \subset F_i$)}  \\
 x_i \cup y_i & \text{if $x_i, y_i \subset F_i$ and $x_i \cup y_i \subset F_i$}
\end{cases}
$
}
\end{equation}

Due to determinism, we have that for every
matrix $M_s$ and every state $p$ there is at most one state $q$ such that
$M_s[p][q] \not = \bot$ and hence the only  addition rule we need to
introduce is $\alpha \oplus \bot = \bot \oplus \alpha = \alpha$. 
It is easy to see that $(M_\Aa, \times, \textbf{1})$ is a monoid (a proof is
deferred to the Appendix \ref{app:basic-muller}).
It is straightforward to see that a Muller automaton is aperiodic if and only if
its transition monoid is aperiodic. Appendix \ref{app:aper-equiv} gives a proof showing that a language $L \subseteq \Sigma^{\omega}$ is aperiodic iff 
there is an aperiodic DMA accepting it. 

\section{Aperiodic Transformations}
\label{sec:aperiodic}
In this section we formally introduce three models to express
FO-transformations, and prepare the machinery required to prove their expressive
equivalence in the rest of the paper.
\subsection{First-Order Logic Definable Transformations} 
\label{subsec:fotrans}
Courcelle~\cite{Cour94} initiated the study of structure transformations using
MSO logic.
His main idea was to define a transformation  
$(w, w') \in R$  by defining the string model of $w'$ using a finite number of
copies of positions of the string model of $w$. 
The existence of positions, various edges, and position labels are then given as
$\MSO(\Sigma)$ formulas. 
We study a restriction of his formalism to use first-order logic to express
string transformations. 

\begin{definition}
  An \emph{FO string transducer}  is a tuple
$T {=} (\Sigma, \Gamma, \phi_{\dom}, C, \phi_{\nodes}, \phi_{\preceq})$ where:
\begin{itemize}
\item
  $\Sigma$ and $\Gamma$ are finite sets of input and output alphabets;
\item 
  $\phi_\dom$ is a closed $\FO(\Sigma)$ formula characterizing the domain of
  the transformation;   
\item 
  $C {=} \set{1, 2, \ldots, n}$ is a finite index set;  
\item 
  $\phi_{\nodes} {=} \set{ \phi^c_\gamma(x) : c \in C \text{ and } \gamma \in
    \Gamma}$ is a set of $\FO(\Sigma)$ formulae with a free variable $x$; 
\item 
  $\phi_{\preceq} {=} \set{\phi^{c, d}_\preceq(x, y) : c, d \in C}$ is a set of
  $\FO(\Sigma)$ formulae with two free variables $x$ and $y$.   
\end{itemize}
The transformation $\inter{T}$ defined by $T$ is as follows.  
A string $s$ with $\struc{s} = (\dom(s), \preceq, (L_a)_{a \in \Sigma})$ is in
the domain of $\inter{T}$ if $s \models \phi_{\dom}$ and the output string $w$ with structure \\
$M = (D, \preceq^M, (L^M_\gamma)_{\gamma\in\Gamma})$ is such
that 
\begin{itemize}
\item 
  $D = \set{ v^c \::\: v \in \dom(s), c \in C \text{ and } \phi^c(v)}$ is the
  set of positions where $\phi^c(v) \rmdef
  \lor_{\gamma \in \Gamma} \phi^c_\gamma(v)$; 
\item
  $\preceq^M {\subseteq} D {\times} D$ is the ordering relation between
  positions and it is such that for $v, u \in  dom(s)$ and
  $c, d \in C$ we have that $v^c \preceq^M u^d$ 
  if $w \models \phi^{c, d}_\preceq(v, u)$; and
\item 
  for all $v^c \in D$ we have that $L_\gamma^M(v^c)$ iff $\phi^c_\gamma(v)$.    
\end{itemize}
\end{definition}

Observe that the output is unique and therefore FO transducers implement
functions. A string $s \in \Sigma^{\omega}$ can be represented by its 
string-graph with $dom(s)=\{i \in \mathbb{N}\}$, 
$\preceq=\{(i,j) \mid i \leq j\}$ and $L_a(i)$ iff $s[i] = a$ for all $i$.
From now on, we denote the string-graph of $s$ as $s$ only. 
We say that an FO transducer is a \emph{string-to-string} transducer if its
domain is restricted to string graphs and the output is also a string graph. 
We say that a string-to-string transformation is FO-definable if there exists an FO
transducer implementing the transformation. 
We write $\fot{}$ for the set of FO-definable string-to-string $\omega$-transformations. 

\begin{example}
  Figure~\ref{fig:fo-example}(c) shows a transformation for an \fot{} that
  implements the transformation  $f_1: \Sigma^* \{a,b\}^{\omega} \to \Sigma^{\omega}$, where 
  $\Sigma=\{a,b, \#\}$, by replacing every maximal $\#$ free string $u$ into
  $\overline{u}u$.  
 Let $\istring$ be an FO formula that
  defines a string that contains a $\#$, and let $\reach(x)$ be an FO formula
  that is true at a position which has a $\#$ at a later position.
   To define the \fot{} formally, 
  we have $\phi_{dom}=\istring$,  $\phi^1_{\gamma}(x) = \phi^2_{\gamma}(x) = L_{\gamma}(x) \wedge \neg
  L_{\#}(x) \wedge \reach(x)$, since 
  we only keep the non $\#$ symbols that can ``reach'' a $\#$ in the input
  string in the first two copies.
 $\phi^3_{\gamma}(x) = L_{\#}(x) \vee (\neg L_{\#}(x) \wedge \neg
  \reach(x))$, since we only keep the $\#$'s, and  the infinite suffix from
  where there are no $\#$'s.  
  The full list  of formulae $\phi^{i,j}$ can be seen in Appendix \ref{app:fot-example}.
%
  \end{example}

\subsection{Two-way Transducers (\twst{})}
\label{sec:2wst}
A  \twst{} is a tuple $T=(Q, \Sigma, \Gamma, q_0, \delta, F)$ where $\Sigma,
\Gamma$ are respectively the input and output alphabet,  $q_0$ is the initial
state, $\delta$ is the transition function and $F \subseteq 2^Q$ is the
acceptance set.  
The transition function is given by $\delta : Q \times \Sigma \rightarrow Q
\times \Gamma^* \times \{1,0,-1\}$.
A configuration of the \twst{} is a pair $(q,i)$ where $q \in Q$ and $i \in \Nat$
is the current position of the input string.  
A run $r$ of a \twst{} on a string $s \in \Sigma^{\omega}$ is a sequence of
transitions  
$(q_0,i_0{=}0) \xrightarrow{a_1/c_1,dir}  (q_1, i_1) \xrightarrow{a_2/c_2,dir}  (q_2,
i_2) \cdots$ where $a_i \in \Sigma$ is the input letter read and $c_i \in
\Gamma^*$ is the output string produced during a transition and $i_j$s are the positions updated during a transition for all $j \in dom(s)$.
$dir$ is the direction, $\{1,0,-1\}$.  
W.l.o.g. we can consider the outputs to be over $\Gamma \cup \{\epsilon\}$. 
The output $out(r)$ of a run $r$  is simply a concatenation of the individual outputs,
i.e. $c_1c_2 \dots \in \Gamma^{\infty}$.
We say that the transducer reads the whole string $s$ when
$\sup \set{i_n \mid 0 \leq n <|r|} {=} \infty$.
The output of $s$, denoted $T(s)$ is defined as $out(r)$  only if
$\Omega(r) \in F$ and $r$ reads the whole string $s$.    
We write $\inter{T}$ for the transformation captured by $T$.

\noindent{\bf Transition Monoid.}
The transition monoid  of a \twst{} $T=(Q, \Sigma, \Gamma, q_0, \delta, \{F_1, \dots, F_n\})$ is the transition monoid
of its underlying automaton.
However, since the \twst{} can read their input in both directions,
the transition monoid definition must allow for reading the string starting
from left side and leaving at the left (left-left) and similar other
behaviors (left-right, right-left and right-right).
Following~\cite{CD15}, we define the behaviors 
$\beh_{xy}(w)$ of a string $w$ for $x, y \in \{\ell,r\}$. 
$\beh_{\ell r}(w)$ is a set consisting of pairs $(p,q)$ of states such that 
starting in state $p$ in the left side of $w$ the transducer leaves $w$ in right
side in state $q$.  
In the example in figure \ref{fig:fo-example}(a), we have $\beh_{\ell r}(ab\#) =
\set{(t,t), (p,t), (q,t)}$ and $\beh_{r r}(ab\#) = \set{(q,t), (t,t), (p,q)}$.
Two words $w_1, w_2$ are ``equivalent'' if their left-left, left-right, right-left and
right-right behaviors  are same. That is, $\beh_{xy}(w_1)=\beh_{xy}(w_2)$ for 
$x, y \in \{\ell,r\}$. 
The transition monoid of $T$ is the conjunction of the 4 behaviors, which also
keeps track, in addition,  the set of states witnessed in the run, as shown for
the deterministic Muller automata earlier. For each string $w \in \Sigma^*$, $x, y \in \{\ell,r\}$, and states $p,q$,  
the entries of the matrix $M_{u}^{xy}[p][q]$ are of the form  $\bot$, if there is no run
from $p$ to $q$ on word $u$, starting from the side $x$ of $u$ and leaving it in
side $y$, and is $(x_1,\ldots x_n)$ otherwise, where 
$x_i$ is defined exactly as in section \ref{ap-omega}.
%
 For equivalent words $u_1,u_2$, we have $M_{u_1}^{xy}[p][q]= M_{u_2}^{xy}[p][q]$ for all $x,y \in \{\ell, r\}$ and states $p,q$. 
Addition and multiplication of matrices are defined as in the case of Muller automata. See Appendix \ref{app:2way} for more details.
Note that behavioral composition is quite complex, due to left-right movements. 
In particular, it can be seen from the example that $\beh_{\ell r}(ab\#a\#)=\beh_{\ell r}(ab\#)\beh_{\ell \ell}(a\#)\beh_{r r}(ab\#)\beh_{\ell r}(a\#)$. 
Since we assume that the \twst{} $T$ is deterministic and completely reads the input string $\alpha \in \Sigma^{\omega}$, we can find 
a unique factorization $\alpha= [\alpha_0\dots \alpha_{p_1}][\alpha_{p_1+1} \dots \alpha_{p_2}]\dots$ such that
the run of $\Aa$ on each $\alpha$-block progresses from left to right, and 
each $\alpha$-block will be processed completely. 
That is, one can find a unique sequence of states $q_{p_1}, q_{p_2}, \dots$
such that the \twst{} starting in initial state $q_0$ at the left of the block $\alpha_0\dots \alpha_{p_1}$ leaves it at the right in state $q_{p_1}$, starts the next block $\alpha_{p_1+1} \dots \alpha_{p_2}$ from the left  in state
$q_{p_1}$ and leaves it at the right in state $q_{p_2}$ and so on.  

We consider the languages $L^{xy}_{pq}$ for $x,y \in \{\ell,r\}$, where $\ell,r$
respectively stand for left and right.  
$L^{\ell \ell}_{pq}$ stands for all strings $w$ such that, starting at state $p$
at the left of $w$,  
one leaves the left of $w$ in state $q$. Similarly,  $L^{r\ell}_{pq}$ stands for all strings $w$ such that starting at the right of $w$ 
in state $p$, one leaves the left of $w$ in state $q$. In figure \ref{fig:fo-example}(a), note that starting on the right of $ab\#$ in state $t$, 
we leave it on the right in state $t$, while we leave it on the left in state $p$. 
So $ab\# \in L^{r r}_{tt}, L^{r \ell}_{tp}$.   
Also, $ab \# \in L^{r r}_{pq}$.  

  A  \twst{}  is said to be \emph{aperiodic} iff for all strings $u \in
  \Sigma^*$, all states $p,q$ and  $x,y \in \{l,r\}$, there exists some $m \geq 1$ such that
$u^m \in L_{pq}^{xy}$ iff  $u^{m+1} \in L_{pq}^{xy}$.

\noindent{\bf Star-Free Lookaround.}
We wish to introduce aperiodic \twst{} that are capable of capturing
FO-definable transformations. 
However, as we discussed earlier (see page \pageref{two-way} in the paragraph on two-way transducers) \twst{} without look-ahead are strictly less
expressive than MSO transducers. 
To remedy this we study aperiodic \twst{}s enriched with star-free look-ahead
(star-free look-back can be assumed for free).

An aperiodic $\twst{}$ with star-free look-around ($\twst{}_{\la}$) is a tuple 
$(T, A, B)$ where $A$ is an aperiodic Muller look-ahead automaton and $B$   is an aperiodic look-behind
automaton, resp.,  and $T = (\Sigma, \Gamma, Q, q_0, \delta, F)$ is an aperiodic  
\twst{} as defined earlier except that the transition function 
$\delta: Q \times Q_B \times \Sigma \times Q_A \to Q \times \Gamma
\times  \set{-1, 0, +1}$  may consult look-ahead and look-behind automata to
make its decisions. 
Let $s \in \Sigma^\omega$ be an input string, and 
$L(A, p)$ be the set of infinite strings accepted by $A$ starting in state $p$.
Similarly, let $L(B,r)$ be the set of finite strings accepted by $B$ starting in state $r$. 
We assume that $\twst_{\la}$ are \emph{deterministic} i.e. for every string $s \in
\Sigma^\omega$ and every input position $i \leq |s|$,  there is
exactly one state $p \in Q_A$ and one state $r \in Q_B$ such that 
$s(i)s(i+1)\ldots \in L(A, p)$ and $s(0)s(1)\ldots s(i-1) \in L(B, r)$.
If the current configuration is $(q, i)$ and 
$\delta(q, r, s(i), p) = (q', z, d)$ is a transition, such that the string
$s(i)s(i+1)\ldots \in L(A, p) $ and $s(0)s(1)\ldots s(i-1) \in L(B, r)$, then 
$\twst_{\la}$ writes $z \in \Gamma$ on the output tape and updates its
configuration to $(q', i+d)$. 
 Figure~\ref{fig:fo-example}(a) shows a \twst{} with star-free look-ahead
 $\reach(x)$ capturing the transformation $f_1$ (details in App.~\ref{app:2way}). 

 \subsection{Streaming $\omega$-String Transducers (SST)}\label{subsec:sst}
Streaming string transducers(\sst{}s)  manipulate a finite set of string variables to
compute their output.
In this section we introduce aperiodic \sst{}s for infinite strings. 
Let $\varsst$ be a finite set of variables and $\Gamma$ be a finite alphabet. 
A substitution $\sigma$ is defined as a mapping ${\sigma : \varsst \to (\Gamma \cup
  \varsst)^*}$. 
A valuation is defined as a substitution $\sigma: \varsst \to \Gamma^*$.
Let $\Ss_{\varsst, \Gamma}$ be the set of all substitutions $[\varsst \to (\Gamma \cup
\varsst)^*]$.
Any substitution $\sigma$ can be extended to $\hat{\sigma}: (\Gamma \cup \varsst)^*
\to (\Gamma \cup \varsst)^*$ in a straightforward manner.
The composition $\sigma_1 \sigma_2$  of two substitutions $\sigma_1$ and $\sigma_2$
is defined as the standard function composition  $\hat{\sigma_1} \sigma_2$,
i.e. $\hat{\sigma_1}\sigma_2(x) = \hat{\sigma_1}(\sigma_2(x))$ for all $x \in \varsst$. 
We say that a string $u \in (\Gamma \cup \varsst)^*$ is \emph{\linear{}} (or
linear) if each 
$x \in \varsst$ occurs at most once in $u$. 
A substitution $\sigma$ is \linear{} if $\hat{\sigma}(u)$
is \linear{}, for all linear $u \in (\Gamma \cup \varsst)^*$ .

\begin{definition} A \emph{streaming $\omega$-string transducer} (\sst{}) is a tuple 
$T = (\Sigma, \Gamma, Q, q_0, \delta, \varsst, \rho, F)$
\begin{itemize}
\item
  $\Sigma$ and $\Gamma$ are finite sets of input and output alphabets;
\item 
  $Q$ is a finite set of states with initial state $q_0$;
\item 
  $\delta:Q \times \Sigma \to Q$ is a transition function and
  $\varsst$ is a finite set of variables;
\item 
  $\rho : (Q \times \Sigma) \to \Ss_{\varsst, \Gamma}$ is a variable update
  function to \linear{} substitutions;
\item 
  $F: 2^Q \rightharpoonup \varsst^*$ is an output function such that for all $P
  \in \dom(F)$ the string $F(P)$ is \linear{} of form $x_1 \dots x_n$,  and for
  $q, q' \in P$ and $a \in \Sigma$ s.t.  $q' = \delta(q,  a)$ we have 
  \begin{itemize}
  \item 
    $\rho(q, a)(x_i) = x_i$ for all $i < n$ and 
     $\rho(q, a)(x_n) = x_n u$ for some $u \in (\Gamma \cup \varsst)^*$. 
  \end{itemize}
\end{itemize} 
\end{definition}
The concept of a run of an \sst{} is defined in an analogous manner to that of
a Muller automaton.
The sequence $\seq{\sigma_{r, i}}_{0 \leq i \leq |r|}$ of substitutions induced
by a run $r = q_0 \xrightarrow{a_1} q_1 \xrightarrow{a_2} q_2 \ldots$ is defined
inductively as the following: $\sigma_{r, i} {=} \sigma_{r, i{-}1} \rho(q_{i-1}, a_{i})$
for $0 < i \leq |r|$ and $\sigma_{r, 0} =   x \in X \mapsto \varepsilon$. 
The output $T(r)$ of an infinite run $r$ of $T$ is defined only if $F(r)$
is defined and equals 
$T(r) \rmdef \lim_{i \to \infty} \seq{\sigma_{r, i}(F(r))}$, when the limit exists. If not, we pad $\bot^{\omega}$ to the obtained finite 
string to get   
$\lim_{i \to \infty} \seq{\sigma_{r, i}(F(r)) \bot^\omega}$ 
as the infinite output string. 
%

The assumptions on the output function $F$ in the definition of an \sst{}
ensure that this limit always exist whenever $F(r)$ is defined.  
Indeed, when a run $r$ reaches a point from where it visits only states in $P$,
these assumptions enforce the successive valuations of $F(P)$ to be an
increasing sequence of strings by the prefix relation. 
The padding by unique letter $\bot$ ensures that the output is always an
$\omega$-string. 
The output $T(s)$ of a string $s$ is then defined as the output $T(r)$ of its
unique run $r$.
The transformation $\inter{T}$ defined by an SST $T$ is the partial function
$\set{(s, T(s)) \::\: T(s) \text{ is defined}}$. See Appendix \ref{eg:sst} for an example. 
We remark that for every \sst{} $T  = (\Sigma, \Gamma, Q, q_0, \delta, \varsst, \rho,
F)$, its domain is always an $\omega$-regular language defined by the 
Muller automaton $(\Sigma, Q, q_0, \delta, \dom(F))$, which can be constructed
in linear time.  
However, the range of an \sst{} may not be $\omega$-regular.
For instance, the range of the \sst-definable transformation $a^n\#^\omega \mapsto
a^nb^n\#^\omega$ ($n\geq 0$) is not $\omega$-regular.

\noindent{\bf Aperiodic Streaming String Transducers.}
We define the notion of aperiodic \sst{}s by introducing an appropriate notion
of transition monoid for transducers. 
The transition monoid of an \sst{} $T$ is based on the effect of a string $s$ on
the states as well as on the variables. 
The effect on variables is characterized by, what we call, flow information that
is given as  a relation that describes the number of copies of the
content of a given variable that contribute to another variable after reading a
string $s$.  

Let $T = (\Sigma, \Gamma, Q, q_0, \delta, \varsst, \rho, F)$ be an \sst{}.
Let $s$ be a string in $\Sigma^*$ and suppose that there exists a run $r$ of $T$
on $s$.
Recall that this run induces a substitution $\sigma_r$ that maps each variable
$X \in \varsst$ to a string $u \in (\Gamma \cup \varsst)^*$.
For string variables $X,Y\in \varsst$, states $p,q\in Q$, and $n \in
\Nat$ we say that  $n$ copies of $Y$ flow to $X$ from $p$ to $q$
if there exists a run $r$ on $s \in \Sigma^*$ from $p$ to $q$, and  $Y$ occurs
$n$ times in $\sigma_r(X)$.
We extend the notion of transition monoid for the Muller automata as defined in
Section~\ref{sec:prelims} for the transition monoid for SSTs to equip it with
variables.
Formally, the transition monoid $\Mm_T {=} (M_T, \times, \textbf{1})$ of an \sst{}
$T = (\Sigma, \Gamma, Q, q_0, \delta, \varsst, \rho, \set{F_1, \ldots, F_n})$  is
such that $M_T$ is a set of $|Q\times \varsst|\times|Q \times \varsst|$ square matrices over
$(\Nat \times (\set{0, 1} \cup 2^Q)^n) \cup \set{\bot}$ along with matrix multiplication
$\times$ defined for matrices in $M_T$ and identity element $\textbf{1} \in M_T$
is the matrix whose diagonal entries are
$(1, (\emptyset, \emptyset, \ldots, \emptyset))$ and non-diagonal entries are all
$\bot$'s.
Formally $M_T {=} \set{M_s \::\: s \in \Sigma^*}$ is defined using matrices
$M_s$ for strings $s\in \Sigma^*$ s.t. $M_s[(p, X)][(q, Y)] {=} \bot$ if there is no
run from state $p$ to state $q$ over $s$ in $T$, otherwise
$M_s[(p, X)][(q, Y)] = (k, (x_1, \dots, x_n)) \in (\Nat \times (\set{0, 1} \cup 2^Q)^n)$ where  
$x_i$ is defined exactly as in section \ref{ap-omega}, and 
$k$ copies of variable $X$ flow to variable $Y$ from state $p$
  to state $q$ after reading $s$. 
We write $(p,X) \rightsquigarrow^u_{\alpha} (q,Y)$ for
$M_u[(p, X)][(q, Y)] = \alpha$. 

It is easy to see that $M_\epsilon = \textbf{1}$.
The operator $\times$ is simply matrix multiplication for  matrices in $M_T$,
however we need to define addition $\oplus$ and multiplication $\odot$ for
elements $(\set{0, 1} \cup 2^Q)^n \cup \set{\bot}$ of the matrices.
We have $\alpha_1 \odot \alpha_2 = \bot$ if $\alpha_1 = \bot$ or $\alpha_2 =
\bot$, and if $\alpha_1 = (k_1, (x_1, \ldots, x_n))$ and $\alpha_2 = (k_2, (y_1, \ldots,
y_n))$ then $\alpha_1 \odot \alpha_2 = (k_1\times k_2,  (z_1, \ldots, z_n))$
s.t. for all $1 \leq i \leq n$ $z_i$ are defined as in~(\ref{eq1:product}) from
Section~\ref{ap-omega}.
Note that due to determinism of the \sst{}s we have that for every
matrix $M_s$ and every state $p$ there is at most one state $q$ such that
$M_s[p][q] \not = \bot$ and hence the only  addition rules we need to
introduce is $\alpha \oplus \bot = \bot \oplus \alpha = \alpha$,  $0 \oplus 0=0$, $1 \oplus 1=1$ and 
$\kappa \oplus \kappa=\kappa$ for $\kappa \subseteq Q$.  
It is easy to see that $(M_T, \times, \textbf{1})$ is a monoid and we give a
proof in Appendix~\ref{app:sst-basics}. We say that the transition monoid $M_T$ of an \sst{} $T$ is  $1$-bounded if in
all entries $(j, (x_1, \dots,x_n))$  of the matrices of $M_T$, $j \leq 1$.
 A streaming string transducer is \emph{aperiodic} if its transition monoid is
aperiodic. 

\section{FOTs $\equiv$ Aperiodic 2WST$_{sf}$ }
\label{sec:fot-2wst}
\begin{theorem}
  A transformation $f: \Sigma^\omega \to \Gamma^\omega$ is FOT-definable if and
  only if it is definable using  an aperiodic two way transducer with star-free look-around.
\end{theorem}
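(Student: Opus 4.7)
The plan is to establish both implications by adapting the Carton--Dartois construction for finite strings~\cite{CD15} to the $\omega$-string setting with Muller acceptance. The central tools are the refined Muller transition monoid for $\twst{}$ introduced in Section~\ref{sec:2wst}, together with Theorem~\ref{thm:fo-aperiodic} (for finite strings) and its $\omega$-string counterpart (Appendix~\ref{app:aper-equiv}, combined with the classical Ladner--Thomas FO/aperiodic-DMA correspondence for $\omega$-languages).

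From $\fot$ to aperiodic $\twst_{\la}$: given $T = (\Sigma, \Gamma, \phi_{\dom}, C, \phi_{\nodes}, \phi_{\preceq})$, I would build a two-way transducer whose configuration is a pair $(i, c)$ of an input position $i$ and a copy $c \in C$. At such a configuration it emits the unique $\gamma$ satisfying $\phi^c_\gamma(i)$ and then searches, via two-way moves, for the immediate $\phi_{\preceq}$-successor $(i', c')$ of $(i, c)$. The evaluations of $\phi^c_\gamma$ and of the successor predicate at the current head position are delegated to star-free look-ahead and look-behind automata obtained from the FO formulae via the FO/aperiodic characterisation. The Muller condition of the constructed 2WST is set up so that accepting runs correspond precisely to inputs in $\phi_{\dom}$. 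Aperiodicity of the resulting transition monoid reduces to aperiodicity of the look-around automata, which are FO-built.

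Conversely, from aperiodic $\twst_{\la}$ to $\fot$: the output FOT uses one copy per state of the 2WST. A position $i^q$ represents ``the run visits input position $i$ in state $q$'', its label is the letter emitted by the corresponding transition, and the order predicate $\phi^{q, q'}_\preceq(i, j)$ asserts that the configuration $(j, q')$ is the immediate successor of $(i, q)$ along the run. Each such predicate is expressible as a finite Boolean combination of membership statements in the behavioral languages $L^{xy}_{pq}$; these are aperiodic by hypothesis on $T$ and hence FO-definable, and their combination with the star-free look-around predicates gives the desired FO formulae. The domain formula $\phi_{\dom}$ conjoins ``the unique run reads the whole input'' with an FO encoding of ``the set of states visited infinitely often lies in $F$''.

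The main obstacle is the interaction between aperiodicity and the Muller acceptance condition on $\omega$-strings. Over finite words, aperiodicity of the transition monoid directly yields FO-definability of the transition languages; here we additionally need the infinite-visitation component of acceptance to be FO-expressible. This is precisely the motivation for enriching the transition monoid of Section~\ref{sec:2wst} with the set of states witnessed in each run segment together with all four left/right behaviours; aperiodicity of this enriched monoid, combined with Appendix~\ref{app:aper-equiv}, is what makes the Muller condition FO-definable. A further subtlety is that the unique left-to-right block factorisation described in Section~\ref{sec:2wst}, which underpins the simulation of an $\omega$-run, must itself be FO-definable, and this again reduces to aperiodicity of the behavioral languages $L^{xy}_{pq}$.
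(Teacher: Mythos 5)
Your proposal follows essentially the same route as the paper: copies of the \fot{} correspond to states of the \twst{}, the order/edge predicates are obtained from FO-definability of the behavioral languages $L^{xy}_{pq}$ (which is exactly the content of Lemma~\ref{fo-behav}, via aperiodicity of the enriched Muller transition monoid), and the converse direction evaluates the \fot{} formulae with star-free look-arounds while navigating to the output successor by explicit two-way moves (the paper routes this through an intermediate $\twst{}_{fo}$ model with FO instructions, but that is only a bookkeeping device). The one point you leave thin is how a two-free-variable jump formula $\phi^{c,d}_\preceq(x,y)$ is evaluated by look-arounds that only see the current head position; the paper resolves this by decomposing its language as a finite union $R_\ell\, a\, R_m\, b\, R_r$ of aperiodic pieces and simulating the reverse automaton of $R_m$ cell by cell, which is the mechanism your ``search via two-way moves'' implicitly requires.
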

\begin{proof}[Proof (Sketch)]
  This proof is in two parts.
  \begin{itemize}
  \item {\bf Aperiodic \twst$_{sf}$ $\subseteq$  FOT.}
We first show that given an aperiodic \twst$_{sf}$ $\Aa$, we can effectively construct an \fot{} that captures the same transduction as $\Aa$ over infinite words. Let $\Aa=(Q, \Sigma, \Gamma, q_0, \delta, F)$ be an aperiodic  \twst$_{sf}$, where each transition outputs at most one letter. 
Note that this is without loss of generality, since we can output any longer string by having some extra states. 
 Given $\Aa$, we construct    
the \fot{} $T=(\Sigma, \Gamma, \phi_{dom}, C, \phi_{pos}, \phi_{\prec})$ that realizes the transduction of $\Aa$. 
The formula $\phi_{dom}$ specifies that the input graph is linear.
The formula $\phi_{dom}$ is the conjunction of formulae $\isstring$ and $\varphi$ 
where $\varphi$ is a FO formula  that captures the set of accepted strings of
$\Aa$ (obtained by proving $L(\Aa)$ is aperiodic, lemma\ref{cor-lem}, Appendix \ref{app:fot-2wst}) and $\isstring$ is a FO formula that specifies that the input graph is a
string (see Appendix \ref{app:fo}). 
The copies of the \fot{} are the states of $\Aa$. For any two positions $x,y$ of
the input string, and any two copies $q,q'$, we need to define
$\phi_{\prec}^{q,q'}$.
This is simply describing the behaviour of $\Aa$ on the substring from position
$x$ to position $y$ of the input string $u$, assuming at position $x$, we are in
state $q$, and reach state $q'$ at position $y$.
The following lemma (proof in Appendix~\ref{app:fo-behav}) gives an FO formula
$\psi_{q,q'}(x,y)$ describing this.

\begin{lemma}
  \label{fo-behav}
  Let $\Aa$ be an aperiodic \twst$_{sf}$ with the Muller acceptance condition. Then for all pairs of states $q,q'$, there exists an FO formula 
  $\psi_{q,q'}(x,y)$ such that for all strings $s \in \Sigma^{\omega}$ and a pair of positions $x,y$ of $s$, 
  $s \models \psi_{q,q'}(x,y)$ iff there is a run from state $q$ starting at position $x$ of $s$ that
  reaches position $y$ of $s$ in state $q'$.      
\end{lemma}

An edge exists between position $x$ of copy
$q$ and position $y$ of copy $q'$ iff  the input string $u \models\psi_{q,q'}(x,y)$.   
The formulae  $\phi_\gamma^q(x)$ for each copy $q$ specifies the output at position $x$ in state $q$. 
We have to capture that position $x$ is reached from the initial position in state $q$, and also the possible outputs
produced while in state $q$ at $x$. The transition function $\delta$ gives us these symbols. 
The formula  $\bigvee_{\delta(q,a)=(q',dir,\gamma)}L_a(x)$ captures the possible output symbols. 
To state that we reach $q$ at position $x$, we say $\exists y[\ifirst(y) \wedge \psi_{q_0,q}(y,x)]$. The conjunction of these two formulae 
gives $\phi_\gamma^q(x)$. 
This completes the  \fot{} $T$. 

\item
  {\bf FOT $\subseteq$ Aperiodic \twst$_{sf}${}.}
  Given an \fot{}, we show that we can construct an aperiodic \twst{} with star-free look-around capturing the same transduction over $\omega$-words. For this, we first show that
given an \fot{}, we can construct \twst{} enriched with FO instructions that captures the same transduction as the \fot{}. 
The idea of the proof follows \cite{EH01}, where one first defines an intermediate model of aperiodic \twst{} with FO instructions 
instead of look-around. Then we show $\fot{} \subseteq \twst{}_{fo}\subseteq \twst{}_{\la}$, to complete the proof. 
  \end{itemize}
The omitted details can be found in Appendix \ref{app:fo-wst}. 
\end{proof}

\section{Aperiodic SST $\subset$ FOT}
\label{sec:sst-fot}
\label{sec:2wst-sst}
\begin{lemma}
\label{lem:sst-fot}
A transformation is FO-definable if it is aperiodic-SST definable.
\end{lemma}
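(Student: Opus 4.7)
The plan is to construct, from a given aperiodic SST $T = (\Sigma, \Gamma, Q, q_0, \delta, \varsst, \rho, F)$, an equivalent FO transducer $T' = (\Sigma, \Gamma, \phi_{\dom}, C, \phi_{\nodes}, \phi_{\preceq})$. Without loss of generality I assume every variable update $\rho(q, a)(X)$ appends at most one output letter of $\Gamma$ (auxiliary states and variables can be introduced to normalize the SST while preserving aperiodicity of the transition monoid).

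First I would define $\phi_{\dom}$. The domain of $T$ is the $\omega$-regular language accepted by the underlying Muller automaton $(\Sigma, Q, q_0, \delta, \dom(F))$. Its transition monoid is obtained from $M_T$ by projecting away the variable coordinates, so it inherits aperiodicity from $T$. By the correspondence between aperiodic $\omega$-languages and FO established in Section~\ref{ap-omega} combined with Theorem~\ref{thm:fo-aperiodic}, $\dom(T)$ is FO-definable.

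Second, I would take the copies $C = \varsst$. For each copy $X$ and letter $\gamma \in \Gamma$, the formula $\phi^X_\gamma(x)$ expresses: after reading the prefix of $s$ strictly before $x$, the SST is in a state $q$ such that $\rho(q, s[x])(X)$ places the letter $\gamma$ into $X$, and this specific occurrence of $\gamma$ \emph{survives} until the output is formed. Survival means that, following the deterministic copyless flow, the letter is never discarded and ultimately resides inside one of the variables listed in $F(P)$, where $P = \Omega(r)$ is the Muller set visited infinitely often on input $s$. The survival predicate is the reachability relation of an auxiliary Muller automaton whose states are $Q \times \varsst$ and whose transitions trace the migration of a single letter under variable updates. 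The transitions of this tracking automaton are already recorded in the flow component of $M_T$, so aperiodicity of $M_T$ descends to it, and its reachability predicates are therefore FO-expressible by the results of Section~\ref{ap-omega}.

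Third, the order $\phi^{X_1, X_2}_\preceq(x, y)$ between two surviving output letters is defined by following both letters forward until the Muller set $P$ is realized. At that point each surviving letter sits in some variable of $F(P) = x_{i_1} \ldots x_{i_n}$, and the relative order is determined by (a) the index of that variable in the concatenation $F(P)$, and (b) for letters ending in the same variable, the within-variable order, which is preserved by copyless substitution and can therefore be read off from the input-positional order of the two tracks once they have merged into a common variable. Each ingredient is FO-definable using the tracking automaton above. The main obstacle is precisely (b): proving that the first merger time of two letter-tracks and the within-variable ordering at that time are FO-definable, which reduces to expressing reachability in a product of two copies of the tracking automaton. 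This product inherits aperiodicity because aperiodicity is preserved under the relevant submonoid and direct product constructions, letting the entire FOT construction remain in FO.
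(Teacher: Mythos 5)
Your proposal follows essentially the same route as the paper: the domain is handled by projecting the aperiodic transition monoid onto states, letter survival is the paper's ``useful'' predicate (Proposition~\ref{prop:contribution}), and your letter-tracking automaton on $Q\times\varsst$ whose aperiodicity is inherited from $M_T$ is exactly the flow automaton behind Proposition~\ref{prop:foflow}, with your ``first merger'' analysis corresponding to the three-case path characterization and the $\Psi_3^{X,Y}$ formula of Lemma~\ref{lem:fopath}. Two small corrections: the within-variable order of two letters at their first merger is determined by the left-to-right order of their host variables in the right-hand side of the update at that step (the paper's $Cat_{X',Y'}$ predicate), \emph{not} by the input-positional order of the two tracks, which can disagree with it; and rather than your one-letter-per-update normalization the paper keeps the SST as is, using two copies $X^{in},X^{out}$ per variable and labelling positions with the finite set $O_T$ of constant strings occurring in updates.
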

We show that every aperiodic $1$-bounded \sst{} definable transformation is  definable using FO-transducers.
A crucial component in the proof of this lemma is to show that the
variable flow in the aperiodic 1-bounded SST is FO-definable (see Appendix \ref{app:varflow}).
To construct the \fot{}, we make use of the 
 output structure for \sst{}.   
It is an intermediate representation of the output, and the transformation of any input string into its   
\sst-output structure will be shown to be FO-definable.
For any \sst{} $T$ and string $s\in\dom(T)$,  the \sst-output structure of $s$ is a relational structure 
$G_T(s)$ obtained by taking, for each variable $X\in\varsst$, two copies of $\dom(s)$, respectively denoted by $X^{in}$ and $X^{out}$.  
For notational convenience we assume that these structures are labeled on the edges.
A pair $(X,i)$ is \emph{useful} if the content of
variable $X$ before reading $s[i]$ will be part of the output after
reading the whole string $s$. 
This structure satisfies the following \emph{invariants}: for all $i\in
dom(s)$, $(1)$ the nodes $(X^{in}, i)$ and $(X^{out}, i)$ exist only if
$(X,i)$ is useful, and $(2)$ there is a directed path from $(X^{in}, i)$ to $(X^{out}, i)$ whose
labels are same as variable $X$ computed by $T$ after reading $s[i]$. 

\begin{center}
\begin{tikzpicture}[->,>=stealth',shorten >=1pt,auto,node distance=1.5cm,
                    thick,scale=0.7, every node/.style={scale=0.6}]

\tikzstyle{graphnode}=[circle,fill=gold,thick,inner
sep=0pt,minimum size=2mm]

\tikzstyle{graphnodeblack}=[circle,fill=gold,thick,inner
sep=0pt,minimum size=2mm]

\node [graphnode] (x0in) at (0,5) {} ;
\node [graphnode] (x1in) at (3,5) {} ;
\node [graphnodeblack] (x2in) at (6,5) {} ;
\node [graphnodeblack] (x3in) at (9,5) {} ;
\node [graphnodeblack] (x4in) at (12,5) {} ;
\node [graphnodeblack] (x5in) at (15,5) {} ;
\node [graphnodeblack] (x6in) at (18,5) {} ;

\node [graphnode] (x0out) at (0,4) {} ;
\node [graphnode] (x1out) at (3,4) {} ;
\node [graphnodeblack] (x2out) at (6,4) {} ;
\node [graphnodeblack] (x3out) at (9,4) {} ;
\node [graphnodeblack] (x4out) at (12,4) {} ;
\node [graphnodeblack] (x5out) at (15,4) {} ;
\node [graphnodeblack] (x6out) at (18,4) {} ;

\node [graphnode] (y0in) at (0,3) {} ;
\node [graphnodeblack] (y1in) at (3,3) {} ;
\node [graphnodeblack] (y2in) at (6,3) {} ;
\node [graphnodeblack] (y3in) at (9,3) {} ;
\node [graphnodeblack] (y4in) at (12,3) {} ;
\node [graphnodeblack] (y5in) at (15,3) {} ;
\node [graphnodeblack] (y6in) at (18,3) {} ;

\node [graphnode] (y0out) at (0,2) {} ;
\node [graphnodeblack] (y1out) at (3,2) {} ;
\node [graphnodeblack] (y2out) at (6,2) {} ;
\node [graphnodeblack] (y3out) at (9,2) {} ;
\node [graphnodeblack] (y4out) at (12,2) {} ;
\node [graphnodeblack] (y5out) at (15,2) {} ;
\node [graphnodeblack] (y6out) at (18,2) {} ;

%
%
%

\node (xin) at (-1,5) {$X^{in}$};
\node (xout) at (-1,4) {$X^{out}$};
\node (yin) at (-1,3) {$Y^{in}$};
\node (yout) at (-1,2) {$Y^{out}$};

  \draw[->,dashed]   (x0in) -- node[left] {\textcolor{gray}{$\epsilon$}} (x0out) ;
  \draw[->,dashed]   (y0in) -- node[left] {\textcolor{gray}{$\epsilon$}} (y0out);


  \draw[->,dashed] (x1in) -- node[above] {\textcolor{gray}{$a$}} (x0in) ;
  \draw[->,dashed] (x0out) -- node[above] {\textcolor{gray}{$b$}} (x1out) ;



  \draw[->] (y1in) -- node[left] {$aaa$} (y1out) ;


  \draw[->] (y2in) -- node[above] {$\epsilon$} (y1in) ;
  \draw[->] (y1out) -- node[above] {$\epsilon$} (y2out) ;



  \draw[->] (x2in) -- node[left] {$c$} (x2out) ;
           

  \draw[->] (y3in) -- node[above] {$e$} (y2in) ;
  \draw[->] (y2out) -- node[above] {$f$} (y3out) ;

  \draw[->] (x3in) -- node[above] {$\epsilon$} (x2in) ;
  \draw[->] (x2out) -- node[above] {$\epsilon$} (x3out) ;



  

  \draw[->] (x4in) -- node[above] {$\epsilon$} (x3in) ;
  \draw[->] (x3out) -- node[above] {$\epsilon$} (x4out) ;

  \draw[->] (y4in) -- node[above] {$\epsilon$} (y3in) ;
  \draw[->] (y3out) -- node[above] {$\epsilon$} (y4out) ;
 
%


\draw[->] (x5in) -- node[above] {$\epsilon$} (x4in) ;
  \draw[->] (x4out) -- node[above] {$\epsilon$} (x5out) ;

\draw[->] (y5in) -- node[above] {$\epsilon$} (y4in) ;
  \draw[->] (y4out) -- node[above] {$bc$} (y5out) ;

\draw[->] (x6in) -- node[above] {$\epsilon$} (x5in) ;
  \draw[->] (x5out) -- node[left] {$\epsilon$} (y5in) ;

\draw[->,dashed] (y6in) -- node[right] {$bc$} (y6out) ;
 \draw[->] (y5out) -- node[left] {$\epsilon$} (x6out) ;

\node (run) at (-1,6) {$run$};

\node (q) at (0,6) {$q_0$} ;
\node (q1) at (3,6) {$q_1$} ;
\node (r) at (6,6) {$q_2$} ;
\node (q2) at (9,6) {$q_3$} ;
\node (r1) at (12,6) {$q_4$} ;
\node (r2) at (15,6) {$q_5$} ;
\node (r3) at (18,6) {$q_6$} ;


 \draw[->] (q) -- node[above] {$\begin{array}{lll} X & := & aXb
       \\ Y & := & aaa  \end{array}$}(q1) ;

 \draw[->] (q1) -- node[above] {$\begin{array}{lll} X & := & c
       \\ Y & := & Y  \end{array}$}(r) ;

 \draw[->] (r) -- node[above] {$\begin{array}{lll} X & := & X
       \\ Y & := & eYf  \end{array}$}(q2) ;

 \draw[->] (q2) -- node[above] {$\begin{array}{lll} X & := & X
       \\ Y & := & Y  \end{array}$}(r1) ;

 \draw[->] (r1) -- node[above] {$\begin{array}{lll} X & := & X
       \\ Y & := & Ybc  \end{array}$}(r2) ;

 \draw[->] (r2) -- node[above] {$\begin{array}{lll} X & := & XY
       \\ Y & := & bc  \end{array}$}(r3) ;



\fill[gold,fill opacity=0.2,rounded corners] (-0.3,1.5) rectangle (18.3,5.3);







\end{tikzpicture}
\end{center}


\noindent We define \sst-output structures formally in Appendix \ref{app:sst-output},
however, the illustration above shows an \sst-output structure. 
We show only the variable updates. 
Dashed arrows represent variable updates for useless variables, and therefore does not belong the \sst-output
structure. The path from $(X^{in},6)$ to $(X^{out},6)$ gives the contents of $X$ ($ceaaafbc$) 
after 6 steps. We write $O_T$ for the set of strings appearing in right-hand side of variable updates.

We next show that the transformation that maps an $\omega$-string $s$ into its output structure 
is FO-definable, whenever the SST is 1-bounded and aperiodic. Using the fact that variable flow is FO-definable,
we show that for any two variables $X,Y$, we can capture in FO, a path from $(X^d, i)$ to $(Y^e, j)$ 
  for $d, e\in \{in, out\}$ in $G_T(s)$ and all positions $i, j$.

 \begin{lemma}\label{lem:fopath}
    Let $T$ be an \textbf{aperiodic,1-bounded} \sst{} $T$. 
    For all $X,Y\in \varsst$ and all $d,d'\in
    \{in,out\}$, there exists an FO[$\Sigma$]-formula 
    $\text{path}_{X,Y,d,d'}(x,y)$ with two free variables
    such that for all strings $s\in\dom(T)$ and all positions
    $i,j\in dom(s)$, $s\models \text{path}_{X,Y,d,d'}(i,j)$ iff 
    there exists a path from $(X^d,i)$ to $(Y^{d'},j)$ in
    $G_T(s)$.
\end{lemma}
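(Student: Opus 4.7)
The plan rests on two structural inputs. First, because $T$ is $1$-bounded, every node of $G_T(s)$ has in-degree and out-degree at most one: the out-edge from $(X^{in},i)$ is dictated by the first variable (if any) of $\rho(q_{i-1},s[i])(X)$, while the out-edge from $(X^{out},i)$ is dictated by the at-most-one update at step $i{+}1$ in which $X$ appears. Hence $G_T(s)$ is a disjoint union of (possibly infinite) simple directed paths, and reachability from $(X^d,i)$ to $(Y^{d'},j)$ reduces to their co-membership on one such path, with $(X^d,i)$ preceding $(Y^{d'},j)$ in the path order. Second, by the preceding appendix lemma, the variable-flow relation is FO-definable: I would use a formula $\mathrm{flow}_{X,Y}(x,y)$ asserting that the content of $X$ at step $x$ is contained in the content of $Y$ at step $y$ (a $0/1$-valued relation, by $1$-boundedness).

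I would then characterize path existence in terms of flow. Two pairs $(X,x)$ and $(Y,y)$ share a trajectory (a common path of $G_T(s)$) iff they admit a common descendant under flow, i.e.\ there exist $Z\in\varsst$ and $k\ge\max(x,y)$ with $\mathrm{flow}_{X,Z}(x,k)\wedge\mathrm{flow}_{Y,Z}(y,k)$; this is already an FO condition. On a shared trajectory, the order of $(X,x)$ and $(Y,y)$ reduces to the order of their contributed blocks inside $Z$'s content at step $k$: whether $X$'s block sits strictly left of $Y$'s, strictly right of it, or nested inside (or around) it. The tag pair $(d,d')$ then selects which endpoint of each block is queried (``$in$'' asks about the start of a block, ``$out$'' the end), yielding the required boolean combination: for instance, $\mathrm{path}_{X,Y,in,in}(x,y)$ holds iff $(X,x)$ and $(Y,y)$ share a trajectory AND either $X$'s block strictly precedes $Y$'s block there OR $Y$'s block is nested inside $X$'s (equivalently $\mathrm{flow}_{Y,X}(y,x)$). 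The remaining three tag combinations are handled by analogous ``block-endpoint'' conditions, and the case $X=Y\wedge x=y$ is trivial.

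The main obstacle is defining the ``block order within the common descendant'' relation in FO. A naive attempt would seem to need transitive closure over arbitrarily many intermediate updates to track how a block is repositioned across steps. Aperiodicity is precisely what rescues this: once $(Z,k)$ is fixed and the two flow witnesses are fixed, the relative position of $X$'s block and $Y$'s block inside $Z$'s content at step $k$ is determined by the single local update where the two flow paths first meet in a common variable --- and that meeting point is itself FO-locatable because (a)~there are finitely many update patterns induced by $\rho$, (b)~the state sequence of the underlying Muller automaton is FO-definable by aperiodicity, and (c)~$\mathrm{flow}$ already packages the long-range propagation before and after the meeting point. Carefully verifying this reduction --- in particular dealing with the side-by-side versus nested dichotomy and with oscillating paths that revisit the same position in several $(X^{in}/X^{out})$ incarnations via pure-string updates and intra-position out-to-in hops --- is the principal technical content of the proof, and each such oscillation will be shown to correspond to an already-recorded flow edge rather than a fresh iteration, keeping the final formula inside FO.
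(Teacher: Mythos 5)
Your proposal is correct and follows essentially the same route as the paper's proof: both reduce path existence in $G_T(s)$ to the FO-definable variable-flow relation (Proposition~\ref{prop:foflow}), taking a disjunction over the two ``nested'' cases $\phi_{X\flows Y}$ / $\phi_{Y\flows X}$ (selected by the tags $d,d'$) and an existentially quantified merge position at which the two carrier variables are concatenated in a fixed order by a single transition --- your ``first meeting point of the two flow paths'' is exactly the paper's $\Psi_3^{X,Y}$ built from $Cat_{X',Y'}$, $\phi_{X\flows X'}$ and $\phi_q$. The only point of divergence is the $out$-to-$in$ case, where your block-endpoint logic yields the disjunct $\Psi_3^{X,Y}$ while the paper declares $\text{path}_{X,Y,out,in}(x,y)\equiv\text{false}$; your version is the one consistent with the paper's own structural characterization of paths in $G_T(s)$, whose concatenation condition carries no restriction on $d,d'$ (there is indeed a path $(X^{out},i)\leadsto (X'^{out},k)\rightarrow(Y'^{in},k)\leadsto (Y^{in},j)$ when that condition holds).
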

The proof of Lemma \ref{lem:fopath} is in Appendix \ref{app:fopath}. 
As seen in Appendix (in Proposition \ref{prop:fostates}) one can write a formula 
$\phi_q(x)$ (to capture the state $q$ reached) and formula $\psi^{Rec}_P$ (to capture the recurrence of a Muller set $P$) 
in an accepting run after reading a prefix. 
For each variable $X \in \mathcal{X}$, we have two copies $X^{in}$ and $X^{out}$ that serve as the copy set 
of the \fot{}. As given by the \sst{} output-structure, for each step $i$, state $q$ and symbol $a$, 
a copy is connected to copies in the previous step based on the updates $\rho(q,a)$. 
The full details of the \fot{} construction handling the Muller acceptance condition of the \sst{} are in Appendix \ref{app-fot-cons}.

 \section{Aperiodic $\twst_{\la}$ $\subset$  Aperiodic SST}
\label{sec:2wst-sst}
\label{sec:2wst-sst}
We show that given an aperiodic 2WST $\Aa=(\Sigma, \Gamma, Q,
q_0, \delta, F)$ with star-free look around over $\omega$-words, we can
construct an aperiodic \sst{} $\Tt$ that realizes the same transformation.
\begin{lemma}
  For every transformation  definable with an  aperiodic \twst{} with star-free
  look around, there exists an equivalent aperiodic 1-bounded \sst{}. 
\end{lemma}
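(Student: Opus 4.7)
The plan is to adapt the classical Engelfriet--Hoogeboom style simulation of two-way transducers by streaming transducers, while taking care to (a) absorb the star-free look-around into the state space without breaking aperiodicity, and (b) set up the Muller output function of the \sst{} so that the infinite output induced by the \twst{} run can be read off in a one-way left-to-right fashion.

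First I would eliminate the star-free look-around by a product construction. Since the look-ahead automaton $A$ and the look-behind automaton $B$ are aperiodic (their transition monoids are aperiodic by definition of $\twst_{\la}$), taking a product of $\Aa$ with $B$ and with a suitable determinisation/annotation of $A$ yields an aperiodic two-way transducer $\Aa'$ without look-around, operating over a refined alphabet where each input letter is tagged by the look-behind state and the look-ahead state. Aperiodicity is preserved because the transition monoid of the product embeds into the product of the constituent transition monoids, and the product of aperiodic finite monoids is aperiodic. From this point onwards I may assume $\Aa = (\Sigma, \Gamma, Q, q_0, \delta, F)$ is an aperiodic \twst{} without look-around.

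Next I would give the SST construction. The \sst{} $\Tt$ has one string variable $X_{p,q}$ for each pair $(p,q) \in Q \times Q$ with $p$ a ``left-entry'' state and $q$ a ``right-exit'' state, with the intended invariant that after reading the prefix $u$, the variable $X_{p,q}$ stores the output that $\Aa$ produces during the left-to-right sweep of $u$ which enters the left of $u$ in state $p$ and leaves the right of $u$ in state $q$, and similarly auxiliary variables $X_{p,q}^{\ell\ell}, X_{p,q}^{r\ell}, X_{p,q}^{rr}$ for the other three behaviors. The state of $\Tt$ after reading $u$ records the four behavior relations $\beh_{xy}(u)$ for $x,y\in\{\ell,r\}$; this state set is finite by the same argument used for the transition monoid of $\Aa$. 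The substitution $\rho(q_u, a)$ updates each $X_{p,q}^{xy}$ by the well-known compositional behavior equations, e.g. $\beh_{\ell r}(ua) = \beh_{\ell r}(u) \cdot \beh_{\ell r}(a) \;\cup\; \beh_{\ell r}(u) \cdot \beh_{r \ell}(a) \cdot \beh_{\ell r}(u) \cdot \ldots$, turned into concrete string concatenations that compose the outputs of the already-stored variables. Each such right-hand side uses every source variable at most once in every single term because a deterministic two-way run visits each ``bounce'' structure uniquely; this gives copyless updates and the 1-bounded property of the transition monoid.

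The main obstacle is twofold: proving aperiodicity of the transition monoid $\Mm_\Tt$ and choosing $F: 2^{Q_\Tt}\rightharpoonup \varsst^*$ so that the Muller acceptance of $\Aa$ is faithfully mirrored. For aperiodicity, observe that the state component of $\Mm_\Tt$ is (up to relabeling) exactly the behavior monoid of $\Aa$, which is aperiodic by hypothesis; and the variable flow component records, for each $X^{xy}_{p,q}\mapsto X^{x'y'}_{p',q'}$, how many times the content of one behavior piece is substituted into another under $u$, which is again determined by $\beh_{xy}(u)$ and the Muller annotation of its run. A careful inspection of the recurrences shows that there exists $m$ such that $\beh_{xy}(u^m)=\beh_{xy}(u^{m+1})$ forces the corresponding flow matrices to satisfy $M_{u^m} = M_{u^{m+1}}$, so $\Mm_\Tt$ is aperiodic. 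For the Muller condition, I would take a $\Tt$-state $P \subseteq Q_\Tt$ in $\dom(F)$ precisely when the corresponding behavior types, read out along the unique factorization $\alpha = \alpha_0\alpha_1\cdots$ of the input that $\Aa$ processes block-by-block left-to-right, produces an accepting Muller set of $\Aa$; the output word $F(P)$ is the concatenation $X^{\ell r}_{q_0,q_1} X^{\ell r}_{q_1,q_2}\cdots$ of left-to-right behavior variables along that factorization, which matches the append-only format required of the \sst{}'s output function and guarantees that the limit exists and equals $\Aa(\alpha)$. Verifying this matching, and in particular that the $\omega$-concatenation of blockwise outputs of $\Aa$ coincides with $\lim_i \sigma_{r,i}(F(P))$, is where the detailed work lies; the full bookkeeping is deferred to the appendix.
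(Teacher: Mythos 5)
Your high-level direction (simulate the two-way device by tracking crossing information in string variables) is in the same spirit as the paper's, but your very first step contains a gap that the paper is specifically engineered to avoid. You propose to eliminate the star-free look-around at the level of the two-way transducer by passing to a refined alphabet annotated with look-behind and look-ahead states, and then to assume a look-around-free \twst{} over $\Sigma$. Over $\omega$-strings this does not work: as the paper notes when introducing \twst{} (and as the example $f_1$ illustrates), look-ahead strictly increases the power of two-way transducers on infinite inputs, because the machine cannot scan to the end of the word and return, and the annotated input is not something any finite-state device can produce on the fly from the raw input. The paper therefore carries the look-around all the way into an intermediate model $\mathsf{SST}_{\la}$ and only eliminates it there, via a subset construction over look-ahead states (tracking the set $P\subseteq P_A$ of pending guesses), relying on mutual exclusiveness of the look-aheads, a total order selecting a unique minimal predecessor for variable updates, and the Muller condition to resolve the guesses in the limit. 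Without some such mechanism your reduction is unjustified and the rest of the argument has nothing to stand on.

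Two further points would also need repair. First, your copylessness claim for the behavior variables $X^{xy}_{p,q}$ is not established: two distinct left-to-right traversals of $ua$, entering in different states $p_1\neq p_2$, can both route through the same right-to-right sub-behavior of $u$, so one source variable would occur in the right-hand sides of two different target variables, which violates copylessness (and $1$-boundedness). The paper avoids this by storing only the single realized run, using states $(q,f)$ with a forward-jump function $f$ and variables $X_q$ indexed by the states actually visited, and even then it must explicitly patch two residual violations of $1$-boundedness. Second, $F(P)$ must be a finite copyless word over $\varsst$ satisfying the append-only condition, so the proposed output $X^{\ell r}_{q_0,q_1}X^{\ell r}_{q_1,q_2}\cdots$ is not a legal output function; the paper instead accumulates the output in a single designated variable $O$ updated as $O:=O\rho(X_q)$ and sets $F(P)=O$ for every Muller set $P$.
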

\begin{proof} 
While the idea of the construction is  similar to ~\cite{FiliotTrivediLics12}, 
the main challenge is to eliminate the star-free look-around for infinite strings from the \sst{}, preserving aperiodicity.  
As an intermediate model we introduce streaming $\omega$-string transducers with
star-free look-around  $\mathsf{SST}_{\la}$  that can make transitions based on
some star-free property of the input string.
We first show that for every  aperiodic $\twst_{\la}$ one can obtain an aperiodic
$\mathsf{SST}_{\la}$, and then prove that the star-free look arounds can be
eliminated from the $\mathsf{SST}_{\la}$.   
\begin{itemize}

\item ($\twst_{\la} \subset \mathsf{SST}_{\la}$).
  One of the key observations in the construction is that a $\twst_{\la}$
  can move in either direction, while $\sst_{\la}$  cannot.  
  Since we start with a deterministic $\twst_{\la}$ that reads the entire input
  string,  it is clear that if a cell $i$ is visited in a state $q$, then we
  never come back to that cell in the same state.
  We keep track in each cell $i$, with current state $q$, the state  $f(q)$ the
  $\twst_{\la}$  will be in, when it moves into  cell $i+1$ for the first time. 
  The $\sst_{\la}$ will move from state $q$ in cell $i$ to state $f(q)$ in cell
  $i+1$, keeping track of the output produced in the interim time; that is, the
  output produced between $q$ in cell $i$ and $f(q)$ in cell $i+1$ must be
  produced by the $\sst_{\la}$ during the move. This output is stored in a variable $X_q$. 
  The state of the $\sst_{\la}$  at each point of time thus comprises   of a
  pair $(q, f)$ where $q$ is the current state of the $\twst_{\la}$, and $f$ is
  the function  which computes the state that $q$ will evolve into, when moving
  to the right, the first time.   
  In each cell $i$, the state of the $\mathsf{SST}$ will coincide with the state the
  $\twst_{\la}$ is in, when reading cell $i$ for the first time.  
 In particular, in the \sst$_{sf}$, we define $\delta'((q,f),r,a,p)=(f'(q),f')$ where 
 $f'(q)=f'(f(t))$ if in the $\twst_{\la}$ we have $\delta(q,r,a,p)=(t,\gamma,-1)$.
 $f'(q)$ gives the state in which the $\twst_{\la}$ will move to the right of the
 current cell,  but clearly this depends on $f(t)$, the state in which the $\twst_{\la}$
 will move to the right  
from the previous cell. The variables of the \sst$_{sf}$ are of the form $X_q$, where $q$ is the current state 
of the \sst$_{sf}$. Update of $X_q$ depends on whether the $\twst_{\la}$ moves left, right or stays 
in state $q$.  For example, $X_q$ is updated as 
$X_t \rho(X_{f(t)})$ if in the 2WST, $\delta(q,r,a,p)=(t, \gamma,-1)$ and $f(t)$ is defined. 
The definition is recursive, and $X_t$ handles the output produced from state $t$ in cell $i-1$.  
We allow all subsets of $Q$ as Muller sets of the $\mathsf{SST}_{\la}$, and 
keep any checks on these, as part of the look-ahead. 

A special variable $O$ is used to define the output of the Muller sets, by
simply  updating it as  $O:=O\rho(X_q)$ corresponding to the current state $q$
of the $\twst_{\la}$ (and $(q,f)$ is the state of the \sst$_{sf}$).  
The details of the correctness of  construction are in Appendix \ref{app-2wst-sst}.

  \item ($\mathsf{SST}_{\la} \subset \mathsf{SST}$).
An aperiodic $\mathsf{SST}$ with star-free lookaround is a tuple $(T,B, A)$
where $A = (P_A, \Sigma, \delta_A, P_f)$ is an aperiodic, deterministic
Muller automaton called a look-ahead automaton,  $B= (P_B,\Sigma, \delta_B)$ is an aperiodic
automaton called the  look-behind automaton, and  $T$ is a tuple 
$(\Sigma, \Gamma, Q, q_0, \delta, \varsst, \rho, F)$ where 
$\Sigma$, $\Gamma$, $Q$, $q_0$, $\varsst$, $\rho$, and $F$  are defined in the
same fashion as for $\omega$-\sst{}s, and $\delta: Q \times
P_B \times \Sigma \times P_A \to Q$ is the transition function. 
On a string $a_1a_2 \dots$, while processing symbol $a_i$,
we have in the $\mathsf{SST}_{\la}$,  $\delta((q, p_B, p_A),a_i)=q'$, 
(and the next transition is $\delta((q', p'_B, p'_A),a_{i+1})$) if (i) the prefix $a_1a_2 \dots a_i \in L(p_A)$,
(ii) the suffix $a_{i+1} a_{i+2} \dots \in L(p_B)$, where 
$L(p_A)$ ($L(p_B)$) denotes the language accepted starting in state
$p_A$ ($p_B$). We further assume that the look-aheads are mutually exclusive, i.e. for
all symbols $a\in\Sigma$, all states $q\in Q$, and all transitions
$q' = \delta (q,r, a,p)$ and $q'' = (q,r',a,p')$, we have that  
$L(A_p)\cap L(A_{p'})=\varnothing$ and  $L(B_r)\cap L(B_{r'})=\varnothing$.
 In Appendix \ref{app:unique}, we show that for any input string, there is atmost
one useful, accepting run in the $\mathsf{SST}_{\la}$, while  
in Lemma \ref{lem:aperiodicSSTLA} in Appendix \ref{app:unique}, we show that
adding (aperiodic) look-arounds  to $\mathsf{SST}$ does not increase their
expressiveness.

\end{itemize}
The proof sketch is now complete. \end{proof}

\vspace{-0.5em}
\section{Conclusion}
\vspace{-0.5em}
We extended the notion of aperiodicity from finite string transformations to
that on infinite strings.
We have shown a way to generalize transition monoids for deterministic Muller
automata to streaming string transducers and two-way finite state transducers
that capture the FO definable global transformations.  
A interesting and natural next step is to investigate LTL-definable transformations, their
connection with FO-definable transformations, and their practical applications in
verification and synthesis. 

\bibliography{papers}

\newpage
\appendix
\centerline{\Large Appendix}
\section{Example of an Aperiodic Monoid Recognizing a Language}
\label{app:example-aperiodic} 
The language $L=(ab)^{\omega}$ is aperiodic, since it is recognized by the morphism $h: \{a,b\}^* \rightarrow M$ 
where $M$ is an aperiodic monoid.  
 $M=(\{1,2,3\}, ., 1)$ with 2.3=1, 3.2=3, 2.2=3, 3.3=3 and $x.1=1.x=x$ for all $x \in M$. 
 Define $h(\epsilon)=1, h(a)=2, h(b)=3$. Then $M$ is aperiodic, as $x^n=x^{n+1}$ for all $x \in M$ and $n \geq 3$. 
 It is clear that $h$ recognizes $L$.

\section{First-Order Logic: Examples}
\label{app:fo}
We define the following useful FO-shorthands. 
\begin{itemize}
\item
  $x \succ y \rmdef \neg (x \preceq y)$ and $x \prec y \rmdef (x \preceq y) \wedge
  \neg (x = y)$,
  \item $first(x)=\neg \exists y(y \prec x)$
  \item \isstring{} is defined as 
  \[
\forall x(~\uniquesucc(x) \wedge \uniquepred(x))\wedge \exists y[\ifirst(y) \wedge \forall z[\ifirst(z) \rightarrow z=y]] \wedge \forall x \exists y 
        [x \prec y \wedge x \neq y]
        \]
        where
 $\uniquesucc(x)=\{\exists y[x \prec y \wedge \neg \exists z[x \prec z \wedge z \prec y]] \wedge \exists y'[x \prec y' \wedge \neg \exists z[x \prec z \wedge z \prec y']] \rightarrow (y=y')\} \wedge  \exists y[x \prec y \wedge \neg \exists z[x \prec z \wedge z \prec y]]$
 and \\
 $\uniquepred(x)=\{\exists y[y \prec x \wedge \neg \exists z[y \prec z \wedge z \prec x]] \wedge \exists y'[y' \prec x \wedge \neg \exists z[y' \prec z \wedge z \prec x]] \rightarrow (y=y')\} \wedge  \exists y[y \prec x \wedge \neg \exists z[y \prec z \wedge z \prec x]]$ characterize that the position $x$ has unique predecessor and successor.
\end{itemize}
It is easy to see that a structure satisfying $\isstring$  characterizes a string.


\section{Proofs from Section \ref{ap-omega}}

\subsection{Transition Monoid of Muller Automata}
\label{app:basic-muller}

 We start with an example for a transition monoid. The Muller automaton given in  
figure \ref{fig:muller-ex1} has two Muller acceptance sets $\{q\}, \{r\}$.
Consider the strings $ab$ and $bb$. The transition monoids are
\[
M_{ab}=\begin{blockarray}{cccc}
& \matindex{$q$} & \matindex{$r$} & \matindex{$t$}  \\
\begin{block}{c(ccc)}
\matindex{$q$} & \bot & (0,0) & \bot \\
\matindex{$r$} & \bot & \bot & (0,0)  \\
\matindex{$t$} & (0,0) & \bot & \bot  \\
\end{block}
\end{blockarray} 
~~M_{bb}=\begin{blockarray}{cccc}
& \matindex{$q$} & \matindex{$r$} & \matindex{$t$}  \\
\begin{block}{c(ccc)}
\matindex{$q$} & (1,0) & \bot & \bot  \\
\matindex{$r$} & \bot & (0,0) & \bot  \\
\matindex{$t$} & \bot & \bot & (0,0)  \\
\end{block}
\end{blockarray} 
\]
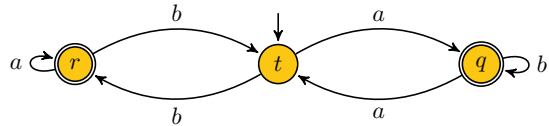
\begin{figure}[h]
  \begin{center}
    \begin{tikzpicture}[->,>=stealth',shorten >=1pt,auto,node distance=1cm,
	semithick,scale=0.9,every node/.style={scale=0.9}]
      \tikzstyle{every state}=[fill=gold,minimum size=1em]
      \node[state,fill=gold,accepting] at (8,-3) (C) {$q$} ;
      \node[state,fill=gold,accepting] at (2,-3) (B) {$r$} ;
      \node[state,initial,  initial where=above,initial text={},fill=gold] at (5,-3) (D) {$t$} ;
      \path(B) edge[loop left] node {$a $} (B);
      \path(C) edge[loop right] node {$b$} (C);
      \path(B) edge[bend left] node {$b$} (D);
      \path(D) edge[bend left] node {$b$} (B);
      \path(C) edge[bend left] node {$a$} (D);
      \path(D) edge[bend left] node {$a$} (C);
    \end{tikzpicture}
  \end{center}
  \caption{Muller accepting set = $\{\{q\},\{r\}\}$}
  \label{fig:muller-ex1}
\end{figure}

\begin{lemma}
	$(M_T,\times,\textbf{1})$ is a monoid, where $\times$ is defined as matrix
	multiplication and the identity element $\textbf{1}$ is the matrix with diagonal elements 
	$(\emptyset,\emptyset,\dots,\emptyset)$ and  all non-diagonal elements being 
	$\bot$. 
\end{lemma}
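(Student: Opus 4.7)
The plan is to verify the three monoid axioms---closure, associativity, and identity---by giving a semantic reading of each matrix entry $M_s[p][q]$ as a faithful record of the unique run of $\Aa$ from $p$ to $q$ on $s$ together with the incidence of that run with each Muller set $F_i$. Once the semantic interpretation is pinned down, the algebraic laws will follow from the fact that concatenation of runs is associative and that the empty string contributes the trivial run.

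First I would establish the crucial identity $M_u \times M_v = M_{uv}$ for all $u, v \in \Sigma^*$. Fix states $p, q$. Because $\Aa$ is deterministic, there is at most one state $r$ such that $M_u[p][r] \neq \bot$; consequently the sum $\bigoplus_r M_u[p][r] \odot M_v[r][q]$ collapses to a single non-$\bot$ term (if any) and the $\oplus$ rule $\alpha \oplus \bot = \alpha$ suffices. When a unique run from $p$ to $q$ over $uv$ exists, it factors through the unique intermediate state $r$ reached after $u$, and the set of states witnessed along the concatenated run is the union of the two partial sets (with $r$ appearing in both). The whole verification then reduces to checking, coordinate by coordinate $i \in \{1, \ldots, n\}$, that the rules in (\ref{eq1:product}) correctly compute the Muller indicator of this union: the $0$ case propagates any visit outside $F_i$; the $1$ cases cover the situation where the union equals $F_i$ (whether because one side was already $F_i$ or because the two strict subsets complete each other); and the $x_i \cup y_i$ case records a strictly smaller witnessed subset of $F_i$. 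This is a routine case analysis, but it is the step where care is most needed.

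With $M_{uv} = M_u \times M_v$ in hand, closure is immediate since the right-hand side is of the form $M_s$ for $s = uv \in \Sigma^*$. For associativity, I would combine the identity $M_u \times M_v = M_{uv}$ with associativity of string concatenation: $(M_u \times M_v) \times M_w = M_{uv} \times M_w = M_{(uv)w} = M_{u(vw)} = M_u \times M_{vw} = M_u \times (M_v \times M_w)$. This bypasses a direct verification that $\odot$ is associative on coordinates, though that is itself straightforward from the symmetry and union-based nature of the rules in (\ref{eq1:product}).

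For the identity axiom, I would observe that $\textbf{1}$ is precisely $M_\epsilon$: over the empty string the unique run from $p$ to $q$ exists only when $p = q$, in which case no states are ``visited in between'' and the Muller-set coordinate is the empty subset of each $F_i$ (properly contained in every non-empty $F_i$, which fits the ``$x_i = P \cup \{p,q\} \subset F_i$'' clause since the pair $p = q$ may or may not lie in $F_i$; here I would note that the empty-string case is handled by the convention that the witnessed set is $\emptyset$ as stated in the definition of $\textbf{1}$). Then $\textbf{1} \times M_s = M_\epsilon \times M_s = M_{\epsilon s} = M_s$, and symmetrically on the right. The main obstacle throughout will be the careful coordinate-wise case analysis in the $M_u \times M_v = M_{uv}$ step, particularly ensuring that the interaction between the ``$1$'' entries and the subset entries correctly tracks when the union of two witnessed sets first completes $F_i$.
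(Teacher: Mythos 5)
Your proof is correct, but it pivots on a different key fact than the paper's own argument. The paper verifies the identity law head-on: it fixes a row of $M_s$, uses determinism to note that at most one entry per row is non-$\bot$, multiplies against the single non-$\bot$ diagonal entry $(\emptyset,\dots,\emptyset)$ of $\textbf{1}$, and checks that $\kappa\odot\emptyset=\kappa$ and $\bot\oplus\kappa=\kappa$ recover the original entry; associativity is then asserted to ``follow easily,'' and closure of $M_\Aa=\{M_s : s\in\Sigma^*\}$ under $\times$ is not addressed inside the lemma at all. You instead make the morphism identity $M_u\times M_v=M_{uv}$ the single load-bearing fact and read off all three axioms from it: closure because $uv\in\Sigma^*$, associativity because concatenation of strings is associative, and identity because $\textbf{1}=M_\epsilon$. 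The paper does state this morphism property, but only later (in the lemma showing that $s\mapsto M_s$ recognizes $L(\Aa)$), and there it is dismissed with ``clearly.'' Your route is the more complete one --- in particular it is the only one of the two that actually establishes closure, without which $(M_\Aa,\times,\textbf{1})$ is not even a well-defined algebra --- at the price of the coordinate-wise case analysis of the rules in~(\ref{eq1:product}), which you correctly flag as the delicate step (determinism collapsing $\oplus$ to a single term, propagation of $0$, and two strict subsets of $F_i$ completing to $F_i$). Your side remark about $M_\epsilon$ is also well taken: the general definition of $M_s[p][p]$ would give witnessed set $\{p\}$ rather than $\emptyset$ on the empty string, and both you and the paper resolve this by fiat in the definition of $\textbf{1}$; since $\emptyset$ is absorbed by $\cup$ in every clause of~(\ref{eq1:product}), the convention is harmless.
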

\begin{proof}
	
	Consider any matrix $M_s$ where $s \in \Sigma^*$. 
	Let there be $m$ states $\{p_1, \dots, p_m\}$ in the Muller automaton. 
	Consider a row corresponding to some $p_i$. Only one entry can be  different from $\bot$. 
	 Let this entry be $[p_i][p_j] = (\kappa_1, \dots, \kappa_n)$, where each $\kappa_h \in \{0,1,2^Q\}$, $1 \leq h \leq n$.  
	
	Consider $M_s \times \textbf{1}$. The $[p_i][p_j]$ entry 
	of the product are obtained from the $p_i$th row of $M_s$ and the $p_j$th column of $\textbf{1}$. 
	The $p_j$th column of $\textbf{1}$ has exactly one entry  $[p_j][p_j] = (\emptyset, \dots, \emptyset)$, while 
	all other elements are $\bot$.  
	Then the $[p_i][p_j]$ entry for the product matrix $M_s \times \textbf{1}$ is of the form 
	$\bot \oplus \dots \oplus \bot \oplus (\kappa_1, \dots, \kappa_n) \odot (\emptyset, \dots, \emptyset) \oplus \bot \oplus \dots \oplus \bot$. Clearly, this is equal to 
	$(\kappa_1, \dots, \kappa_n)$, since $\kappa \odot \emptyset=\kappa \cup \emptyset=\kappa$ and 
	$\bot \oplus \kappa=\kappa$. 
	
	 Similarly, it can be shown that the 
	$[p_i][p_j]$th entry of $M_s$ is preserved in   $\textbf{1} \times M_s$ as well.  Associativity of matrix multiplication 
	follows easily.
\end{proof}



We exploit the following lemma, proved in \cite{dg08SIWT}, in our proofs.
\begin{lemma}
	\label{lem-gd}
	Let $h: \Sigma^* \rightarrow M$ be a morphism to a finite monoid $M$ and let $w=u_0u_1\dots$ be an infinite word 
	with $u_i \in \Sigma^+$ for $i \geq 0$. Then there exist $s, e \in M$ and an increasing sequence $0< p_1 < p_2 < \dots$ 
	such that 
	\begin{enumerate}
		\item $se=s$ and $e^2=e$
		\item $h(u_0\dots u_{p_1-1})=s$ and $h(u_{p_i} \dots u_{p_j})=e$ for all $0 < i<j$. 
	\end{enumerate}
\end{lemma}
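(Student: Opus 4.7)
The plan is to prove this by a standard Ramsey-theoretic argument on the infinite word together with the finiteness of $M$, exactly the technique used to extract idempotents from products in finite semigroups. The statement is really about finding a monochromatic ``factorization'' of $w$ whose block-images in $M$ are all equal to a single idempotent $e$, together with a stable prefix image $s$ absorbing $e$ on the right.

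First, I would define a finite coloring of pairs. For every pair of natural numbers $i < j$, color the pair $\{i,j\}$ by $c(\{i,j\}) \rmdef h(u_i u_{i+1} \cdots u_{j-1}) \in M$. Since $M$ is finite, this is a finite-coloring of the $2$-element subsets of $\Nat$. Applying the infinite Ramsey theorem yields an infinite set $q_1 < q_2 < q_3 < \cdots$ and an element $e \in M$ such that for every $1 \leq i < j$ we have $h(u_{q_i}\cdots u_{q_j-1}) = e$. From this, idempotency follows immediately: for any $i<j<k$ in this sequence, the associativity of $\cdot$ and the multiplicativity of $h$ give
\[
e \;=\; h(u_{q_i}\cdots u_{q_k-1}) \;=\; h(u_{q_i}\cdots u_{q_j-1}) \cdot h(u_{q_j}\cdots u_{q_k-1}) \;=\; e \cdot e,
\]
so $e^2 = e$.

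Next I would locate $s$ by looking at prefix images and using $e^2=e$ to stabilize them. Set $a_i \rmdef h(u_0 u_1 \cdots u_{q_i - 1})$ for $i \geq 1$. Then by the monochromatic property, $a_{i+1} = a_i \cdot e$ for every $i \geq 1$, and hence by induction $a_i = a_1 \cdot e^{i-1}$. Since $e$ is idempotent, $e^{i-1} = e$ whenever $i \geq 2$, so $a_i = a_1 \cdot e$ is constant for $i \geq 2$. Define $s \rmdef a_2$ and relabel the sequence by $p_i \rmdef q_{i+1}$ for $i \geq 1$. Then $h(u_0 \cdots u_{p_1-1}) = a_2 = s$, and $s \cdot e = a_2 \cdot e = a_1 \cdot e \cdot e = a_1 \cdot e = a_2 = s$, giving $se = s$. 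The factor equalities $h(u_{p_i}\cdots u_{p_j -1}) = h(u_{q_{i+1}}\cdots u_{q_{j+1}-1}) = e$ for $i < j$ transfer directly from the Ramsey sequence.

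The only genuine content is the use of the infinite Ramsey theorem; once the monochromatic sequence is in hand, the algebraic part is a short pigeonhole-plus-idempotence argument and presents no obstacle. The mildest subtlety is keeping the indexing conventions consistent (the paper's notation $h(u_{p_i}\cdots u_{p_j})$ versus the half-open product $h(u_{p_i}\cdots u_{p_j - 1})$ coming naturally out of the coloring), but this is purely a matter of shifting the relabeling by one, as done in the step producing $s = a_2$.
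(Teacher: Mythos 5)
Your proof is correct and is exactly the standard Ramsey factorization argument for finite monoids; the paper gives no proof of this lemma itself but defers to the cited reference of Diekert and Gastin, where the same infinite-Ramsey-plus-idempotent-stabilization argument is used, so there is nothing to contrast. Your two points of care --- reading the paper's $h(u_{p_i}\cdots u_{p_j})$ as the half-open product coming out of the coloring, and taking $s = a_2$ rather than $a_1$ so that $se = s$ actually holds --- are both handled correctly.
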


Using Lemma \ref{lem-gd}, we prove the following lemma, which is used in the proof of Lemma \ref{lem-conv}.

\begin{lemma}
	\label{lem:ap-rec}
	Let $\mathcal{A}$ be a DMA. 
	The mapping $h$ which maps any string $s$ to its
	transition matrix $M_s$,  is a morphism from $(\Sigma^*, ., \epsilon)$
	to $(M_T, \times, \textbf{1})$. Hence,  $h$ recognizes $L(\mathcal{A})$.
\end{lemma}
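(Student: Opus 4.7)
The plan is to verify the morphism property first, then derive recognition from it. For morphism, $h(\epsilon) = \textbf{1}$ is immediate from the definition of $M_\epsilon$, so the main check is $M_{uv} = M_u \times M_v$ entrywise. Fix $p, q \in Q$: by determinism of $\Aa$, the run from $p$ to $q$ on $uv$ (if one exists) passes through a unique intermediate state $r$, and the set of states witnessed in this combined run is the union of those witnessed in the $u$- and $v$-halves. I would then verify, component-wise for each Muller set $F_i$, that the $\odot$ rule in~(\ref{eq1:product}) exactly realises this union bookkeeping: the entry is $0$ if either half already leaves $F_i$; the entry is $1$ if the two halves together cover $F_i$; otherwise the entry stores the union of witnessed states as a proper subset of $F_i$. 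The $\oplus$ rule (which absorbs $\bot$) then collapses the row--column product to the single non-$\bot$ summand indexed by $r$, yielding $M_{uv}[p][q]$.

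For recognition, suppose $w \in L(\Aa)$ and $u \cong w$; by transitivity of $\cong$ it suffices to handle the base case $u \sim_h w$, so fix factorizations $w = w_1 w_2 \cdots$ and $u = u_1 u_2 \cdots$ with $h(u_i) = h(w_i)$ for all $i$. Applying Lemma~\ref{lem-gd} to $h$ and the factorization of $w$ yields matrices $s, e \in M_\Aa$ with $se = s$, $e^2 = e$, and indices $0 < p_1 < p_2 < \cdots$ satisfying $h(w_1 \cdots w_{p_1-1}) = s$ and $h(w_{p_i} \cdots w_{p_{j}-1}) = e$ for $i < j$; because $h$ agrees on the matching blocks of $u$, the very same identities hold for $u$. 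By determinism, reading the $s$-prefix from $q_0$ reaches a unique state $q_1$, and reading the first $e$-block from $q_1$ reaches a unique state $q_2$. Using $e \cdot e = e$ together with determinism, reading any number of consecutive $e$-blocks from $q_1$ still lands in $q_2$, so from the second $e$-block onwards the run loops $q_2 \xrightarrow{e} q_2$. Consequently $\Omega(r_w)$ is precisely the set of states recorded in the entry $e[q_2][q_2]$ (read off the appropriate $F_i$-component), and since the run on $u$ shares the same $s, e, q_1, q_2$, we get $\Omega(r_u) = \Omega(r_w) \in F$, hence $u \in L(\Aa)$.

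The main obstacle is the careful audit of the $\odot$ rule: its $F_i$-components are three-valued ($0$, $1$, or a proper subset of $F_i$), and every composition of such values must faithfully reflect the union of states actually visited during the concatenated run, including the corner cases where the two halves together first complete $F_i$. Once this bookkeeping is in place, the morphism property is routine and the recognition claim reduces cleanly to the Ramsey-style factorization already provided by Lemma~\ref{lem-gd}.
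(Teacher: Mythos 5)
Your proof is correct, and the morphism half matches the paper's (the paper merely asserts $M_{s_1s_2}=M_{s_1}\times M_{s_2}$, while you audit the $\odot$/$\oplus$ rules explicitly; your observation that determinism collapses the row--column sum to the single summand through the unique intermediate state is exactly the right justification). The recognition half, however, is organized differently. The paper takes the given factorization $w=w_1w_2\cdots$, uses the fact that $w$ is accepted to locate a block index after which the run stays inside some Muller set $F_k$, regroups the remaining blocks into chunks each of which witnesses \emph{all} of $F_k$, and then transfers this chunked factorization to $w'$ via $M_{v_i}=M_{v'_i}$. You instead invoke Lemma~\ref{lem-gd} to obtain an idempotent $e$ and argue from $e\cdot e=e$ plus determinism that the run eventually loops $q_2\to q_2$ on $e$-blocks, reading the recurrence set off the entry $e[q_2][q_2]$. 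Your route is the one the paper's preamble actually advertises ("Using Lemma~\ref{lem-gd}\dots"), and it has the advantage of not needing to re-engineer the factorization around the accepting tail; the paper's route is more elementary and avoids the Ramsey-type lemma entirely.

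One small imprecision worth tightening: the entry $e[q_2][q_2]$ does not in general "record a set of states" --- its $F_i$-component is $0$ whenever the block visits a state outside $F_i$, and if all components are $0$ the matrix determines no set at all. Your argument survives because $w\in L(\Aa)$ forces the $F_k$-component to be non-zero (a $0$ there would plant a state outside $F_k$ in every tail block, contradicting $\Omega(r_w)=F_k$), and once that component is $1$ or a proper subset the witnessed set per block \emph{is} pinned down and is identical for $u$ and $w$. You should make that short case analysis explicit rather than asserting $\Omega(r_w)$ equals "the set recorded".
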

\begin{proof}
It is easy to see that $h$ is a morphism : $h(s_1s_2)$ is by definition, $M_{s_1s_2}$. This is clearly 
equal to $M_{s_1}M_{s_2}=h(s_1)h(s_2)$.  

	Let $w \in L(\Aa)$ and let 
	$w = w_1 w_2 w_3 w_4 \dots$ be a factorization of $w$, with $w_i \in \Sigma^+$ for all $i$. Consider another string $w' \in \Sigma^{\omega}$ with a factorization $w' = w'_1 w'_2 w'_3 w'_4 \dots$, such that
	$h(w_i) = h(w'_i)$ forall $i$. We have to show that $w' \in L(\Aa)$.  
	
	Since  $w \in L(\Aa)$ we know that after some position $i$, the run $r$ of $w$ will contain only and all states from some muller set $F_k$. 
	Assume that the factorization of $w$ is such that, 
	 after the factor $w_i$ of $w$, the run visits only states from $F_k$.  
	 	 We can write $w$ as $u v_1 v_2 v_3 \dots$ such that
	$u = w_1 w_2 \dots w_i$, 
	$v_1 = w_{i+1} w_{i+2} \dots w_{i+j_1}$, 
	$v_2 = w_{i+j_1+1} w_{i+j_1+2} \dots w_{i+j_1+j_2}$ and so on, such that 
	each of $v_1, v_2, \dots $  witness all states of $F_k$ in the run.
	Since $h(w_i) = h(w'_i)$  forall $i$, we know that $M_{w_i}=M_{w'_i}$ for all $i$. 
	So we can factorize $w'$ as $u'v'_1v'_2 \dots$ such that $h(u)=h(u'), h(v_i)=h(v'_i)$ for all $i$. 
	Hence, $w'$ also has a run that also witnesses the muller set $F_k$ from some point onwards. Hence, $w' \in L(\Aa)$.
	Thus, $h$ recognizes $L(\Aa)$ since $w \in L(\Aa) \rightarrow [w]_h \subseteq  L(\Aa)$. 
\end{proof}

\subsection{Aperiodicty of DMA $\Aa$ $\equiv$ Aperiodicity of $L(\Aa)$}
\label{app:aper-equiv}
Note that a result similar to Theorem \ref{muller-ap} (whose proof is below) has been proved for
B\"uchi automata in \cite{dg08SIWT}, where  aperiodicity of B\"uchi automata was
defined using transition monoids. 

\begin{theorem}
  \label{muller-ap}
  A language $L \subseteq \Sigma^{\omega}$ is aperiodic iff there exists an
  aperiodic Muller automaton $\Aa$ such that $L = L(\Aa)$. 
\end{theorem}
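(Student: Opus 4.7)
\emph{Direction ($\Leftarrow$).} This direction is essentially supplied by Lemma \ref{lem:ap-rec}. Assume $\Aa$ is an aperiodic DMA; then its transition monoid $(M_\Aa, \times, \mathbf{1})$ is finite and aperiodic by the characterization of aperiodicity of DMAs noted just before the theorem. Lemma \ref{lem:ap-rec} already exhibits a monoid morphism $h : s \mapsto M_s$ from $(\Sigma^*, \cdot, \epsilon)$ into $(M_\Aa, \times, \mathbf{1})$ that recognizes $L(\Aa)$. Composing these two facts, $L(\Aa)$ is recognized by a morphism into a finite aperiodic monoid, which is precisely the definition of aperiodicity for $\omega$-languages.

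\emph{Direction ($\Rightarrow$).} For the converse, let $L$ be recognized by some morphism $\pi : \Sigma^* \to M$ into a finite aperiodic monoid. Lemma \ref{lem-gd}, applied letter-by-letter to each $w \in \Sigma^\omega$, yields a Ramsey-type factorization $w = u_0 u_1 u_2 \cdots$ whose image sequence in $M$ is $(s, e, e, \ldots)$ with $s \cdot e = s$ and $e^2 = e$. Define $P \subseteq M \times M$ to be the finite set of pairs $(s, e)$ satisfying $s \cdot e = s$, $e^2 = e$, and $\pi^{-1}(s) \cdot (\pi^{-1}(e) \cap \Sigma^+)^\omega \subseteq L$. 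Since $\pi$ recognizes $L$, this yields the decomposition
\[
L \;=\; \bigcup_{(s, e) \in P} \pi^{-1}(s) \cdot (\pi^{-1}(e) \cap \Sigma^+)^\omega.
\]
From this I build a DMA $\Aa$ whose states combine the monoid image of the prefix read so far with a bounded amount of bookkeeping needed to detect the idempotent-loop pattern associated with each good pair $(s, e) \in P$; the Muller family contains one set per good pair, namely the reachable states whose infinitely-visited pattern witnesses stabilization of the prefix image at $s$ together with a loop through $e$. Crucially, the state update is driven by monoid multiplication in $M$ together with purely equational tests (idempotency $m^2 = m$ and stabilization $m_1 \cdot m_2 = m_1$), so the action of $\Sigma^*$ on the state space factors through $M$.

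\emph{Main obstacle.} The principal difficulty, and the step that requires adapting the Büchi-case argument of \cite{dg08SIWT} to the Muller setting, is showing that the constructed $\Aa$ is aperiodic. The key observation is that the action of any word $u \in \Sigma^*$ on a state of $\Aa$ depends only on $\pi(u) \in M$; consequently the transition monoid $M_\Aa$ admits a homomorphic embedding into a finite direct product of copies of $M$, and since finite direct products of finite aperiodic monoids are aperiodic, so is $M_\Aa$, hence so is $\Aa$. Two subtle points must be checked: first, the reset bookkeeping must be triggered purely by equational monoid tests so that no length-based counting is smuggled in; second, the Muller-acceptance components $x_i \in \{0,1\} \cup 2^Q$ tracked in the transition monoid definition (right before the theorem) must be shown not to break aperiodicity. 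Both of these verifications are routine once the monoid-embedding viewpoint is in place, and together they yield that $L(\Aa) = L$ with $\Aa$ aperiodic, completing the proof.
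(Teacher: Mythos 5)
Your ($\Leftarrow$) direction is fine and is exactly the paper's Lemma~\ref{lem-conv}: by Lemma~\ref{lem:ap-rec} the map $s \mapsto M_s$ is a morphism onto the finite, aperiodic transition monoid that recognizes $L(\Aa)$. Your ($\Rightarrow$) direction also begins the same way as the paper's Lemma~\ref{lem-forward}, namely with the Ramsey/linked-pair decomposition $L = \bigcup_{(s,e)} \pi^{-1}(s)\cdot(\pi^{-1}(e)\cap\Sigma^+)^\omega$ obtained from Lemma~\ref{lem-gd} and the recognizability of $L$. The gap is in everything after that decomposition.

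First, the ``bounded amount of bookkeeping needed to detect the idempotent-loop pattern'' is precisely the hard part of the theorem, and you leave it unspecified. A deterministic automaton cannot guess the factorization $w = u_0u_1u_2\cdots$ witnessing membership in $\pi^{-1}(s)(\pi^{-1}(e)\cap\Sigma^+)^\omega$; the paper has to replace the existential factorization by a deterministic, prefix-testable condition, via the argument that $w \in V^\omega$ iff infinitely many prefixes of $w$ lie in $W = V\cdot\mathrm{Prefree}(V)$ (where $V = h^{-1}(e)$ and $\mathrm{Prefree}(V)$ is the set of words of $V$ with no proper prefix in $V$). That equivalence is where idempotency of $e$ is actually used, and your proof contains no substitute for it. Second, and more seriously, the aperiodicity argument you give is wrong: the action of a word $u$ on the states of any such reset-based automaton does \emph{not} factor through $\pi(u)$. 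If the state records the $\pi$-image of the input since the last reset and resets when an equational test fires, then two words with the same image, e.g.\ $u = a$ and $u' = aa$ when $\pi(a)$ is idempotent, can act differently on a state, because the reset fires at different internal positions of $u$ and $u'$. Accordingly the transition monoid of the constructed automaton does not embed into a finite power of $M$; already the syntactic monoid of $W = V\cdot\mathrm{Prefree}(V)$ is aperiodic only by Sch\"utzenberger-type closure of aperiodic languages under concatenation and Boolean operations, and in general it does not divide $M^k$. So the two verifications you call ``routine'' are exactly the points where the argument fails. The paper closes this gap differently: it takes minimal DFAs for the aperiodic finite-word languages $U$ and $W$, shows these are counter-free, shows the induced deterministic B\"uchi/Muller automaton for $V^\omega$ and the concatenation with $\Uu$ remain counter-free, and finally shows counter-free Muller automata are aperiodic. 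Some argument of this kind (or another proof that the constructed automaton's transition monoid is aperiodic) is required and is missing from your proposal.
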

\begin{proof}

We obtain the proof of Theorem \ref{muller-ap} by proving the following two lemmas. 

%
%
\begin{lemma}
	\label{lem-forward}
	Let $L \subseteq \Sigma^{\omega}$ be an aperiodic language. Then there is an aperiodic Muller automaton 
	accepting $L$.
\end{lemma}

\begin{proof}
	Let $L \subseteq \Sigma^{\omega}$ be an aperiodic language. Then by definition, $L$ is recognized by a morphism 
	$h: \Sigma^* \rightarrow M$ where $M$ is a finite aperiodic monoid. We first show that 
	we can construct a counter-free Muller automaton $\Aa$ such that $L=L(\Aa)$.  
	A counter-free automaton is one having the property that $u^m \in L_{pp} \rightarrow u \in L_{pp}$ for all $u \in \Sigma^*$, states $p$  and 
	$m \geq 1$.

	It can be shown \cite{dg08SIWT} that the aperiodic language $L$ can be written as the finite union of languages 
	$UV^{\omega}$ where $U, V$ are aperiodic languages of finite words. Moreover,
	for any $u_0u_1u_2 \dots \in \Sigma^{\omega}$, there is an increasing sequence  $0<p_1<p_2 \dots$ of natural numbers such that
	for the morphism $h: \Sigma^* \rightarrow M$ recognizing $L$, we have $h(u_0\dots u_{p_1})=s \in M$ 
	and $h(u_{p_i}\dots u_{p_j})=e \in M$ for all $0<i<j$.  
	The $e \in M$ is an idempotent element in $M$. Then we have 
	$h^{-1}(e)=V$ and $h^{-1}(s)=U$.

	Since $V$ is an aperiodic language $\subseteq \Sigma^*$, there is a minimal DFA 
	$\Dd$ that accepts $V$.  The initial state  of this DFA is $[\epsilon]$, and 
	the states are of the form $[x], x \in \Sigma^*$, such that 
	$xw \in L$ iff there is a run from $[x]$ on $w$ to an accepting state. 
	Accepting states have the form $[w]$ with $w \in V$ and the transition function 
	is $\delta([x],a)=[xa]$. 	 For all $x,y,v \in \Sigma^*$, 
	($xv \in V$ iff $yv \in V$) $\Rightarrow [x]=[y]$. We first show that $\Dd$ is counter-free. 
	
	Since $V$ is aperiodic, there is a morphism $g: V \rightarrow M_V$  recognizing $V$, for an aperiodic monoid $M_V$. 
		As $M_V$ is aperiodic, there exists some $m \geq 1$ such that for all 
	$x \in M_V$, $x^m=x^{m+1}$. If $[u]=[uv^m]$ for some $u, v \in \Sigma^*$, then $g(u)=g(uv^m)=g(uv^{m+1})=g(uv^mv)=g(uv)$.
	If $[u] \neq [uv]$, then we can find some string $w$ such that $uw \in V$ but $uvw \notin V$ or viceversa.
	This contradicts the hypothesis that $V$ is recognized by a morphism 
	$g: \Sigma^* \rightarrow M_V$, since $uw \in V$ and $g(uw)=g(uvw)$ implies either both $uw, uvw$ belong to  $V$ or neither.  
	Thus,  $[u]=[uv^m]$ implies $[u]=[uv]$ for all $u, v \in \Sigma^*$. That is, whenever
	$\delta^*([u],v^m)=[uv^m]=[u]$, we have $\delta^*([u],v)=[uv]=[u]$ as well, which shows 
	that the minimal DFA $\Dd$ for $V$ is counter-free.  
	
	If we intepret $\Dd$ as a Buchi automaton,  
	and consider $\alpha \in L(\Dd)$,  then $\alpha$ has infinitely many prefixes $v_1 < v_2 < \dots$ such that 
	each $v_i \in V$. We have to show that $\alpha \in V^{\omega}$; that is 
	we have to show that $\alpha=\alpha_1\alpha_2 \dots$ with $\alpha_i \in V$ for all $i$. 
	We know $v_2=v_1v'_1$, $v_3=v_2v'_2 \dots$ with $v_1, v_2, \dots \in V$. 
	Then $\alpha=v_1v'_1v'_2v'_3 \dots$. If $v'_i \in V$ for all $i$, we are done, since 
	in that case, $\Dd$ will be a counter-free Buchi automaton for $V^{\omega}$.

	To obtain the infinitely many prefixes $v_1<v_2<v_3<\dots$ such that
	$v_{i+1}=v_iv'_i$ with $v_i, v'_i \in V$, we consider the language
	$W=V.Prefree(V)$ where $Prefree(V)$ is the set of all strings in $V$ 
	which do not contain a proper prefix also lying in $V$. Since $V$ is aperiodic, 
	$W$ is also aperiodic. Let $\Ee$ be the minimal counter free DFA accepting $W$.
	We intrepret $\Ee$ as a Buchi automaton, as we did for the case of $\Dd$, 
	and show that $\Ee$ is a counter-free Buchi automaton accepting $V^{\omega}$.
	Clearly, $L(\Ee)$ is the set of strings which has infinitely many prefixes from $W$.

	If $w \in V^{\omega}$, then $w \in L(\Ee)$ since $w$ has infinitely many prefixes from $W$. 
	Conversely, let $w \in \Sigma^{\omega}$ be such that infinitely many prefixes 
	$w_1<w_2<w_3< \dots$ of $w$ are in $W$. Then we have to show that $w \in V^{\omega}$.  
	Let $w_i=x_iy_i$ where $x_i \in V, y_i \in Prefree(V)$ for each $i$. Then we have 
	$$x_1 < x_1y_1 < x_2 < x_2y_2 < x_3 < \dots$$ 
	Note that if $w_1\neq w_2$ that is, $x_1y_1 \neq x_2y_2$ and 
	$x_1=x_2$, then $y_1$ is a prefix of $y_2$. 
	Since $y_1, y_2 \in Prefree(V)$, this is not possible. 
	Thus, for any two $w_i \neq w_j$, we have $x_i <  x_j$. 
	Let $x_{i+1}=x_iy_iy'_i$ for some $y'_i$. Recall that,  using the morphism 
	$h: \Sigma^* \rightarrow M$ recognizing $L$, we have 
	$h^{-1}(e)=V$ for some idempotent $e \in M$. 
	$h(x_{i+1})= h(x_iy_iy'_i)=e.e.h(y'_i)=e.h(y'_i)=h(y_iy'_i)$. 
	Hence we get $w=x_1y_1y'_1y_2y'_2y_3y'_3\dots=x_1x_2x_3\dots \in V^{\omega}$.  
	Thus, $\Ee$ is a counter-free Buchi automaton accepting $V^{\omega}$. 
	Let $\Ee'$ be the counter-free Muller automaton obtained from $\Ee$.
	
	Since $U \subseteq \Sigma^*$ is aperiodic, we can construct as for $V$, the minimal counter-free DFA $\Uu$ for $U$. The concatenation 
	$\Uu \Ee'$ is then counter-free.  The finite union of such automata are also counter-free.   
	   
	Finally we show that counter-free automata are  aperiodic. Let $x^{n} \in L_{pq}$ for any two states 
	$p,q$,  for a large $n$. We can decompose $x^n$ as 
	$x^{k+l+m}$ such that $x^k \in L_{ps}, x^l \in L_{ss}$ and $x^m \in L_{sq}$, with $l \geq 2$. 
	Then we have $x \in L_{ss}$ by the counter-freeness. Then we obtain $x^{n-1}\in L_{pq}$. Similarly, we can show that $x^{n-1} \in L_{pq} \Rightarrow x^n \in L_{pq}$.  This shows that we have an aperiodic Muller automata accepting the finite union $UV^{\omega}$ that represents $L$. 
	Thus, starting from the assumption that $L$ is an aperiodic language, we have obtained an aperiodic Muller automaton that accepts $L$.   
\end{proof}

\begin{lemma}
	\label{lem-conv}
	Let $\Aa$ be Muller automaton whose transition monoid is aperiodic. Then $L(\Aa)$ 
	is aperiodic.
\end{lemma}
\begin{proof}
	From Lemma \ref{lem:ap-rec}, we know that we can construct a morphism mapping $(\Sigma^*, ., \epsilon)$ to 
	the transition monoid of $\Aa$ which recognizes $L(\Aa)$. Hence, $L(\Aa)$ is aperiodic.
\end{proof}

\end{proof}

\section{Example of FOT}
\label{app:fot-example}

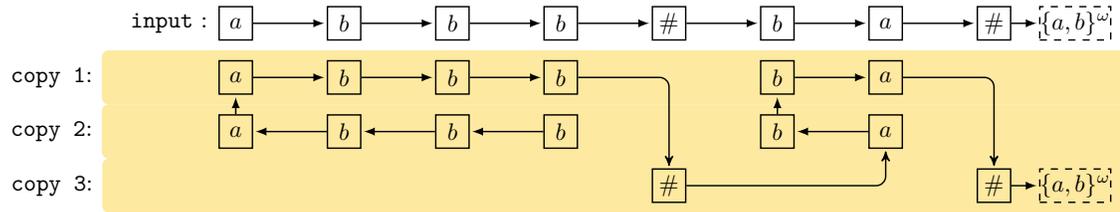
\begin{figure}[h]
  \tikzstyle{trans}=[-latex, rounded corners]
  \begin{center}
    \scalebox{0.9}{
      
      \begin{tikzpicture}[->,>=stealth',shorten >=1pt,auto, semithick,scale=.8]
        \tikzstyle{every state}=[fill=golden]

        \node[loc] at (-2, 0) (B) {$a$} ;
        \node[loc] at (0,0) (B0) {$b$} ;
        \node[loc] at (2,0) (B1) {$b$} ;
        \node[loc] at (4,0) (B2) {$b$} ;
        \node[loc] at (6,0) (B3) {$\#$} ;
        \node[loc] at (8,0) (B4) {$b$} ;
        \node[loc] at (10,0) (B5) {$a$} ;
        \node[loc] at (12,0) (B6) {$\#$} ;
        \node[loc,dashed] at (13.5,0) (B7) {$\{a,b\}^\omega$} ;
        
        \draw[trans] (B) -- (B0);
        \draw[trans] (B0) -- (B1); 
        \draw[trans] (B1) -- (B2); 
        \draw[trans] (B2) -- (B3); 
        \draw[trans] (B3) -- (B4); 
        \draw[trans] (B4) -- (B5); 
        \draw[trans] (B5) -- (B6); 
        \draw[trans] (B6) -- (B7); 
    
        \node[loc] at (-2, -1) (C) {$a$} ;
        \node[loc] at (0,-1) (C0) {$b$} ;
        \node[loc] at (2,-1) (C1) {$b$} ;
        \node[loc] at (4,-1) (C2) {$b$} ;
        \node[loc] at (8,-1) (C4) {$b$} ;
        \node[loc] at (10,-1) (C5) {$a$} ;
        \node at (13.5,-1) (C7) {$~~~~~~$} ;
        
        \draw[trans] (C) -- (C0);
        \draw[trans] (C0) -- (C1); 
        \draw[trans] (C1) -- (C2); 
        \draw[trans] (C4) -- (C5); 
        
        \node[loc] at (-2, -2) (D) {$a$} ;
        \node[loc] at (0,-2) (D0) {$b$} ;
        \node[loc] at (2,-2) (D1) {$b$} ;
        \node[loc] at (4,-2) (D2) {$b$} ;
        \node[loc] at (8,-2) (D4) {$b$} ;
        \node[loc] at (10,-2) (D5) {$a$} ;
        \node at (13.5,-2) (D7) {$~~~~~~$} ;
        
        \draw[trans] (D) -- (C);
        \draw[trans] (D2) -- (D1);
        \draw[trans] (D1) -- (D0); 
        \draw[trans] (D0) -- (D); 
        \draw[trans] (D5) -- (D4);
        
        \node at (-2,-3) (E) {$~~$} ;
        \node[loc] at (6,-3) (E3) {$\#$} ;
        \node[loc] at (12,-3) (E6) {$\#$} ;
        \node[loc,dashed] at (13.5,-3) (E7) {$\{a,b\}^\omega$} ;
        
        \draw[->,rounded corners] (C2) --  (6, -1) --  (E3);
        \draw[trans] (D4) -- (C4);
        \draw[->, rounded corners] (E3) -- (10, -3) -- (D5);
        \draw[->, rounded corners] (C5) -- (12, -1) -- (E6);
        \draw[trans] (E6) -- (E7);

        \begin{pgfonlayer}{background}
          \node at (-2.2, 0) [label=left:\texttt{input} :] {};
          \node [background, fit=(C) (C7), label=left:\texttt{copy 1}:] {};
          \node [background, fit=(D) (D7), label=left:\texttt{copy 2}:] {};
          \node [background, fit=(E) (E7), label=left:\texttt{copy 3}:] {};
        \end{pgfonlayer}
        
      \end{tikzpicture}
    }
  \end{center}
  \vspace{-1em}
  \caption{Transformation $f_1$ given as 
   FO-definable transformation for the string
    $abbb\#ba\#\{a,b\}^\omega$. \label{fig:app-fo-example}}  
  \vspace{-1em}
\end{figure}

  We give the full list of $\phi^{c,d}$ here. Let
  $btw(x,y,z)=(y \prec z \prec x) \vee (x \prec z \prec y)$ be a shorthand that says that $z$ lies between $x, y$.  
 Let $btw(x,y,\gamma)=\exists z(L_{\gamma}(z) \wedge btw(x,y,z))$ for $\gamma \in \Gamma$ and 
  let $reach_{\#}(x)=\exists y(x \prec y \wedge L_{\#}(y))$ be a shorthand which says there is a $\#$ that is ahead of $x$. 
  
  \begin{enumerate}
  \item   $\phi_{dom}=\istring$,  
  \item $\phi^1_{\gamma}(x) = \phi^2_{\gamma}(x) = L_{\gamma}(x) \wedge \neg
  L_{\#}(x) \wedge \reach(x)$, since 
  we only keep the non $\#$ symbols that can ``reach'' a $\#$ in the input
  string in the first two copies.
 \item $\phi^3_{\gamma}(x) = L_{\#}(x) \vee (\neg L_{\#}(x) \wedge \neg
  \reach(x))$, since we only keep the $\#$'s, and  the infinite suffix from
  where there are no $\#$'s.
  \end{enumerate}
  The transitive closure of the output successor relation is given by : 
  \begin{enumerate}
  \item  $\phi^{1,1}_{\prec}(x,y)=(x \prec y)=\phi^{3,3}_{\prec}(x,y)$,
 \item   $\phi^{2,2}_{\prec}(x,y)= [\neg btw(x,y,\#) \rightarrow (y \prec x) ]$
 $\wedge [btw(x,y,\#) \rightarrow (x \prec y)]$
   since we are reversing the arrows within a $\#$-free block from which a $\#$
  is reachable.     
   \item $\phi^{1,3}_{\prec}(x,y)=L_{\#}(y) \wedge (x \prec y)=\phi^{2,3}_{\prec}(x,y)$
 \item $\phi^{1,2}_{\prec}(x,y)=x \prec y \wedge btw(x,y,\#)$ 
 since  each position in a $\#$-free block is related to  each position 
  in a $\#$-free block  that comes later.   
 \item $\phi^{3,2}_{\prec}(x,y)$ and $\phi^{3,1}_{\prec}(x,y)$ are given by 
  $L_{\#}(x) \wedge (x \prec y)$
  \item  $\phi^{2,1}_{\prec}(x,y)$ expresses
  that each position $x$ in the $i$-th $\#$-free block is related to  
  position $y$ appearing in the $j$-th $\#$-free block, $j > i$. Within a \#-free block, the arrows are reversed. This translates simply to \\
  $[x \prec y \wedge btw(x,y,\#)] \vee \{\neg btw(x,y,\#) \wedge [(y \prec x) \vee y=x]\}$. 

  \end{enumerate}

\section{More details for section \ref{sec:2wst}}
\label{app:2way}

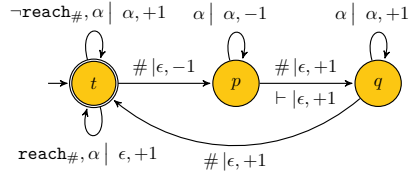
\begin{figure}[h]
  \tikzstyle{trans}=[-latex, rounded corners]
  \begin{center}
    \scalebox{0.7}{
      
      \begin{tikzpicture}[->,>=stealth',shorten >=1pt,auto, semithick,scale=.9]
        \tikzstyle{every state}=[fill=gold]
        
        \node[state] at (0,4) (A) {$p$} ;
        \node[state] at (3,4) (B) {$q$} ;
        \node[state,initial,initial where=left,initial text={}, accepting] at (-3,4) (C) {$t$} ;
        
        \path(A) edge[loop above] node
             {$\alpha\left|\begin{array}{lll} \alpha, -1 \end{array}\right.$}  (A);

             \path(A)  edge node[above] {$\# \left|
               \epsilon, +1 \right.$} node[below] {$\vdash \left| \epsilon,
               +1 \right.$} 
             (B);

             \path(B) edge[loop above] node
                  {$\alpha\left|\begin{array}{llllllll} \alpha, +1 \end{array}\right.$}
                  (B);
                  
                  \path(B) edge [bend left=40] node {$\# \left| \epsilon, +1 \right.$} (C);
                
                \path(C) edge[loop above] node
                     {$\neg \reach, \alpha\left|\begin{array}{llllllll} \alpha, +1 \end{array}\right.$} (C);
                     \path(C) edge[loop below] node
                          {$\reach, \alpha\left|\begin{array}{llllllll} \epsilon, +1 \end{array}\right.$} (C);
                          \path(C) edge node  {$\# \left| \epsilon, -1 \right.$} (A);

      \end{tikzpicture}
    }
  \end{center}
  \vspace{-1em}
  \caption{Transformation $f_1$ given as two-way
    transducers with look-ahead.   Here symbol $\alpha$ stands for
  both symbols $a$ and $b$, and the predicate $\reach$ is the lookahead
  that checks whether string contains a $\#$ in
  future. $\{t\}$ is the only Muller set. \label{fig:2wst}}  
  \vspace{-1em}
\end{figure}

First we give some examples of transition monoids for the \twst{} in Figure \ref{fig:2wst}. 
Consider the string $ab\#$. The transition monoid is obtained by using all 4 behaviours as shown below. 
Note that on reading $ab \#$, on state $t$, when we reach symbol $\#$, the look-ahead $\neg reach_{\#}$ evaluates to true. 
\[
M^{\ell r}_{ab\#}=\begin{blockarray}{cccc}
 &\matindex{t} & \matindex{q} & \matindex{p}  \\
\begin{block}{c(ccc)}
\matindex{t} & (0) & \bot & \bot\\
\matindex{q} & (0) &\bot  & \bot \\
\matindex{p} & \bot & \bot &  \bot \\
\end{block}
\end{blockarray},~ 
M^{r \ell}_{ab\#}=\begin{blockarray}{cccc}
 &\matindex{t} & \matindex{q} & \matindex{p}  \\
\begin{block}{c(ccc)}
\matindex{t} & \bot & \bot & (0)\\
\matindex{q} & \bot &\bot  & \bot \\
\matindex{p} & \bot & \bot &  \bot \\
\end{block}
\end{blockarray} 
\]

\[
M^{\ell \ell}_{ab\#}=\begin{blockarray}{cccc}
 &\matindex{t} & \matindex{q} & \matindex{p}  \\
\begin{block}{c(ccc)}
\matindex{t} & \bot & \bot & (0)\\
\matindex{q} & \bot &\bot  & \bot \\
\matindex{p} & \bot & \bot &  (0) \\
\end{block}
\end{blockarray},~ 
M^{r r}_{ab\#}=\begin{blockarray}{cccc}
 &\matindex{t} & \matindex{q} & \matindex{p}  \\
\begin{block}{c(ccc)}
\matindex{t} & (0) & \bot & \bot\\
\matindex{q} & (0) &\bot  & \bot \\
\matindex{p} & \bot & (0) &  \bot \\
\end{block}
\end{blockarray} 
\]

If we consider the string $ab\#ab\#$, 
then we can compute for instance, 
$M^{\ell r}_{ab\#ab\#}$ using 
$M^{\ell r}_{ab\#}, M^{\ell \ell}_{ab\#}$ and $M^{rr}_{ab\#}$. 
It can be checked that we obtain $M^{\ell r}_{ab\#ab\#}$
same as $M^{\ell r}_{ab \#}$.  


The transition monoid of the two-way automaton is obtained by using all the four matrices $M^{xy}_s$ for a string $s$. 
In particular, given a string $s$, we consider the matrix 
\[
M_s=\left (
 {\begin{array}{cc}
M^{\ell \ell}(s) & M^{\ell r}(s)\\
M^{r \ell}(s) & M^{ r r}(s)\\
\end{array}}
\right ) \]
as the transition matrix of $s$ in the two-way automaton. The identity element is 
\small{$\left (
 {\begin{array}{cc}
$\textbf{1}$ & $\textbf{1}$\\
$\textbf{1}$ & $\textbf{1}$\\
\end{array}}
\right )$}  
where 
$\textbf{1}$ is the $n \times n$ matrix whose diagonal entries are
$(\emptyset, \emptyset, \ldots, \emptyset)$ and non-diagonal entries are all
$\bot$'s. The matrix corresponding to the empty string $\epsilon$ is  
$\left ( {\begin{array}{cc}
$\textbf{1}$ & $\textbf{1}$\\
$\textbf{1}$ & $\textbf{1}$\\
\end{array}}
\right )$. 

Given a word $w \in \Sigma^*$, we can find a decomposition of $w$ into $w_1$ and $w_2$ such that we can write all behaviours of $w$ in terms of behaviours of $w_1$ and $w_2$, denoted by $M_{w_1 w_2}$. We enumerate the possibilities in each kind of traversal, for a successful decomposition.
\begin{enumerate}
\item For $w=w_1w_2$, and a left-left traversal, we can have 
$M^{\ell \ell}(w)=M^{\ell \ell}(w_1)$ or \\
$M^{\ell \ell}(w)=M^{\ell r}(w_1) \times (M^{\ell \ell}(w_2) \times M^{r r}(w_1))^* \times M^{\ell \ell}(w_2) \times M^{r \ell}(w_1)$. We denote these cases as  
$LL_1, LL_2$ respectively.  
\item For $w=w_1w_2$ and a left-right traversal, we have \\
$M^{\ell r}(w)=M^{\ell r}(w_1) \times (M^{\ell \ell}(w_2) \times M^{r r}(w_1))^* \times M^{\ell r}(w_2)$. We denote this case as 
$LR$.
\item For $w=w_1w_2$ and a right-left traversal, we have \\
$M^{r \ell}(w)=
M^{r \ell}(w_2) \times (M^{r r}(w_1) \times M^{\ell \ell}(w_2))^* \times M^{r \ell}(w_1)$. We denote this case as $RL$.
\item For $w=w_1w_2$ and a right-right traversal, we have\\
$M^{r r}(w)=M^{r r}(w_2)$ or \\
$M^{r r}(w)=M^{r \ell}(w_2) \times (M^{r r}(w_1) \times M^{\ell \ell}(w_2))^* \times M^{r r}(w_1) \times M^{\ell r}(w_2)$. We denote these cases as $RR_1, RR_2$ 
respectively.
\end{enumerate}

With these, for a correct decomposition of $w$ as $w_1w_2$, $M_{w_1w_2}$ is one of the four matrices  as given below for $i, j \in \{1,2\}$.
\[M_{w_1w_2}=\left (
{\begin{array}{cc}
	LL_i & LR\\
	RL & RR_j\end{array}}
\right ) \]
The multiplication of matrices $M^{xy}(w)$ using the $\times$ operator is as defined for DMA using the multiplication 
$\odot$ and addition $\oplus$ of elements in $(\{0,1\} \cup 2^Q)^n \cup \bot$.  
We define a new operation $\otimes$, which takes $M_{w_1}, M_{w_2}$, and for a ``correct'' decomposition 
of $w$ using  $w_1,w_2$, denoted $w_1 \otimes w_2$, we obtain $M_{w_1} \otimes M_{w_2}=M_{w_1 \otimes w_2}$.

It can be seen that with the $\otimes$ operation, the transition matrix 
of a string decomposed as $w_1.w_2$ correctly follows from the left-right behaviours of the 
strings $w_1, w_2$. Let $\Tt(\Aa)$ be the transition monoid of the two-way automaton $\Aa$. Let $|Q|=n$, the number of states of $\Aa$, and let there be $m$ muller sets. 
For ease of notation, we do not include the states of the look-behind automaton and set of states of the look-ahead automaton 
in the transition monoid. Thus, $\Tt(\Aa)$ contains matrices of the above form, with identity 
and the binary operation as defined above. It can be seen that recursively we can find decompositions 
of $w$ as $w_1$ followed by $w_2w_3$, and $w_1w_2$ followed by $w_3$ 
such that $M_{w}=M_{w_1} \otimes  M_{w_2w_3}= M_{w_1w_2} \otimes  M_{w_3}$. 
\begin{lemma}
	\label{app:lem1-conv}
	Let $\mathcal{A}$ be a  two-way (aperiodic) Muller automaton. 
	The mapping $h$ which maps any string $s$ to its
	transition monoid $\Tt(\Aa)$ is a morphism that recognizes $L(\Aa)$. The language $L(\Aa)$ is then aperiodic. 
\end{lemma}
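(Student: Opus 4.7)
The plan is to mirror the one-way argument of Lemma~\ref{lem:ap-rec}, adapted to the richer two-way transition monoid. I proceed in three steps.

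First, I would verify that $h$ is indeed a monoid morphism from $(\Sigma^*, \cdot, \epsilon)$ to $(\Tt(\Aa), \otimes, I)$. The image of $\epsilon$ is the matrix $M_\epsilon$ whose four blocks are the identity $\textbf{1}$, which is the identity of $\Tt(\Aa)$ by construction. The substantive step is to show $M_{w_1 w_2} = M_{w_1} \otimes M_{w_2}$ for all $w_1, w_2 \in \Sigma^*$. This is a direct case analysis on the four possible endpoint-to-endpoint traversals (left-left, left-right, right-left, right-right) on $w_1 w_2$, following exactly the decomposition cases $LL_1, LL_2, LR, RL, RR_1, RR_2$ listed just before the lemma statement: any maximal run from one side of $w_1 w_2$ to another side is obtained by interleaving a finite (Kleene-star) sequence of crossings of $w_1$ and $w_2$, and the entries in $(\{0,1\}\cup 2^Q)^n \cup \{\bot\}$ correctly combine the Muller-set information across these crossings via the $\odot$ and $\oplus$ rules inherited from Section~\ref{ap-omega}.

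Second, I would show that $h$ recognizes $L(\Aa)$. Let $w \in L(\Aa)$. Because $\Aa$ is deterministic and reads the whole input, there is a unique factorization $w = \alpha_0 \alpha_1 \alpha_2 \cdots$ into non-empty blocks on which the run of $\Aa$ progresses monotonically left-to-right, as recalled in Section~\ref{sec:2wst}. Applying Lemma~\ref{lem-gd} to the morphism $h$ and this factorization, I obtain $s, e \in \Tt(\Aa)$ with $se = s$, $e^2 = e$, and an increasing sequence $p_1 < p_2 < \cdots$ such that $h(\alpha_0 \cdots \alpha_{p_1 - 1}) = s$ and $h(\alpha_{p_i} \cdots \alpha_{p_j}) = e$ for all $i < j$. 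The set of recurring states in the run of $\Aa$ on $w$ is determined entirely by the Muller-set component of $e$ (together with $s$, since $se = s$). Now if $w' \cong_h w$, by definition we can find a matching factorization $w' = \alpha'_0 \alpha'_1 \cdots$ with equal $h$-images block by block. Since the matrices coincide, the same sequence of left-to-right crossings exists on $w'$, and the recurring Muller set is the same, so $w' \in L(\Aa)$. Hence $[w]_h \subseteq L(\Aa)$.

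Third, aperiodicity of $L(\Aa)$ follows immediately: by hypothesis $\Tt(\Aa)$ is a finite aperiodic monoid, and $h$ is a morphism recognizing $L(\Aa)$, so $L(\Aa)$ is aperiodic by definition.

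The main obstacle is the second step, specifically ensuring that the Ramsey/idempotent argument transfers faithfully to two-way runs. Unlike the one-way case, a block $\alpha_{p_i} \cdots \alpha_{p_j}$ of $w$ may be traversed by $\Aa$ via multiple left-to-right and right-to-left sweeps before exiting on the right. What saves the argument is precisely that the monoid element $e$ records the full four-behavior matrix, so all possible internal bouncing is already encoded; combined with idempotence $e^2 = e$, this ensures the run structure (and in particular the recurring Muller set) is preserved under any block-compatible substitution, which is what $h$-similarity guarantees.
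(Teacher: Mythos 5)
Your overall route is the same as the paper's: establish that $s\mapsto M_s$ is a morphism into $\Tt(\Aa)$, show that it recognizes $L(\Aa)$ via the idempotent-factorization argument of Lemma~\ref{lem-gd} (as in Lemma~\ref{lem:ap-rec}), and conclude aperiodicity of the language from aperiodicity of the monoid. Your first step is in fact cleaner than the paper's, which routes the morphism through an auxiliary monoid of ``correct breakups'' $(\Sigma^*\times\Sigma^*,R,(\epsilon,\epsilon))$ and a second map $g$; directly verifying $M_{w_1w_2}=M_{w_1}\otimes M_{w_2}$ by the $LL/LR/RL/RR$ case analysis is what that construction amounts to, and your closing remark about the idempotent $e$ encoding all internal bouncing is the right key observation. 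Two steps, however, have genuine gaps.

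First, in the recognition argument you apply Lemma~\ref{lem-gd} to the canonical left-to-right block factorization of $w$ and then assert that any $h$-similar $w'$ admits a factorization matching \emph{that} one. The definition of $\sim_h$ is existential: it only guarantees \emph{some} pair of factorizations $w=u_1u_2\cdots$ and $w'=v_1v_2\cdots$ with $h(u_i)=h(v_i)$, and the factorization of $w$ so obtained need not be the canonical block factorization. The repair is standard but goes in the other direction: start from an arbitrary matching pair, apply Lemma~\ref{lem-gd} to the common image sequence $h(u_1),h(u_2),\dots$ so as to regroup \emph{both} words identically into $UV_1V_2\cdots$ and $U'V'_1V'_2\cdots$ with $h(U)=h(U')=s$ and $h(V_i)=h(V'_i)=e$ idempotent, and then argue that acceptance is determined by the pair $(s,e)$. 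Second, your last step takes ``$\Tt(\Aa)$ is a finite aperiodic monoid'' as the hypothesis, but the paper defines aperiodicity of a two-way automaton via the languages $L_{pq}^{xy}$ (there is $m$ with $u^m\in L_{pq}^{xy}$ iff $u^{m+1}\in L_{pq}^{xy}$), not via the transition monoid. Bridging from that condition to $M_{u^m}=M_{u^{m+1}}$ is precisely the content of the paper's closing paragraph and is absent from your write-up; note that it is not immediate, since membership in $L_{pq}^{xy}$ only controls which matrix entries are $\bot$, whereas equality of matrices also requires the Muller-set components $(x_1,\dots,x_n)$ of the entries to agree.
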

\begin{proof}
The proof of Lemma \ref{app:lem1-conv} is similar to Lemma \ref{lem:ap-rec}. 
As seen in the case of Lemma \ref{lem:ap-rec}, it can then be proved that  $\Tt(\Aa)$ 
is a monoid. To define a morphism from $\Sigma^*$ to $\Tt(\Aa)$, we first define a 
mapping   $h: (\Sigma^*, \cdot, \epsilon) \rightarrow (\Sigma^* \times \Sigma^*, R, (\epsilon, \epsilon))$ as $h(s) = (s_1, s_2)$ iff $s = s_1 \otimes  s_2$
is a ``correct breakup''; that is,  $M_s = M_{s_1} \otimes M_{s_2}$. The operator $R$ is defined as $(a_1, a_2) R (b_1, b_2) = (a_1 a_2 ,b_1 b_2)$, and $h(\epsilon) = (\epsilon, \epsilon).$ It is easy to see that $h$ is a morphism, and  $(\Sigma^* \times \Sigma^*, R, (\epsilon, \epsilon))$ 
is a monoid.

Next, we define another map $g$ which works on the range of $h$. 
 $g: (\Sigma^* \times \Sigma^*, R, (\epsilon, \epsilon)) \rightarrow (T_\Aa, \otimes, \textbf{1})$ as $g((s_1, s_2))=M_{s_1 s_2}$, 
$g[(s_1,s_2) R (s_3,s_4)] = M_{s_1s_2} \otimes M_{s_3s_4}$, which by the correct breakup condition, will 
equal $M_{s_1s_2s_3s_4}$. Also, $g((\epsilon,\epsilon))=\textbf{1}$.  Again, $(T_\Aa, \otimes, \textbf{1})$  
    is a monoid and the composition of $g$ and $h$ is a morphism from $(\Sigma^*, \cdot, \epsilon)$ to 
     $\Tt(\Aa)$.

%
 Further, one can show using a similar argument to Lemma \ref{lem:ap-rec} that
 $L(\Aa)$ is recognized by the morphism $g\circ h$. 
 
 It remains to show that $L(\Aa)$ is aperiodic. By definition, we know that 
 the languages $L^{xy}_{pq}$ are aperiodic. This means that there is some $m$ such that 
  $w^m \in L^{xy}_{pq}$ iff $w^{m+1} \in L^{xy}_{pq}$ for all strings $w$. 
  This implies that  $M^{xy}(w^m)=M^{xy}(w^{m+1})$ for all strings $w$. This would in turn imply (since we have the result for all four quadrants) that 
  we obtain for the matrices of $\Tt(\Aa)$, 
    $M^m_{w}=M_{w^m}=M_{w^{m+1}}=M^{m+1}_w$, where $M^m$ stands for the iterated product using $\otimes$ of matrix $M$ $m$ times. 
  This implies that there is an $m$ such that for all matrices $M$ in $\Tt(\Aa)$,  
  $M^m=M^{m+1}$. Hence, $\Tt(\Aa)$ is aperiodic. Since $h$ is a morphism to an aperiodic monoid recognizing 
  $L(\Aa)$, we obtain that $L(\Aa)$ is aperiodic. 
  \end{proof}

\section{Proofs from Section \ref{subsec:sst}}
\label{app:sst-basics}

  \subsection{Example of SST}
\label{eg:sst}
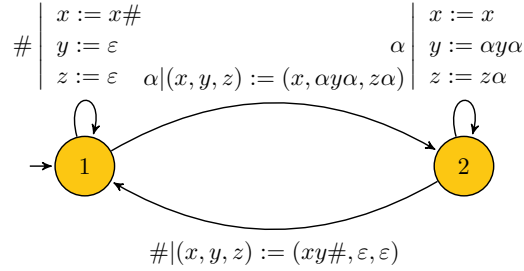
\begin{figure}[h]
  \tikzstyle{trans}=[-latex, rounded corners]
  \begin{center}
    \scalebox{0.9}{
      
      \begin{tikzpicture}[->,>=stealth',shorten >=1pt,auto, semithick,scale=.8]
        \tikzstyle{every state}=[fill=gold]
      \node[initial,state, initial text={}] at (6, 4) (A1) {$1$} ;
      \node[state] at (13,4) (B1) {$2$} ;

  \path (A1) edge [loop above] node {$\#\left|\begin{array}{llllllll}x:=x\#\\y:=\varepsilon\\z:=\varepsilon\end{array}\right.$} (A1);
  \path (A1) edge [bend left] node [above]
        {$\begin{array}{lllll} \vspace{-2mm}\\ \alpha|(x,y,z):=(x,\alpha
            y \alpha, z \alpha)\end{array}$} (B1);

  \path (B1) edge [loop above] node {$\alpha\left|\begin{array}{lllll} x:=x\\
   y:=\alpha y \alpha \\ z := z\alpha \end{array}\right.$} (A1);

  \path (B1) edge [bend left] node [below]
        {$\begin{array}{l}
            \#|(x,y,z):=(xy\#,\varepsilon,\varepsilon) \\ \vspace{-2mm}\end{array}$} (A1);
  
      \end{tikzpicture}
    }
  \end{center}
  \vspace{-1em}
  \caption{Transformation $f_1$ given as streaming string transducers with
    $F(\set{2}) = x z$ is the output associated with Muller set $\set{2}$. $\alpha$ stands for $a,b$. \label{appfig:fo-example}}  
  \vspace{-1em}
\end{figure}

  \begin{example} The transformation $f_1$ introduced  is definable by the
\sst{} in Figure~\ref{appfig:fo-example}. 
Consider the successive valuations of $x, y,$ and $z$ upon reading
the string $ab\#a^\omega$. 
\[
\begin{array}{r|cccccccccccccccccccccccccccccccccccccc}
  & & \!\!\!\!a\!\!\!\! & & \!\!\!\!b\!\!\!\! & & \!\!\!\!\#\!\!\!\! & & \!\!\!\!a\!\!\!\! & & \!\!\!\!a\!\!\!\! &\dots\\
  \hline 
  x & \varepsilon &  & \varepsilon & & \varepsilon & & baab\# & &
  baab\# & & baab\# \\
  y & \varepsilon & & aa & & baab & & \varepsilon & & aa & & aaaa \\
  z & \varepsilon & & a & & ab & & \varepsilon & & a & & aa \\
\end{array}
\]
Notice that the limit of $x z$ exists and equals
$baab\#a^\omega$. 
\end{example}

\subsection{Transition Monoid for SSTs}
 \begin{lemma}
\label{sst-monoid}
$(M_T,\times,\textbf{1})$ is a monoid, where $\times$ is defined as matrix
	multiplication and the identity element $\textbf{1}$ is the matrix with diagonal elements 
	$(\emptyset,\emptyset,\dots,\emptyset)$ and  all non-diagonal elements being 
	$\bot$.
\end{lemma}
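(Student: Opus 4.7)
The plan is to adapt the monoid proof for deterministic Muller automata in Appendix~\ref{app:basic-muller} to the richer SST matrices, which additionally track a variable-flow coordinate $k \in \Nat$ for every pair $((p,X),(q,Y))$. I would verify the three monoid axioms separately: $\textbf{1}$ is a two-sided identity, $M_T$ is closed under $\times$, and $\times$ is associative. The latter two will both follow once I establish the key identity $M_{s_1} \times M_{s_2} = M_{s_1 s_2}$ for all $s_1, s_2 \in \Sigma^*$, since closure of $M_T$ is then automatic from $M_T = \{M_s : s \in \Sigma^*\}$, and associativity of $\times$ reduces to associativity of string concatenation.

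For the identity axiom, I would exploit determinism of the SST: in any row $(p,X)$ of $M_s$ at most one entry is non-$\bot$, say $M_s[(p,X)][(q,Y)] = (k,(x_1,\ldots,x_n))$. Computing $(M_s \times \textbf{1})[(p,X)][(q,Y)]$, the only non-$\bot$ contribution comes from intermediate index $(r,Z) = (q,Y)$, giving $(k,(x_1,\ldots,x_n)) \odot (1,(\emptyset,\ldots,\emptyset)) = (k,(x_1,\ldots,x_n))$ by the $\odot$-rules, since $x_i \cup \emptyset = x_i$. The check for $\textbf{1} \times M_s$ is symmetric, exactly as in the DMA argument.

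To establish $M_{s_1} \times M_{s_2} = M_{s_1 s_2}$, I would observe that the state-level behavior of an SST is a DMA, so non-$\bot$-ness of entries and correctness of the Muller coordinate are inherited from Appendix~\ref{app:basic-muller}. The new ingredient is the flow coordinate: by definition of substitution composition, $\sigma_{s_1 s_2}(Y) = \hat{\sigma}_{s_1}(\sigma_{s_2}(Y))$, so the number of occurrences of $X$ in $\sigma_{s_1 s_2}(Y)$ equals $\sum_Z k_1^Z \cdot k_2^Z$, where $k_2^Z$ counts occurrences of $Z$ in $\sigma_{s_2}(Y)$ starting at the unique state $r$ reached from $p$ on $s_1$, and $k_1^Z$ counts occurrences of $X$ in $\sigma_{s_1}(Z)$. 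This matches the definition of the matrix product at entry $((p,X),(q,Y))$ summed over intermediate $(r,Z)$, since determinism pins down $r$ uniquely.

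The main obstacle I anticipate is ensuring that the partial addition $\oplus$, defined only on matching entries or together with $\bot$, is never invoked on two distinct nontrivial values when composing the flow coordinates. Copylessness of the transition substitutions is the property that saves the day: the induced substitution $\sigma_{s_1}$ is copyless, so the variable $X$ appears in the image of at most one variable $Z$ under $\sigma_{s_1}$. Consequently, for fixed $(p,X)$ and fixed intermediate state $r$, the sum over $Z$ in the matrix product collapses to at most one nonzero term, so $\oplus$ is only ever asked to combine a value with $\bot$, and the product is well-defined.
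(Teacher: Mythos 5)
Your overall architecture is sound and in fact more thorough than the paper's own proof: the paper verifies only the identity axiom in detail and dismisses associativity and closure with a one-line remark, whereas you reduce both to the single identity $M_{s_1}\times M_{s_2}=M_{s_1 s_2}$, which the paper only states separately (as the morphism property of $M_\bullet$) and never uses inside the lemma. That reduction is the right move here, because $\oplus$ is only partially defined, so associativity of $\times$ is not automatic from general matrix algebra. Your identity-axiom computation also goes through as written, since the collapse to the single intermediate index $(q,Y)$ is forced by the column of $\textbf{1}$, and $\alpha\odot(1,(\emptyset,\dots,\emptyset))=\alpha$ under the $\odot$ rules.

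However, two of your structural claims about the matrices are false, and the second one undermines precisely the step you flag as the main obstacle. First, a row $(p,X)$ of $M_s$ does not have at most one non-$\bot$ entry: by definition $M_s[(p,X)][(q,Y)]$ is non-$\bot$ for the unique successor state $q$ and \emph{every} variable $Y$ (with count possibly $0$); determinism pins down $q$ but not $Y$. Consequently, in $(M_{s_1}\times M_{s_2})[(p,X)][(q,Y)]$ the sum over intermediate indices $(r,Z)$ contains, for the unique $r$, one non-$\bot$ term for \emph{each} $Z\in\varsst$. Copylessness does guarantee that at most one of these terms has a nonzero count, but the remaining terms are of the form $(0,\vec{z})$, not $\bot$, so $\oplus$ is genuinely invoked on several non-$\bot$ values and your claim that it only ever combines a value with $\bot$ fails. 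The correct resolution is different: all these terms carry the same Muller tuple $\vec{z}$, because that component depends only on the states visited along the run from $p$ through $r$ to $q$ and not on the variables; hence the idempotent rules $0\oplus 0=0$, $1\oplus 1=1$, $\kappa\oplus\kappa=\kappa$ apply to the second component, while the counts add up to the single nonzero value $k_1^Z k_2^Z$. With that repair your argument is complete.
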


\begin{proof}
  Consider any matrix $M_s$ where $s \in \Sigma^*$. 
Assume the \sst{} has  $g$ variables $X_1, \dots, X_g$, and 
$m$ states $\{p_1, \dots, p_m\}$. Let there be an ordering  $(p_i, X_j) <  (p_i, X_l)$ for $j < l$
and $(p_i, X_j) <  (p_k, X_j)$ for $i < k$ used in the transition monoid.
Consider a row corresponding to some $(p,X_i)$. The only entries that are not $\perp$ in this row
are of the form $[(p,X_i)(r,X_1)], \dots, [(p,X_i)(r,X_g)]$, for some state $r$, such that there is a run 
of $s$ from $p$ to $r$, $p, r \in \{p_1, \dots, p_m\}$.  
 The $(x_1, \dots, x_n)$ component of all entries $[(p,X_i)(r,X_1)], \dots, [(p,X_i)(r,X_g)]$ 
are same. Let it be $(\kappa_1, \dots, \kappa_n)$. Let    
$[(p,X_i)(r,X_j)]=k_{ij}(\kappa_1, \dots, \kappa_n)$. 

Consider $M_s.\textbf{1}$. The $[(p,X)(r,-)]$ entries 
of the product are obtained from the $(p,X)$th row of $M_s$ and the $(r,-)$th column of $\textbf{1}$. 
The $(r,-)$th column of $\textbf{1}$ has exactly one $1(\emptyset, \dots, \emptyset)$, while 
all other elements are $\bot$. 
Let $p=p_c$ and $r=p_h$, with $c<h$ ($h<c$ is similar).
Then the $[(p,X)(r,Y)]$ entry for $X=X_i, Y=X_j$ is of the form 
$\bot+ \dots + \bot + k_{ij}(\kappa_1, \dots, \kappa_n).1(\emptyset, \dots, \emptyset)+k_{i+1~j}(\kappa_1, \dots, \kappa_n).\bot$
$+ \dots +k_{gj}(\kappa_1, \dots, \kappa_n).\bot + \dots +\bot.$
Clearly, this is equal to 
$k_{ij}(\kappa_1, \dots, \kappa_n)$. Similarly, it can be shown that the 
$[(p,X_i)(r,X_j)]$th entry of $M_s$ is preserved in   $\textbf{1}.M_s$.  
Associativity can also be checked easily. 
\end{proof}

  The mapping $M_{\bullet}$, which maps any string $s$ to its transition matrix
$M_s$,  is a morphism from $(\Sigma^*, ., \epsilon)$ to $(M_T, \times, \textbf{1})$. 
We say that the transition monoid $M_T$ of an \sst{} $T$ is  $n$-bounded if in
all entries $(j, (x_1, \dots,x_n))$  of the matrices of $M_T$, $j \leq n$.
Clearly, any $n$-bounded transition monoid is finite. 
A streaming string transducer is \emph{aperiodic} if its transition monoid is
aperiodic. 
An $\omega$-streaming string transducer with $n$ Muller acceptance sets is
1-bounded  if its transition monoid is 1-bounded.
That is, for all strings $s$, and all pairs $(p,Y)$, $(q,X)$, 
$M_s[p,Y][q,X] \in \{\bot\} \cup \{ (i, (x_1, \dots,x_n)) \mid i \leq 1\}$.

\subsection{Example of Transition Monoid}
\label{app-example}

  \begin{figure}[h]
    \begin{center}
      \begin{tikzpicture}[->,>=stealth',shorten >=1pt,auto,node distance=1cm,
          semithick,scale=0.9,every node/.style={scale=0.9}]
        \tikzstyle{every state}=[fill=gold,minimum size=1em]
        \node[state,fill=gold,accepting] at (8,-3) (C) {$q$} ;
        \node[state,fill=gold,accepting] at (2,-3) (B) {$r$} ;
        \node[state,initial,  initial where=above,initial text={},fill=gold] at (5,-3) (D) {$t$} ;
        \path(B) edge[loop left] node {$a \mid X:=Xb$} (B);
        \path(C) edge[loop right] node {$b\mid Y:=YX$} (C);
        \path(B) edge[bend left] node {$b$} (D);
        \path(D) edge[bend left] node {$b\mid X:=bY$} (B);
        \path(C) edge[bend left] node {$a\mid X:=bX$} (D);
        \path(D) edge[bend left] node {$a\mid Y:=aX$} (C);
        
        \node[state,initial,initial text={},initial where=above,fill=gold,accepting] at (12,-3) (A1) {$u$} ;
        \node[state,fill=gold,accepting] at (15,-3) (B1) {$v$} ;
        \path(A1) edge[bend left] node {$a$} (B1);
        \path(B1) edge[bend left] node {$b$} (A1);
        \path(B1) edge[loop below] node {$a$} (B1);
        \path(A1) edge[loop below] node {$b$} (A1);
        
      \end{tikzpicture}
    \end{center}
    \caption{Muller accepting set of \sst{} on the left= $\{\{q\},\{r\}\}$. Also, $F(q)=XY$, $F(r)=X$. 
      Muller accepting set for \sst{} on the right=$\{u,v\}$;$X:=X$ and $Y:=Y$ on all edges, $F(\{u,v\})=X$.}
    \label{fig:tm-ex1}
  \end{figure}
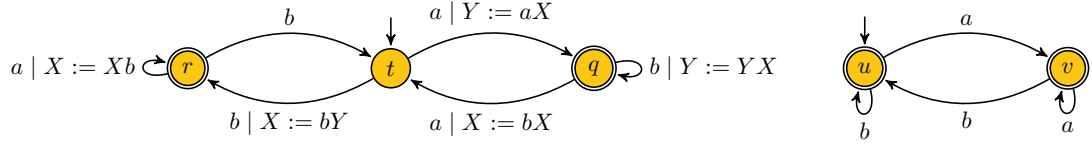
In figure \ref{fig:tm-ex1} consider the strings $ab$ and $bb$ for the automaton on the left. The transition monoids are
 
\[
  M_{ab}=\begin{blockarray}{ccccccc}
    & \matindex{(t,X)} & \matindex{(t,Y)} & \matindex{(q,X)} & \matindex{(q,Y)} & \matindex{(r,X)} & \matindex{(r,Y)} \\
    \begin{block}{c(cccccc)}
      \matindex{(t,X)} & \bot & \bot & 1,(0,0) & 2,(0,0) &\bot  &\bot \\
      \matindex{(t,Y)} & \bot & \bot  & 0, (0,0) & 0, (0,0) & \bot & \bot \\
      \matindex{(q,X)} & \bot & \bot   & \bot  & \bot & 0, (0,0) & 0, (0,0)\\
\matindex{(q,Y)} & \bot & \bot  & \bot  & \bot & 1, (0,0) &1, (0,0) \\
      \matindex{(r,X)} & 1,(0,0) & 0, (0,0)  & \bot & \bot & \bot & \bot\\
      \matindex{(r,Y)} & 0,(0,0) & 1,(0,0)  & \bot & \bot & \bot &\bot \\
                   \end{block}
  \end{blockarray} 
    \]

\[
  M_{bb}=\begin{blockarray}{ccccccc}
    & \matindex{(t,X)} & \matindex{(t,Y)} & \matindex{(q,X)} & \matindex{(q,Y)} & \matindex{(r,X)} & \matindex{(r,Y)} \\
    \begin{block}{c(cccccc)}
      \matindex{(t,X)} & 0,(0,0) & 0,(0,0) & \bot & \bot &\bot  &\bot \\
      \matindex{(t,Y)} & 1,(0,0) & 1,(0,0)  & \bot & \bot & \bot & \bot \\
      \matindex{(q,X)} & \bot & \bot   & 1,(1,0)  & 2, (1,0) &\bot  & \bot\\
\matindex{(q,Y)} & \bot & \bot  & 0,(1,0)  & 1,(1,0) & \bot & \bot  \\
      \matindex{(r,X)} & \bot & \bot  & \bot & \bot & 0,(0,0) & 0,(0,0)\\
      \matindex{(r,Y)} & \bot & \bot  & \bot & \bot & 1,(0,0) &1,(0,0) \\
                   \end{block}
  \end{blockarray} 
    \]

It can be checked that $M_{abbb}=M_{ab}M_{bb}$. Likewise, the transition monoid for $a,b$
for the automaton on the right is 

\[
  M_{a}=\begin{blockarray}{ccccc}
    & \matindex{(u,X)} & \matindex{(u,Y)} & \matindex{(v,X)} & \matindex{(v,Y)}  \\
    \begin{block}{c(cccc)}
      \matindex{(u,X)} & \bot & \bot & 1(1) & 0(1)  \\
      \matindex{(u,Y)} & \bot & \bot  & 0(1) & 1(1)  \\
      \matindex{(v,X)} & \bot & \bot   & 1(v)  & 0(v) \\
\matindex{(v,Y)} & \bot & \bot  & 0(v)  & 1(v)   \\
                   \end{block}
  \end{blockarray} 
    \]
\[
  M_{b}=\begin{blockarray}{ccccc}
    & \matindex{(u,X)} & \matindex{(u,Y)} & \matindex{(v,X)} & \matindex{(v,Y)}  \\
    \begin{block}{c(cccc)}
      \matindex{(u,X)} & 1(u) & 0(u) & \bot & \bot  \\
      \matindex{(u,Y)} & 0(u) & 1(u)  & \bot & \bot \\
      \matindex{(v,X)} & 1(1) & 0(1)   & \bot  & \bot \\
\matindex{(v,Y)} & 0(1) & 1(1)  & \bot  & \bot   \\
                   \end{block}
  \end{blockarray} 
    \]
  \begin{proposition}\label{prop:fodomain}
  The domain of an aperiodic \sst{} is FO-definable. 
\end{proposition}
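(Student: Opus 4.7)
The plan is to reduce the FO-definability of $\dom(T)$ to that of the language of the underlying Muller automaton of $T$, and then invoke the known correspondence between aperiodic $\omega$-languages and FO-definable ones. Given an aperiodic \sst{} $T = (\Sigma, \Gamma, Q, q_0, \delta, \varsst, \rho, F)$, the first observation is that $\dom(T)$ coincides exactly with $L(\Aa)$, where $\Aa = (\Sigma, Q, q_0, \delta, \dom(F))$ is the deterministic Muller automaton obtained by forgetting the variables and variable updates and retaining $\dom(F) \subseteq 2^Q$ as the Muller accepting family. This step is essentially by construction of the \sst{} semantics.

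The crux of the argument is that if the transition monoid $M_T$ is aperiodic then so is $M_\Aa$. For this I would define a natural projection $\pi : M_T \to M_\Aa$ that forgets the variable-flow component: for every $M_s \in M_T$ and every pair of states $p,q \in Q$, set $\pi(M_s)[p][q] = \bot$ if $M_s[(p,X)][(q,Y)] = \bot$ for all $X,Y \in \varsst$, and otherwise let $\pi(M_s)[p][q]$ be the $(x_1, \ldots, x_n)$-component of any non-$\bot$ entry $M_s[(p,X)][(q,Y)]$. This component depends only on the set of states witnessed on the unique run from $p$ to $q$ over $s$ and is therefore independent of $X$ and $Y$, so $\pi$ is well defined. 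A short verification using the definition of $\odot$ from (\ref{eq1:product}), which is identical in $M_T$ and $M_\Aa$ on the $(x_1, \ldots, x_n)$-coordinates, shows that $\pi$ is a surjective monoid morphism. Since aperiodicity is preserved under morphic images, aperiodicity of $M_T$ transfers to $M_\Aa$.

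Having obtained that $\Aa$ is an aperiodic deterministic Muller automaton, Theorem \ref{muller-ap} (see Appendix \ref{app:aper-equiv}) yields that $L(\Aa) \subseteq \Sigma^\omega$ is an aperiodic $\omega$-language. Finally, by the classical correspondence between aperiodic $\omega$-languages and FO-definable $\omega$-languages (Ladner, Thomas, Perrin; discussed in Section \ref{sec:introduction}), $L(\Aa) = \dom(T)$ is FO-definable.

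The only mildly delicate step is checking that $\pi$ is indeed a monoid morphism; the rest is bookkeeping. Well-definedness relies on the fact that the Muller-coordinate of an entry $M_s[(p,X)][(q,Y)]$ is determined purely by the set of states on the run from $p$ to $q$ and not by the chosen variables, and the morphism property then reduces to the observation that the $\oplus$ and $\odot$ rules on the $(x_1, \ldots, x_n)$-component in $M_T$ are identical copies of those used in the DMA transition monoid of Section \ref{ap-omega}.
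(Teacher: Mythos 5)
Your proposal is correct and follows essentially the same route as the paper: both project the \sst{} transition monoid onto its state component via a forgetful map, observe that the image is the transition monoid of the underlying Muller automaton, and conclude aperiodicity of that automaton because homomorphic images of aperiodic monoids are aperiodic. Your write-up merely makes explicit the final steps (the identification of $\dom(T)$ with $L(\Aa)$ and the appeal to the aperiodic-equals-FO correspondence for $\omega$-languages) that the paper leaves implicit in ``the result follows.''
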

 \begin{proof}
    Let $T = (\Sigma, \Gamma, Q, q_0, Q_f, \delta, \varsst, \rho, F)$
    be an aperiodic SST and $M_T$ its (aperiodic) transition monoid. Let
    us define a function $\varphi$ which associates with each matrix $M\in
    M_T$, the $|Q|\times |Q|$ Boolean matrix $\varphi(M)$ defined by 
    $\varphi(M)[p][q] =(x_1, \dots, x_n)$ iff there exist $X,Y\in\varsst$ such that 
    $M_T[p,X][q,Y]=(k,(x_1, \dots,x_n))$, with $k \geq 0$.  Clearly, $\varphi(M_T)$ is the transition
    monoid of the underlying input automaton of $T$ (ignoring the variable updates).  The result
    follows, since the homomorphic image of an aperiodic monoid is 
    aperiodic. 
 \end{proof}

\section{Proofs from Section \ref{sec:fot-2wst}}
\label{app:fot-2wst}

\begin{lemma}
\label{cor-lem}
The proof of Lemma \ref{app:lem1-conv} works when $\Aa$ is a two-way, aperiodic  Muller automaton with  star-free look-around. That is, 
$L(\Aa)$ is aperiodic. 
\end{lemma}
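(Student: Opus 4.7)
The plan is to extend the transition-monoid argument from Lemma \ref{app:lem1-conv} by absorbing the look-around information into a larger, still aperiodic, monoid. Since the look-ahead automaton $A$ and the look-behind automaton $B$ of $\Aa$ are star-free, by Theorem \ref{muller-ap} their transition monoids $M_A$ and $M_B$ are finite and aperiodic. The key observation is that the one-step transition $\delta(q, r, a, p)$ of $\Aa$ at a position $i$ on input $w$ is fully determined by the state $q$, the letter $a = w(i)$, the look-behind state $r$ (a function of the prefix $w(1)\cdots w(i-1)$), and the look-ahead state $p$ (a function of the suffix $w(i)w(i+1)\cdots$). Hence the ``pure'' two-way behavior together with the $M_B$-class of the prefix and the $M_A$-class of the suffix carries all the information needed to replay a run.

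First, I would define a refined transition monoid $M^{\la}_\Aa$ whose elements are triples $(M_u, M^B_u, M^A_u)$, where $M_u$ is the four-quadrant matrix of Section~\ref{app:2way} tracking the left-left, left-right, right-left and right-right behaviors of the underlying two-way Muller automaton (with Muller-set entries $(x_1,\dots,x_n)$), while $M^B_u$ and $M^A_u$ are the images of $u$ in $M_B$ and $M_A$. The operation is $\otimes$ on the first coordinate and the respective monoid operations on the other two. This monoid embeds into the direct product $M_\Aa \times M_B \times M_A$ of three finite aperiodic monoids, so $M^{\la}_\Aa$ is itself finite and aperiodic, since a product of finitely many aperiodic monoids is aperiodic (take the maximum of the exponents witnessing aperiodicity in each factor).

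Second, I would mirror the two-step construction used in Lemma \ref{app:lem1-conv}: define the map $h^{\la}: \Sigma^* \to \Sigma^* \times \Sigma^*$ that breaks $u$ as $u = u_1 \otimes u_2$ exactly as before, then compose with $g^{\la}(u_1,u_2) = (M_{u_1 u_2}, M^B_{u_1 u_2}, M^A_{u_1 u_2})$ into $M^{\la}_\Aa$. The composition is a morphism and inherits aperiodicity from the codomain. It then remains to show that $h^{\la}$ recognizes $L(\Aa)$. Using Lemma \ref{lem-gd}, any $w \in L(\Aa)$ admits a factorization $w = u_0 u_1 u_2 \cdots$ such that from some index on the $h^{\la}$-image of each block is an idempotent $e$ and the $\Aa$-run on each block visits precisely one Muller set $F_k$. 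If $w' = u'_0 u'_1 u'_2 \cdots$ has $h^{\la}(u_i) = h^{\la}(u'_i)$ for every $i$, then the matching $M_B$ and $M_A$ coordinates imply the look-behind/look-ahead state entering and leaving each block is the same, the matching two-way coordinate implies the two-way behavior (and recorded Muller states) inside each block is the same, and hence the run on $w'$ is also accepting with the same Muller set. Thus $[w]_{h^{\la}} \subseteq L(\Aa)$, so $L(\Aa)$ is aperiodic.

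The main obstacle is step three above: justifying that the $M_B$- and $M_A$-classes of the whole block suffice to reproduce the look-around decisions taken \emph{inside} a block during a potentially back-and-forth run. This is handled by combining determinism of $A$ and $B$ with the fact that the behavior matrices $M^B_u, M^A_u$ determine, for any fixed boundary state, the look-around state at every internal position (by further matrix decomposition of the block); together with the matching two-way Muller behavior on the block, this pins down which transitions of $\Aa$ fire, so the run on $w'$ is truly a replica of the run on $w$. Once this is established, the argument for aperiodicity of $L(\Aa)$ closes exactly as in Lemma \ref{app:lem1-conv}.
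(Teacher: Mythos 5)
Your proposal follows essentially the same route as the paper's proof: track the transition monoids of the look-behind automaton $B$, the underlying two-way Muller automaton, and the look-ahead automaton $A$ simultaneously, observe that the resulting map is a morphism into a product of finite aperiodic monoids (hence aperiodic), and conclude via block factorizations that it recognizes $L(\Aa)$. If anything, you are more explicit than the paper about the one delicate point --- that the block-level behaviour matrices must be read relative to the boundary look-behind/look-ahead states so that the look-around decisions taken inside a block are reproduced --- which the paper dispatches with ``clearly, equivalent strings are either both accepted or rejected.''
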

\begin{proof}
When we allow star-free look-around, we also have aperiodic Muller look-ahead automaton $A$ and aperiodic look-behind automata $B$
along with $\Aa$. 
Consider a string in $w \in \Sigma^{\omega}$ with a factorization $w=w_1w_2w_3 \dots$. 
Whenever we consider $M_{w_i}$ for some $w_i$ in $\Tt(\Aa)$ (recall this is the monoid for $\Aa$ without the look-around), we also
consider the transition matrix of $w_1 \dots w_i$ with respect to $B$, and 
the matrices $M_{w_{j}}, j > i$ with respect to $A$. A string $w'=w'_1w'_2 \dots$ is then equivalent to $w$ if 
the respective matrices $M_{w_i}, M_{w'_i}$ match in $\Aa$, and so do the others (prefixes upto $w_i, w'_i$ for look-behind $B$ and 
$M_{w_j}, M_{w'_j}$ for look-ahead $A$). We know that the transition monoids of $A, B$ are aperiodic. That is, there exists $m_A, m_B \in \mathbb{N}$ such that
 for all strings $y$, and all pairs of states $p,q$ in ($A$ or $B$), $y^{m_x} \in L_{pq}$ iff  
$y^{m_x+1} \in L_{pq}$ for $x \in \{A,B\}$. 

In the presence of look-around, we keep track of the transition monoids 
in $A, \Aa$ and $B$ at the same time. It can be seen that for two infinite strings $w,w'$ as above, 
the map $h$ from $\Sigma^*$ to the transition monoids with respect to $B, \Aa, A$ 
will be a morphism : this is seen by considering the respective morphisms $h_1, h_2, h_3$ where 
$h_1$ is the morphism from $\Sigma^*$ to the transition monoid of $B$ (a DFA), 
$h_2$ is the morphism from $\Sigma^*$ to the transition monoid of $\Aa$ (the \twst{}), and 
$h_3$ is the morphism from $\Sigma^*$ to the transition monoid of $A$ (a DMA). 
Clearly, equivalent strings $w, w'$ (equivalent with respect to $h$) are either both accepted or rejected 
by the $\twst_{sf}$. The aperiodicity of the combined monoid follows from the aperiodicity of the respective monoids of $B, \Aa$ and  $A$. 
Thus, $L(\Aa)$ is recognized by a morphism to an aperiodic monoid, and hence is an aperiodic language.
%
%
%
%
%
\end{proof}

\subsection{Lemma \ref{fo-behav}}
\label{app:fo-behav}
\begin{proof}
	Lets look at the underlying two-way Muller automaton of the \twst$_{sf}$ $\Aa$. Since $\Aa$ is aperiodic, so is the underlying automaton. 
	Let $s=s[1 \dots x'-1]s[x' \dots y']s[y' \dots]$ be a decomposition of $s$.  
	Let  $x,y$ be any two positions in $s$.  
	Depending on $x,y$, we have 4 cases. 
	If $x=y$ then there is a substring $s_1$ of $s$  
	such that $s_1 \in L^{rr}_{qq'}$ or $s_1 \in L^{ll}_{qq'}$. Similarly, if $x<y$, then there is a substring $s_1$ of $s$ such that 
	$s_1 \in L^{lr}_{qq'}$. Likewise if $x>y$,  
	then there is a substring $s_2$ of $s$ such that 
	$s_2 \in L^{rl}_{qq'}$. If we can characterize $L_{qq'}^{xy}$ in FO for all cases, we are done, since 
	$\psi_{q,q'}(x,y)$ will then be the disjunction of all these formulae.  
	
	Using lemma \ref{cor-lem},  
	the underlying input language  $L(\Aa)\subseteq \Sigma^{\omega}$  can be shown to be aperiodic, by constructing the morphism 
	$h: \Sigma^* \rightarrow M$ recognizing $L(\Aa)$, where $M$ is the aperiodic monoid described in lemma\ref{cor-lem}. 
	It is known 	 \cite{dg08SIWT} that every aperiodic language  $\subseteq \Sigma^{\omega}$ is FO-definable. 
			 	
			 We now show that $h$ recognizes $L_{qq'}^{xy}$. 			 	Consider $w \in L_{qq'}^{xy}$. For the morphism $h: \Sigma^* \rightarrow M$ as above, 
	let $w' \in \Sigma^+$ be such that  $h(w)=h(w')$. Then it is easy  to see that 
	$w' \in L_{qq'}^{xy}$.  	Then $h$ is a morphism to a finite aperiodic monoid 
	that recognizes $L_{qq'}^{xy}$. Hence, $L_{qq'}^{xy}$ is aperiodic and hence FO-definable. For each case, 
	$x <y, x > y, x=y$, 	let $\exists x \exists y \psi_{q,q'}(x,y)$ be the FO formula that captures $L_{qq'}^{xy}$. 
	
		For a particular assignment of positions $x, y$, it can be seen that 
	all words $u$ which have a run starting at position $x$ in state $q$, 
	to state $q'$ in position $y$ will satisfy $\psi_{q,q'}(x,y)$.  
	\end{proof}
\subsection{FOT $\subseteq$ Aperiodic \twst$_{sf}$}
\label{app:fo-wst}
\begin{definition}
A \twst{} with FO instructions ($\twst{}_{fo}$) is a tuple 
$\Aa=(\Sigma, \Gamma, Q, q_0, \delta, F)$ such that $\Sigma, \Gamma, Q, q_0$ and $F$ are as defined in section \ref{sec:2wst}
and $\delta: Q \times \Sigma \times \phi_1 \rightarrow Q \times \Gamma \times \phi_2$ is the transition function where
$\phi_1$ is a set of FO formulae  over $\Sigma$ with one free variable defining the guard of the transition, 
and $\phi_2$ is a set of FO formulae over $\Sigma$ with two free variables defining the jump of the input head. 

Given a state $q$ and position $x$ of the input string $s$, the transition $\delta(q,\phi_1)=(q',b,\phi_2)$ is enabled if 
$s \models \phi_1(x)$, and as a result $b$ is written on the output, and the input head moves to position 
$y$ such that $s \models \phi_2(x,y)$.  
 Note that the jump is deterministic, at each state $q$ and each position $x$,  the formula $\phi_2(x,y)$ is such that 
there is a unique position $y$ to which the reading head will jump.   

A $\twst{}_{fo}$ is  aperiodic if the underlying input language accepted is aperiodic. 
\end{definition}

As in the case of \twst{}, we assume that the entire input is read by 
the $\twst{}_{fo}$, failing which the output is not defined.

\begin{lemma}(\fot{} $\subseteq$ \twst$_{fo}$)
\label{foi-2wst}
Any $\omega$-transformation captured by an \fot{} is also captured by a \twst$_{fo}$. 
\end{lemma}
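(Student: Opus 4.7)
The plan is to build, from the given FOT $T = (\Sigma, \Gamma, \phi_{\dom}, C, \phi_{\nodes}, \phi_{\preceq})$, a deterministic $\twst_{fo}$ $\Aa$ that on every input $s$ with $s \models \phi_{\dom}$ enumerates the output positions $\{(v,c) : v\in\dom(s),\; c\in C,\; s \models \phi^c(v)\}$ in the total order induced by $\phi_\preceq$, emitting at each step the unique letter $\gamma$ dictated by $\phi^c_\gamma$.  The state of $\Aa$ records the copy index $c \in C$ of the most recently produced output position, while the head of $\Aa$ rests at the input position $v$ such that $(v,c)$ was that position; each transition jumps to the next output position and emits its label.

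Construction. Take $Q = \{q_0\} \cup C$, initial state $q_0$, and Muller acceptance $F = \{P \subseteq Q : \varnothing \neq P \subseteq C\}$. For each $c,d \in C$, define the strict-order formula
\[
  \phi^{c,d}_\prec(x,y) \;\equiv\; \phi^c(x) \wedge \phi^d(y) \wedge \phi^{c,d}_\preceq(x,y) \wedge \neg\phi^{d,c}_\preceq(y,x),
\]
and the first-position and immediate-successor formulas
\[
  \mathrm{first}_c(y) \;\equiv\; \phi^c(y) \wedge \neg\exists z\,\bigvee_{d \in C} \phi^{d,c}_\prec(z,y),
\]
\[
  \mathrm{succ}_{c,c'}(x,y) \;\equiv\; \phi^{c,c'}_\prec(x,y) \wedge \neg\exists z\,\bigvee_{d \in C}\bigl(\phi^{c,d}_\prec(x,z) \wedge \phi^{d,c'}_\prec(z,y)\bigr).
\]
From $q_0$, for each $(c_0,\gamma) \in C \times \Gamma$ install a transition with guard $\phi_{\dom} \wedge \exists y\,(\mathrm{first}_{c_0}(y) \wedge \phi^{c_0}_\gamma(y))$ (viewed as a one-free-variable formula ignoring its unused free variable $x$), output $\gamma$, jump formula $\mathrm{first}_{c_0}(y) \wedge \phi^{c_0}_\gamma(y)$, and next state $c_0$.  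From a state $c \in C$ at position $x$, for each $(c',\gamma) \in C \times \Gamma$ install a transition with guard $\exists y\,(\mathrm{succ}_{c,c'}(x,y) \wedge \phi^{c'}_\gamma(y))$, output $\gamma$, jump $\mathrm{succ}_{c,c'}(x,y) \wedge \phi^{c'}_\gamma(y)$, and next state $c'$.

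Correctness and obstacles. By induction on the output index one checks that the $k$-th letter produced by $\Aa$ on $s$ agrees with the label of the $k$-th output position of the FOT structure on $s$.  Because the FOT output is an $\omega$-string and $|C|$ is finite, unboundedly many input positions must host output positions, so $\Aa$'s head visits unboundedly many input positions---meeting the reads-the-whole-string requirement---and $\Omega(r) \subseteq C$ is non-empty, hence in $F$. Determinism is ensured because the labelling formulas $\{\phi^c_\gamma\}_\gamma$ can be made WLOG pairwise exclusive in $\gamma$ for each $c$, and because $\preceq^M$ admits a unique minimum and a unique immediate successor for each element. As a bonus, the underlying input language of $\Aa$ is exactly $L(\phi_{\dom})$, which is FO-definable and therefore aperiodic, so $\Aa$ already qualifies as an aperiodic $\twst_{fo}$---the fact needed to feed the next stage $\twst_{fo} \subseteq \twst_{\la}$ in the proof chain. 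The principal subtlety is writing $\mathrm{succ}_{c,c'}$ correctly: one must quantify over all intermediate copy indices $d \in C$ (a finite disjunction, so the formula remains first-order) and handle the $c = d$ diagonal case via the antisymmetry conjunct $\neg\phi^{d,c}_\preceq(y,x)$. Beyond that, the argument is a direct first-order adaptation of the Engelfriet--Hoogeboom \cite{EH01} ``walk through the output graph'' simulation.
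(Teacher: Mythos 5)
Your construction is essentially the paper's own proof: states are the copies of the \fot{} (plus your extra initial state), guards come from the label formulae $\phi^c_\gamma$, jumps come from the order formulae $\phi^{c,d}_\preceq$, the Muller condition is taken to be all (non-empty) subsets, and aperiodicity is derived from the FO-definability of the domain. Your version is in fact slightly more careful than the paper's in one respect---you explicitly extract the immediate-successor relation $\mathrm{succ}_{c,c'}$ from the order $\phi^{c,d}_\preceq$ so that the jump is genuinely deterministic, a step the paper glosses over---but the underlying argument is the same.
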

\begin{proof}
Let $T=(\Sigma, \Gamma, \phi_{dom}, C, \phi_{pos}, \phi_{\prec})$ be a 
\fot{}. We define the \twst$_{fo}$ $\Aa=(\Sigma, \Gamma, Q, q_0, \delta, F)$ such that 
$\inter{T}=\inter{\Aa}$. The states $Q$ of $\Aa$ correspond to the copies in $T$. 
So, $Q=C$. Given a state $q$ and a position $x$ of the input string,  the transition that 
checks a guard at $x$, and 
decides 
the jump to position $y$, after writing a symbol $b$ on the output is obtained 
from the formulae $is\_string$, $\phi_{b}^q(x)$ and $\phi^{q,q'}(x,y)$. 
 $is\_string$ describes the input string, $\phi_{b}^q(x)$  is an FO formula that captures the position $x$ in copy $q$ (and also asserts that 
 the output is  $b \in \Gamma$), $\phi^{q,q'}(x,y)$ is an FO formula that enables the $\prec$ relation between positions 
 $x,y$ of copies $q,q'$ respectively. In $\Aa$, this amounts to evaluating the guard $\phi_b^q(x)$ at position $x$ in state $q$, outputting $b$, and 
 jumping to position $y$ of the input in state $q'$ if  the input string satisfies $\phi^{q,q'}(x,y)$. Thus, we write the transition as 
 $\delta(q, a, \phi_b^q(x))=(q',b,\phi^{q,q'}(x,y))$. The initial state $q_0$ of $\Aa$ is the copy $c$ which has its first position $y$ 
 such that $x \nprec y$ for all positions $y \neq x$. Since the \fot{} is string-to-string, we will have such a unique copy. 
 Thus, $q_0=d$ where $d$ is a copy satisfying the formula $\exists y[first^d(y)]$. The set $F$ of Muller states of the constructed \twst$_{fo}$ is all possible subsets 
 of $Q$, since we have captured the transitions between copies (now states) correctly.

 Now we have to show that $\Aa$ is aperiodic.   This amounts to showing that there exists some integer $n$ such that for all pairs of states $p,q,$
  $v^n \in L_{pq}^{xy}$ iff 
$v^{n+1} \in L_{pq}^{xy}$ for all strings $v \in \Sigma^*$, and 
$x,y \in \{l,r\}$. Recall that the states of the automaton correspond to the copies of the \fot{}, and 
 $v^n \in L_{pq}^{xy}$ means that $v^n \models \phi^{p,q}(x,y)$. Thus we have to show that 
  $v^n \models \phi^{p,q}(x,y)$ iff $v^{n+1} \models \phi^{p,q}(x,y)$ for all 
 strings $v \in \Sigma^*$.

 Note that since the domain of an \fot{} is aperiodic, 
 for any strings $u,v,w$, there exists an integer $n$ such that 
 $uv^nw$ is in the domain of $T$ iff $uv^{n+1}w$ is.  
In particular, for any pair of positions $x,y$ in $v,w$ respectively, 
and copies $p, q$, the formula $\varphi^{p,q}(x,y)$ 
which asserts the existence of a path from position $x$ of copy $p$ 
to position $y$ of copy $q$ is such that 
 $uv^nw \models \varphi^{p,q}(x,y)$ iff 
 $uv^{n+1}w \models \varphi^{p,q}(x,y)$ (note that this is true since 
 these formulae evaluate in the same way for all strings in the domain, and the domain is aperiodic).
   This means that 
 $uv^nw \in  L_{pq}^{xy}$ iff 
$uv^{n+1}w \in  L_{pq}^{xy}$ for all $u,v,w$. In particular, for $u=w=\epsilon$, we 
obtain  $v^n  \in L_{pq}^{xy}$ iff $v^{n+1} \in  L_{pq}^{xy}$ for all 
 strings $v \in \Sigma^*$, showing that $\Aa$ is aperiodic. 
 \end{proof}

Before we show \twst$_{fo} \subseteq \twst_{sf}$, we need the next two lemmas.
\begin{lemma}
\label{lem-int-1}
Let $\Delta, \Delta'$ be disjoint subsets of $\Sigma$, and let $L \subseteq \Sigma^{\omega}$
be an aperiodic language such that each string in $L$ contains exactly one occurrence 
of a symbol from $\Delta$ and one occurrence of a symbol from $\Delta'$. Then $L$ can be written as the finite union of disjoint languages
    $R_{\ell}.a.R_{m}.b.R_{r}$ where $(a,b) \in (\Delta \times \Delta') \cup (\Delta' \times \Delta)$, $R_{\ell}, R_m \subseteq (\Sigma -\Delta -\Delta')^*$, 
    and $R_r \subseteq (\Sigma -\Delta -\Delta')^{\omega}$. Moreover, $R_{\ell}, R_m$ and $R_r$ are aperiodic. 
\end{lemma}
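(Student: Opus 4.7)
My plan is to build the decomposition directly from an aperiodic deterministic Muller automaton recognizing $L$ (which exists by Theorem \ref{muller-ap}), splitting each string according to the unique positions of its $\Delta$-symbol and $\Delta'$-symbol and according to the states of the DMA at those positions. Let $\Aa = (Q, q_0, \Sigma, \delta, F)$ be an aperiodic DMA with $L(\Aa) = L$. First, I would partition $L$ by the finitely many choices of (i) the unique symbol $a$ from one of $\Delta,\Delta'$ and the unique symbol $b$ from the other, and (ii) their relative order; this produces a finite disjoint refinement of $L$ indexed by ordered pairs $(a,b)\in(\Delta\times\Delta')\cup(\Delta'\times\Delta)$. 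It suffices to show each such piece $L_{a,b}$ admits the desired decomposition.

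Next, fixing $(a,b)$, I would further refine $L_{a,b}$ by state pairs $(q_1,q_2)\in Q\times Q$, interpreted as ``state just before reading $a$'' and ``state just before reading $b$''. Define
\[
R_\ell^{q_1} = \{u\in(\Sigma{-}\Delta{-}\Delta')^*: q_0\xrightarrow{u}q_1\text{ in }\Aa\},
\]
\[
R_m^{q_1,q_2,a}=\{v\in(\Sigma{-}\Delta{-}\Delta')^*: \delta(q_1,a)\xrightarrow{v}q_2\text{ in }\Aa\},
\]
\[
R_r^{q_2,b}=\{w\in(\Sigma{-}\Delta{-}\Delta')^\omega : \Aa\text{ starting from }\delta(q_2,b)\text{ accepts }w\}.
\]
Then I claim $L_{a,b}=\bigsqcup_{(q_1,q_2)\in Q^2}R_\ell^{q_1}\cdot a\cdot R_m^{q_1,q_2,a}\cdot b\cdot R_r^{q_2,b}$. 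Containment in both directions is immediate from the definitions and the uniqueness of the $a$- and $b$-positions; disjointness follows because each $s=uavbw\in L_{a,b}$ has a unique factorization and hence forces unique values $q_1=\delta^*(q_0,u)$ and $q_2=\delta^*(\delta(q_1,a),v)$.

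Finally, I would verify that every component language is aperiodic. For $R_\ell^{q_1}$ and $R_m^{q_1,q_2,a}$, these are intersections of languages $L_{p,q}$ of $\Aa$, which are aperiodic (by aperiodicity of the transition monoid $M_\Aa$, as in Section \ref{ap-omega}), with the trivially aperiodic language $(\Sigma{-}\Delta{-}\Delta')^*$; aperiodic languages are closed under intersection, so both are aperiodic. For $R_r^{q_2,b}$, observe that the $\omega$-language accepted by $\Aa$ with a different initial state is recognized by an aperiodic DMA (changing the initial state does not alter the transition monoid), and restricting to $(\Sigma{-}\Delta{-}\Delta')^\omega$ corresponds to considering the sub-monoid of $M_\Aa$ generated by $(\Sigma{-}\Delta{-}\Delta')^*$, which inherits aperiodicity; hence $R_r^{q_2,b}$ is aperiodic by Theorem \ref{muller-ap}.

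The main obstacle I anticipate is the last aperiodicity argument for the infinite-tail component $R_r^{q_2,b}$: the paper's definition of aperiodicity for Muller automata is via the transition monoid using entries in $(\{0,1\}\cup 2^Q)^n$, and I need to argue cleanly that changing the initial state and restricting the input alphabet yields a machine whose transition monoid is (a quotient or sub-monoid of) $M_\Aa$, and thus remains aperiodic. The other steps---partitioning, state refinement, and disjointness---are essentially bookkeeping once this is in place.
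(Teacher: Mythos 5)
Your proof is correct and follows essentially the same route as the paper's: both decompose $L$ via an aperiodic DMA recognizing $L$, indexing the finitely many pieces by the transitions (equivalently, the states and letters) at the unique $\Delta$- and $\Delta'$-positions, and derive aperiodicity of $R_{\ell}$, $R_m$, $R_r$ from the aperiodicity of the restricted automata. If anything, you are more explicit than the paper's sketch about disjointness and about why the components restricted to $(\Sigma-\Delta-\Delta')$ with a shifted initial state remain aperiodic (the submonoid argument), points the paper only asserts.
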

\begin{proof}
Let $\Aa$ be a deterministic, aperiodic  Muller automaton accepting $L$. From our assumption, it follows that 
each path in $\Aa$ from the initial state $q_0$ passes through exactly one transition labeled 
with a symbol from $\Delta$ and one transition labeled with a symbol from $\Delta'$. 
Let $(q,a,q') \in Q \times \Delta \times Q, (p, b, p')  \in Q \times \Delta' \times Q$ be two such transitions. 
Let $R_{\ell}$ consist  of all finite strings from $q_0$ to $q$, and let $R_m$ consist of all finite strings 
from $q'$ to $p$, and let $R_r$ be the set of all strings from $p'$ which continously witnesses some Muller set from some point onwards. 
Since the underlying automaton is aperiodic, it is easy to see that 
the restricted automata for $R_{\ell}, R_m, R_r$ are also aperiodic. 
Hence, $R_{\ell}, R_m$ are aperiodic languages $\subseteq \Sigma^*$ while 
$R_r$ is an aperiodic $\omega$-language. This breakup using $R_{\ell}, R_m$ and $R_r$ accounts   
for strings where the symbol from $\Delta$ occurs first, and the symbol from $\Delta'$ occurs later.

Symmetrically, we can define $R'_{\ell}, R'_m$ and $R'_r$ to be the breakup for 
strings of $L$ where a symbol of $\Delta'$ is seen first, followed by a symbol of $\Delta$. Clearly, 
 $L$ is the finite union of languages 
$R_{\ell}.a.R_{m}.b.R_{r}$ and $R'_{\ell}.a.R'_{m}.b.R'_{r}$, where 
$R_{\ell}, R'_{\ell}, R_{m}, R'_{m}, R_{r}$ and $R'_{r}$ are all aperiodic. 
\end{proof}

\begin{lemma}
\label{lem-int-2}
Let $\Delta \subseteq \Sigma$, and let $L \subseteq \Sigma^{\omega}$
be an aperiodic language such that each string in $L$ contains exactly one occurrence 
of a symbol from $\Delta$.  Then $L$ can be written as the finite union of disjoint languages
    $R_{\ell}.a.R_{r}$ where $a \in \Delta$, $R_{\ell} \subseteq (\Sigma -\Delta)^*$, 
    and $R_r \subseteq (\Sigma -\Delta)^{\omega}$. Moreover, $R_{\ell}$ and $R_r$ are aperiodic. 
\end{lemma}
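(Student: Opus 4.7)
The plan is to mimic the argument used for Lemma~\ref{lem-int-1}, but in the simpler one-marker setting. Since $L$ is aperiodic, I first fix a deterministic aperiodic Muller automaton $\Aa = (Q, q_0, \Sigma, \delta, F)$ with $L(\Aa) = L$ (such an automaton exists by Theorem~\ref{muller-ap}). The hypothesis that every $w \in L$ contains exactly one letter of $\Delta$ means that every accepting run of $\Aa$ on a word of $L$ uses exactly one transition whose label lies in $\Delta$. I will index the decomposition by this distinguished transition.

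For each transition $(q, a, q')$ with $a \in \Delta$, I define the prefix language $R_\ell^{q,a,q'}$ to be the set of finite words $u \in (\Sigma \setminus \Delta)^*$ such that $\Aa$ has a run from $q_0$ to $q$ on $u$, and the suffix language $R_r^{q,a,q'}$ to be the set of infinite words $v \in (\Sigma \setminus \Delta)^\omega$ whose run from $q'$ satisfies the Muller acceptance condition. Because $\Aa$ is deterministic and each word in $L$ has a unique $\Delta$-position, the union
\[
L \;=\; \bigcup_{\substack{(q,a,q') \\ a \in \Delta}} R_\ell^{q,a,q'} \cdot a \cdot R_r^{q,a,q'}
\]
is disjoint: any word of $L$ has a unique distinguished $\Delta$-transition in its run, which determines the triple $(q,a,q')$ and hence the summand it falls into. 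The union is finite since $Q$ and $\Delta$ are finite.

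It remains to show that each $R_\ell^{q,a,q'}$ and $R_r^{q,a,q'}$ is aperiodic. For $R_\ell^{q,a,q'}$, consider the DFA obtained from $\Aa$ by restricting the input alphabet to $\Sigma \setminus \Delta$, taking $q_0$ as the initial state and $\{q\}$ as the (single) accepting state. Its transition monoid is a homomorphic image (obtained by forgetting the $\Delta$-labelled columns of the matrices in $M_{\Aa}$) of the transition monoid of $\Aa$, and homomorphic images of aperiodic monoids are aperiodic; hence $R_\ell^{q,a,q'}$ is aperiodic. For $R_r^{q,a,q'}$, the same restriction yields a Muller automaton starting in $q'$ whose transition monoid is again a homomorphic image of $M_{\Aa}$, so Theorem~\ref{muller-ap} gives that $R_r^{q,a,q'}$ is aperiodic.

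The only nontrivial point is this aperiodicity-inheritance step, where I need to argue that restricting the input alphabet (and changing the initial/accepting structure) of an aperiodic Muller automaton yields an aperiodic recognizer for the restricted language; this is handled by observing that the restricted transition monoid is a quotient/submonoid of $M_{\Aa}$ and invoking Theorem~\ref{muller-ap} in the opposite direction. Disjointness and the partition structure are then immediate from determinism of $\Aa$ and the single-occurrence hypothesis.
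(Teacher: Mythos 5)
Your proposal is correct and follows essentially the same route as the paper: the paper proves Lemma~\ref{lem-int-2} by the same argument as Lemma~\ref{lem-int-1}, namely fixing a deterministic aperiodic Muller automaton for $L$, indexing the decomposition by the unique $\Delta$-labelled transition $(q,a,q')$ of the accepting run, and inheriting aperiodicity of $R_\ell$ and $R_r$ from the restricted automata. Your only cosmetic deviation is describing the restricted transition monoid as a homomorphic image rather than a submonoid of $M_\Aa$, but either observation yields the needed aperiodicity.
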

The proof of Lemma \ref{lem-int-2} is similar to that of Lemma \ref{lem-int-1}.

\begin{lemma}($\twst_{fo} \subseteq \twst_{sf}$)
\label{2wstfo-2wst}
An $\omega$-transformation captured by an aperiodic  $\twst_{fo}$ is also captured by an aperiodic $\twst_{sf}$. 
\end{lemma}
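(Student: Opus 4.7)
The plan is to transform an aperiodic $\twst_{fo}$ $\Aa=(\Sigma,\Gamma,Q,q_0,\delta,F)$ into an equivalent aperiodic $\twst_{\la}$ $\Aa'$ by replacing each FO-guarded, FO-jumped transition of $\Aa$ by a combination of star-free look-around checks together with a sequence of unit-step moves in $\Aa'$. The translation rests on Theorem~\ref{thm:fo-aperiodic}: every FO-definable property of (finite or infinite) strings corresponds to an aperiodic language, and can therefore be recognized by an aperiodic look-ahead or look-behind automaton.

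For each transition $\delta(q,a,\phi_1(x))=(q',b,\phi_2(x,y))$ of $\Aa$, the FO guard $\phi_1(x)$ (one free variable) is first translated into aperiodic look-around by the standard Hintikka-type decomposition: any FO formula $\phi_1(x)$ with one free variable is equivalent to a finite boolean combination of products $\phi^{\mathrm{bh}}(x)\wedge\phi^{\mathrm{ah}}(x)$, where $\phi^{\mathrm{bh}}$ quantifies only over positions $\le x$ and $\phi^{\mathrm{ah}}$ only over positions $\ge x$. These translate directly into aperiodic look-behind and look-ahead properties that $\Aa'$ can test at its current head position.

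The FO jump $\phi_2(x,y)$ is then replaced by a walk. Applying the same decomposition to the one-free-variable formulas $\exists y (y<x\wedge\phi_2(x,y))$, $\phi_2(x,x)$, and $\exists y (y>x\wedge\phi_2(x,y))$ lets $\Aa'$ determine, at the source position, the direction $\mathrm{dir}\in\set{-1,0,+1}$ of the jump via look-around. After writing $b$, $\Aa'$ enters an auxiliary state $(q',\delta,\mathrm{dir})$ and takes one unit step in direction $\mathrm{dir}$; subsequently it walks step by step in direction $\mathrm{dir}$, testing at every position $z$ a star-free look-around formula $\psi_\delta(z)$ designed to recognize that $z$ is the target of the jump that has just been fired. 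By determinism of $\Aa$, the first $z$ satisfying $\psi_\delta$ along the walk is the intended target $y$, and at that point $\Aa'$ re-enters state $q'$ and resumes simulation.

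Aperiodicity of $\Aa'$ follows because every look-around automaton used is built from an FO formula via Theorem~\ref{thm:fo-aperiodic} and is therefore aperiodic, and because the two-way skeleton of $\Aa'$ is a bounded enlargement of that of $\Aa$ by unit-step auxiliary walking states whose behaviour is entirely controlled by these aperiodic look-arounds; the transition monoid of $\Aa'$ is thus a composition of aperiodic components and remains aperiodic. The main obstacle is the correct design of $\psi_\delta(z)$: the formula $\phi_2(x,y)$ is a binary relation between the source $x$ and the target $y$, but $\psi_\delta(z)$ must depend only on the current walking position $z$. The technical core of the proof is to decompose $\phi_2(x,y)$ into a direction-determining component (at $x$), a local target-shape component (at $z$, obtained essentially by existentially quantifying $x$ in the walk direction), and incrementally verifiable intermediate conditions along the walk, and then to argue that by determinism the first-hit criterion picks out the correct target even when targets of different sources, or of different transitions, are interleaved in the input.
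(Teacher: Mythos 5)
Your overall architecture matches the paper's: translate the guard $\phi_1(x)$ into aperiodic look-behind/look-ahead via a splitting of the one-free-variable formula into components living to the left and to the right of $x$ (the paper does this through Lemma~\ref{lem-int-2}, decomposing the marked language into finitely many disjoint $R_{\ell}\, a\, R_r$ with $R_\ell, R_r$ aperiodic), determine the direction of the jump by testing $\exists y(y \sim x \wedge \phi_2(x,y))$ for $\sim\ \in \{\prec,=,\succ\}$, and then replace the jump by a unit-step walk. Up to that point the two arguments coincide.

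The gap is in how the walk recognizes the target. You propose a formula $\psi_\delta(z)$ depending only on the current position $z$, ``obtained essentially by existentially quantifying $x$,'' and claim that by determinism the first $z$ satisfying $\psi_\delta$ is the intended target. This is false in general: existentially quantifying the source destroys the anchor to the actual position $x_0$ the jump was fired from, and the first $z$ with $\exists x'(x'\succ z\wedge \phi_2(x',z))$ may be the target of some \emph{other} source $x'$ lying between $z$ and $x_0$. For instance, let $\phi_2(x,y)$ say ``$y$ is the nearest $a$ to the left of $x$ if $L_c(x)$, and the second-nearest $a$ if $L_d(x)$''; on the string $a\,a\,c\,d$ the jump from position $4$ targets position $1$, but walking left your test already fires at position $2$ (witnessed by $x'=3$). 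Determinism of $\Aa$ gives uniqueness of the target for each fixed source; it says nothing about targets of distinct sources not interleaving. The paper closes exactly this hole with Lemma~\ref{lem-int-1}: it writes the two-marked-position language of $\phi_2$ as a finite disjoint union $R'_{\ell}(a,1,0)R'_m(b,0,1)R'_r$ with aperiodic $R_\ell, R_m, R_r$, and during the leftward walk it simulates in its finite control the \emph{reverse} of the automaton for the middle component $R_m$ (aperiodicity being closed under reversal); only when that simulation accepts \emph{and} the look-behind $R_\ell$ and look-ahead $R_m b R_r$ hold does it stop. Carrying the reverse-$R_m$ state is precisely the source-anchoring information your position-local formula $\psi_\delta(z)$ cannot encode, so the ``incrementally verifiable intermediate conditions'' you allude to are not an implementation detail but the missing core of the construction; without them the first-hit criterion is unsound.
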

\begin{proof}
To prove this, we show that Aperiodic $\twst{}_{fo} \subseteq$ Aperiodic  $\twst{}_\la$. Let $\Tt=(\Sigma, \Gamma, Q, q_0, \delta, F)$  be an aperiodic  $\twst{}_{fo}$. We define an aperiodic  $\twst{}_{sf}$ (with star free look around)  $(T, A, B)$ where 
$T=(\Sigma, \Gamma, Q, q_0, \delta, F)$ capturing the same transformation. A transition of the 
$\twst{}_{fo}$ is of the form $\delta(q,a, \varphi_1)=(q', z, \varphi_2)$. $\varphi_1(x)$ is an FO formula
that acts as the guard of the transition, while $\varphi_2(x,y)$ is an FO formula that deterministically decides the jump 
to a position $y$ from the current position $x$. The languages of these formulae, $L(\varphi_1)$ and $L(\varphi_2)$ are aperiodic, and 
one can construct aperiodic Muller automata accepting $L(\varphi_1)$ and $L(\varphi_2)$. From Lemma \ref{lem-int-2}, we can write 
$L(\varphi_1)$  as a finite union of disjoint languages $R'_{\ell}.(a,1).R'_{r}$ with $R'_{\ell} \subseteq [\Sigma \times \{0\}]^*$ and 
$R'_{r} \subseteq [\Sigma \times \{0\}]^{\omega}$. It is easy to see that one can construct aperiodic automata 
accepting $R_{\ell}$ and $R_{m}$, the projections of $R'_{\ell}$ and $R'_{m}$ on $\Sigma$.  

We have to now show how to simulate the jumps of $\varphi_2$ by moving one cell at a time. First we augment 
$\varphi_2(x,y)$ in such a way that we know whether $y< x$ or $y >x$ or $y=x$. This is done by rewriting 
$\varphi_2(x,y)$ as $\exists y (y \sim x\wedge \varphi_2(x,y))$ where $\sim \in \{<, =, >\}$.  
Clearly, the new formula is also FO, and the language of the formula is aperiodic. If  $y < x$, by Lemma \ref{lem-int-1},  
 the language of $\varphi_2(x,y)$ can be written as 
 $R'_{\ell}(a,1,0) R'_m (b,0,1) R'_r$, where $R'_{\ell}, R'_m \subseteq [\Sigma \times \{0\} \times \{0\}]^*$ and $R'_r \subseteq [\Sigma \times \{0\} \times \{0\}]^{\omega}$. Let $R_{\ell}, R_m$ and $R_r$ be the projections of $R'_{\ell}, R'_m$ and $R'_r$ to $\Sigma$. We can construct an aperiodic look-behind automata 
 for $R_{\ell} a R_m b$ and an aperiodic look ahead Muller automaton for $R_r$. This is possible since 
 $R_{\ell},R_m $ are aperiodic.  To walk cell by cell, instead of jumping, 
 the $\twst_{sf}$ does the following. (1) construct the aperiodic automaton that accepts the reverse    
of $R_m$ (this is possible since the reverse of an aperiodic language is aperiodic), (2) simulate this reverse automaton on each transition, and remember 
the state reached in this automaton in the finite control, while  moving left cell by cell each time, (3) when an accepting state is reached in the reverse automaton, check the look-behind $R_{\ell}$, and look-ahead $R_m b R_r$. Note that there is an aperiodic look-ahead Muller automaton 
for $R_m b R_r$. If indeed at the position where we are at an accepting state of the reverse of $R_m$ automaton, 
the look-ahead and look-behind are satisfied, then we can stop moving left. 

A similar construction works when $y>x$. Clearly, each transition of the aperiodic $\twst_{fo}$ can be simulated by 
a $\twst_{sf}$ with star-free look around. The aperiodicity of $\twst_{sf}$ follows from the fact 
the input language accepted by the $\twst_{sf}$ is same as that of the $\twst_{fo}$, and 
is aperiodic.

\end{proof}

\section{Proofs from Section \ref{sec:sst-fot}}
\label{app:varflow}
\begin{proposition}(FO-definability of variable flow)
\label{prop:foflow}
    Let $T$ be an aperiodic,1-bounded \sst{} $T$ with set of variables $\varsst$. 
    For all variables $X,Y\in \varsst$, there exists an FO-formula
    $\phi_{X\flows Y}(x,y)$ with two free variables such that,
    for all strings $s\in dom(T)$ and any two positions $i\leq j\in dom(s)$, 
    $s\models \phi_{X\flows Y}(i,j)$ iff $(q_i,X)\flows^{s[i{+}1{:}j]}_1
    (q_j,Y)$, where $q_0\dots q_n \dots$ is the accepting run of $T$ on $s$.
\end{proposition}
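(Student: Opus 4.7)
The plan is to express the flow relation as a Boolean combination of three FO-definable conditions: the run is in state $p$ at position $x$, the run is in state $q$ at position $y$, and the infix $s[x{+}1{:}y]$ lies in the (aperiodic) class of the transition monoid $M_T$ witnessing a $(1,\cdot)$-entry from $(p,X)$ to $(q,Y)$. Then $\phi_{X\flows Y}(x,y)$ is obtained as a finite disjunction over such triples $(p,q,m)$.

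First, since $T$ is aperiodic and $1$-bounded, the transition monoid $(M_T,\times,\mathbf{1})$ is finite and aperiodic, and the map $s\mapsto M_s$ is a morphism $\Sigma^{*}\to M_T$. For every $m\in M_T$ the class $L_m=\{u\in\Sigma^{*}:M_u=m\}$ is recognised by this aperiodic morphism, hence is an aperiodic language; by Theorem~\ref{thm:fo-aperiodic}, each $L_m$ is defined by some FO sentence $\chi_m$. Independently, the underlying input Muller automaton of $T$ is aperiodic (by the homomorphic projection of $M_T$ used in Proposition~\ref{prop:fodomain}), so for each state $p\in Q$ the prefix language $R_p=\{w:\delta^{*}(q_0,w)=p\}$ is aperiodic and therefore defined by an FO sentence $\xi_p$. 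Relativising $\xi_p$ to the initial segment up to $x$ yields an FO formula $\phi_p(x)$ such that $s\models\phi_p(i)$ iff $q_i=p$.

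The main technical step is to relativise each sentence $\chi_m$ to a two-variable formula $\chi_m(x,y)$ whose intended semantics is $s\models\chi_m(i,j)$ iff $s[i{+}1{:}j]\in L_m$. This is the standard FO-locality argument: replace every $\exists z\,\varphi$ in $\chi_m$ by $\exists z\,(x\prec z\wedge z\preceq y\wedge\varphi)$, similarly for $\forall$, and keep the atomic formulas $L_a(z)$ and $z_1\preceq z_2$ unchanged (they already speak only of the interval once quantifiers are relativised). With these ingredients, set
\[
  \phi_{X\flows Y}(x,y) \;\rmdef\; x\preceq y \;\wedge\; \bigvee_{(p,q,m)\in S}\bigl(\phi_p(x)\wedge\phi_q(y)\wedge\chi_m(x,y)\bigr),
\]
where $S$ ranges over all triples with $m[(p,X)][(q,Y)]=(1,\alpha)$ for some $\alpha$. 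The edge case $i=j$ is covered by taking $m=\mathbf{1}$, whose only $(1,\cdot)$-entries are on the diagonal, forcing $p=q$ and $X=Y$.

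The main obstacle is justifying the FO-relativisation carefully, i.e.\ checking that $\chi_m(x,y)$ faithfully expresses membership of the infix in $L_m$; this rests on the standard locality of FO over strings together with the fact that $M_\bullet$ is a monoid morphism, so the matrix of $s[i{+}1{:}j]$ depends only on the letters and order predicate restricted to that interval. Everything else is bookkeeping: the disjunction is finite because $Q$, $\varsst$, and $M_T$ are finite, and only the aperiodicity of $M_T$ (not any richer structure) is used to invoke Theorem~\ref{thm:fo-aperiodic}.
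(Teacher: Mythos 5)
Your proof is correct, and it reaches the same final formula shape as the paper (a finite disjunction of ``state at $x$'' $\wedge$ ``state at $y$'' $\wedge$ ``relativised FO sentence for the infix''), but it establishes the crucial FO-definability of the infix condition by a genuinely more direct route. The paper fixes the pair $((p,X),(q,Y))$, builds an auxiliary nondeterministic automaton over subsets of $Q\times\varsst$ that tracks which state--variable pairs are reachable by flow, and then separately proves that this powerset automaton is aperiodic by lifting the aperiodicity of $M_T$ through its transition monoid. You instead observe that $s\mapsto M_s$ is already a morphism onto the finite aperiodic monoid $M_T$ (finiteness needs the $1$-boundedness, which you correctly invoke), so each preimage class $L_m$ --- and hence the union of those classes whose $[(p,X)][(q,Y)]$-entry is $(1,\cdot)$ --- is recognised by an aperiodic morphism and is FO-definable by Theorem~\ref{thm:fo-aperiodic}; no auxiliary automaton or second aperiodicity proof is needed. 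This buys a shorter argument at no cost in generality; the paper's construction is essentially inherited from the finite-word precursor and carries extra machinery. Two minor points of bookkeeping: your $\phi_p(x)$ is weaker than the paper's $\phi_q(x)$ of Proposition~\ref{prop:fostates} (you do not re-verify the Muller recurrence condition), but since the proposition quantifies only over $s\in\dom(T)$ and the run is deterministic, checking that the prefix reaches $p$ suffices; and your treatment of $i=j$ via $M_\epsilon=\mathbf{1}$ is consistent with the paper's convention that the diagonal of the identity matrix carries multiplicity $1$.
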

Let $X\in\varsst$, $s\in\dom(T)$, $i\in \dom(s)$. Let $P \in 2^Q$ be a Muller accepting set with output $F(P)=X_1 \dots X_n$. Let 
$r = q_0\dots q_{n}\dots$ be an accepting run of $T$ on $s$. Then 
there is a position $j$ such that $\forall k >j$, only states in $P$ are visited. 
On all transitions beyond $j$, $X_1, \dots, X_{n-1}$ remain unchanged, 
while $X_n$ is updated as $X_nu$ for some $u \in (\mathcal{X} \cup \Gamma)^*$.
We say that the pair $(X,i)$ is \emph{useful} if the content of
variable $X$ before reading $s[i]$ will be part of the output after
reading the whole string $s$. 
Formally, for an accepting run $r = q_0\dots q_{j}q_{j+1}\dots$, such that 
only states from the Muller set $P$ are seen after position $j$, we say that
  $(X,i)$ is useful for $s$ if $(q_{i-1}, X)\flows^{s[i{:}j] }_1(q,X_k)$ for some variable $X_k \in F(P)$, $k \leq n$, or 
  if $(q_{i-1}, X)\flows^{s[i{:}\ell] }_1(q,X_n)$, with $q \in P$ and $\ell >j$.  
  Thanks to Proposition \ref{prop:foflow}, this property is FO-definable.
  
  \begin{proposition}\label{prop:contribution}
    For all $X\in\varsst$, there exists an FO-formula
    $\contribute_X(i)$ s.t. for all strings $s\in dom(T)$ and all
    positions $i\in \dom(s)$, 
    $s\models \contribute_X(i)$ iff $(X,i)$ is useful for string $s$.
\end{proposition}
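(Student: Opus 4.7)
The plan is to express the definition of usefulness directly in $\FO$ by combining three building blocks that are already (or are noted as being) $\FO$-definable:
(i) the flow formulae $\phi_{X\flows Y}(x,y)$ from Proposition \ref{prop:foflow};
(ii) the state formulae $\phi_q(x)$ (as mentioned in the paper and detailed in Proposition~\ref{prop:fostates} in the appendix), asserting that the run of $T$ is in state $q$ after reading the prefix ending at position $x$;
(iii) the recurrence formulae $\psi^{Rec}_P$ asserting that the Muller set $P$ is precisely the set of states visited infinitely often in the accepting run.

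First, I will define the auxiliary formula
\[
\phi_P(x) \;\rmdef\; \bigvee_{q\in P} \phi_q(x),
\]
which captures that after position $x$ the run is in some state of $P$, and a formula $\pred(x,y) \rmdef y \prec x \wedge \neg \exists z\,(y \prec z \wedge z \prec x)$ identifying the predecessor position (which exists in all string models for $x$ not the first position).

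Then, for each Muller set $P \in \dom(F)$ with output $F(P) = X_1\cdots X_n$, I will build the formula
\[
\Phi_{P,X}(i) \;\rmdef\; \psi^{Rec}_P \;\wedge\; \exists j\exists y\, \Bigl[ \pred(i,y) \wedge i \preceq j \wedge \forall k\,(j \prec k \to \phi_P(k)) \wedge \Theta_{P,X}(y,j) \Bigr],
\]
where
\[
\Theta_{P,X}(y,j) \;\rmdef\; \Bigl(\bigvee_{m=1}^{n} \phi_{X \flows X_m}(y,j)\Bigr) \;\vee\; \exists \ell\,\Bigl(j \prec \ell \wedge \phi_P(\ell) \wedge \phi_{X \flows X_n}(y,\ell)\Bigr).
\]
The two disjuncts inside $\Theta_{P,X}$ correspond exactly to the two cases in the definition of useful: flow into any $X_m \in F(P)$ at the position $j$ from which the run is confined to $P$ (where $X_m$ for $m<n$ is preserved by the identity updates inside $P$, and $X_n$ only ever grows by append), and flow into the growing variable $X_n$ at some later position $\ell>j$ with $q_\ell\in P$. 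Setting
\[
\contribute_X(i) \;\rmdef\; \bigvee_{P\in \dom(F)} \Phi_{P,X}(i)
\]
gives the desired formula; correctness is then immediate from the definition of usefulness together with the semantics of the three ingredients listed above.

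I do not expect a serious obstacle: all the heavy lifting is done by Proposition \ref{prop:foflow}, which in turn uses aperiodicity and $1$-boundedness of $T$. The only mild subtlety is the boundary case $i=1$, for which there is no predecessor; this is harmless since the initial value of every variable is the empty string and thus cannot contribute to the output, so $\contribute_X(1)$ can be taken to be $\false$ (or handled by simply restricting the existential $\exists y$ on $\pred(i,y)$, which already forces $i$ to have a predecessor).
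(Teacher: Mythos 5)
Your proposal is correct and follows essentially the same route as the paper: both reduce usefulness to the FO-definability of variable flow (Proposition~\ref{prop:foflow}) combined with the state formulae $\phi_q$ and the recurrence formulae $\psi^{Rec}_P$, quantifying existentially over a position from which the run is confined to the Muller set $P$ and asserting flow of $X$ into some variable of $F(P)$ there. The only differences are cosmetic: you spell out the two cases of the definition (flow into some $X_m$ at the confinement point versus flow into $X_n$ at a later position) and handle the $i-1$ indexing via an explicit predecessor formula, whereas the paper collapses both cases into a single existential over any position satisfying $\psi^{Rec}_P$.
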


 Proofs of propositions \ref{prop:foflow} and \ref{prop:contribution} are below.

\subsection{Proof of Proposition \ref{prop:foflow}}

First, we show that reachable states in  accepting runs of aperiodic \sst{} are
FO-definable:

\begin{proposition}
	\label{prop:fostates}
	Let $T$ be an aperiodic \sst{} $T$. For all states $q$, there exists an
	FO-formula $\phi_q(x)$ such that for all strings $s\in \Sigma^{\omega}$, for
	all positions $i$, $s\models \phi_q(i)$ iff $s\in dom(T)$ and the state of the (unique)
	accepting run of $T$ before reading the $i$-th symbol of $s$ is $q$. 
\end{proposition}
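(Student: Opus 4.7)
The plan is to reduce the claim to FO-definability of a finite-word reachability language for the underlying Muller automaton of $T$, and then to transport the resulting formula into the infinite-string signature by standard quantifier relativization. To begin, I would extract the underlying deterministic Muller automaton $\Aa_T = (\Sigma, Q, q_0, \delta, \dom(F))$ of $T$ by forgetting variable updates. As in the proof of Proposition~\ref{prop:fodomain}, the map sending each transition matrix $M_s \in M_T$ to the smaller matrix retaining only reachability and Muller-set data is a monoid morphism, and its image is the transition monoid of $\Aa_T$. Since homomorphic images of aperiodic monoids are aperiodic, $\Aa_T$ is an aperiodic Muller automaton, and hence $\dom(T) = L(\Aa_T)$ is FO-definable by Theorem~\ref{muller-ap}; let $\phi_{\dom}$ be an FO-sentence defining it.

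Next, for each state $q \in Q$, I would consider the finite-word reachability language $L_{q_0,q} \subseteq \Sigma^*$ of strings $u$ that drive $\Aa_T$ from $q_0$ to $q$. Its transition monoid is obtained from that of $\Aa_T$ by further projecting away the Muller coordinates, so aperiodicity is preserved once more. By Theorem~\ref{thm:fo-aperiodic}, $L_{q_0,q}$ is FO-definable over finite words; fix an FO-sentence $\psi_{q_0,q}$ defining it, with the convention that $\psi_{q_0,q_0}$ holds on the empty word so that the boundary case $i=1$ is covered automatically.

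Then I would lift $\psi_{q_0,q}$ to an infinite-string formula $\widetilde{\psi}_{q_0,q}(x)$ with one free variable by the usual relativization: replace every $\exists y.\,\alpha$ by $\exists y.(y \prec x \wedge \alpha)$, every $\forall y.\,\alpha$ by $\forall y.(y \prec x \to \alpha)$, and leave the atomic subformulas $L_a(y)$ and $y \preceq z$ untouched. For any $s \in \Sigma^\omega$ and position $i \in \dom(s)$, the substructure of $\struc{s}$ induced by the positions strictly below $i$ is isomorphic to the string model of $s[1..i-1]$, so $s \models \widetilde{\psi}_{q_0,q}(i)$ iff $s[1..i-1] \in L_{q_0,q}$, iff the unique run of $\Aa_T$ on $s$ is in state $q$ just before reading $s[i]$. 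Setting $\phi_q(x) \rmdef \phi_{\dom} \wedge \widetilde{\psi}_{q_0,q}(x)$ yields the required formula, using determinism of $\Aa_T$ so that for every $s \in \dom(T)$ the accepting run exists and is unique.

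The main obstacle I anticipate is purely bookkeeping: carefully threading aperiodicity through the chain of monoid morphisms from $M_T$ down to the transition monoid of $\Aa_T$ viewed as a plain DFA, so that Theorem~\ref{thm:fo-aperiodic} can be invoked for the finite-word reachability language and Theorem~\ref{muller-ap} for $\dom(T)$. The relativization step, though syntactically fiddly, is entirely standard and introduces no new technical content.
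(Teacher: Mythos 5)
Your proof is correct and follows essentially the same route as the paper's: both rest on the aperiodicity of the underlying deterministic Muller automaton of $T$, the FO-definability (over finite words) of its reachability languages, and relativization of the resulting formula to the positions before $x$. The only difference is cosmetic --- where you conjoin the closed domain sentence $\phi_{\dom}$, the paper instead encodes membership in $\dom(T)$ by explicitly decomposing the accepting run into a relativized middle segment $[\psi^L_{q,p}]_{x\prec y}$ and a recurrent Muller suffix $\psi^{Rec}_P$ guarded after a position $y$; both versions are valid.
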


\begin{proof}
	Let $A$ be the underlying (deterministic) aperiodic, Muller automaton of the SST $T$. Since $T$
	is aperiodic, so is $A$. For all states 
	$q$, let $L_q$ be the set of strings $s \in \Sigma^+$ such that there exists a run of $T$ on
	string $s$ that ends in state $q$. Clearly, $L_q$ can be defined by some aperiodic
	finite state automaton $A_q$ obtained by setting the set of final states of $A$ to
	$\{q\}$. Here, $A_q$ is interpreted as a DFA. 
	Therefore $L_q$ is definable by some FO-formula $\psi^L_q$. 
	Let $L_{q,p}$ be the set of strings that have a run in $T$ from state $q$ to state $p$.
	This also is obtained from $T$ by considering $q$ as the start state and $p$ as 
	the accepting state. Let $\psi^L_{q,p}$ be the FO formula which captures this.

	For $P \in 2^Q$, let $R_P$ be the set of strings $s \in \Sigma^{\omega}$ such that there exists a run of
	$T$ on $s$ from some state $p \in P$, which stays in the set of states $P$, and all states of $P$ are witnessed
	infinitely often.  
	Clearly, $u\in dom(T)$
	iff there exists $p\in P$ with $v_1\in L_q, v_2 \in L_{q,p}$ and $v_3\in R_P$  such that $u=v_1v_2v_3$.
	To capture $v_3$, we have the FO-formula $\psi^{Rec}_P$ 
	
	$\bigvee_{p \in P} L_p(x) \wedge 
	\forall y(y \geq x \rightarrow \bigvee_{q \in P} L_q(y)) 
	\wedge \forall y \exists k_1 \dots \exists k_{|P|}(k_i \geq y \wedge \bigwedge_{i \neq j}(k_i \neq k_j) \wedge  
	\bigwedge_{q_i \in P} L_{q_i}(k_i))$.    
	
	Then, $\phi_q(x)$ is defined as
	$$
	\phi_q(x) = [\psi^L_q]_{\prec x} \wedge [\psi^L_{q,p}]_{x \prec y} \wedge [L_p(y) \wedge \psi^{Rec}_P]_{y\preceq}
	$$
	where $[\psi^L_q]_{\prec x}$ is the formula $\psi^L_q$ in which 
	all quantifications of any variable $z$ is guarded by $z\prec x$,
	$[\psi^L_{q,p}]_{x \prec y}$ is the formula where all variables 
	lie in between $x,y$, and finally, 
	$[\psi^{Rec}_P]_{y\preceq}$ is the formula $\psi^{Rec}_P$ in which
	all quantifications of any variable $z$ is guarded by $y\preceq z$.
	Therefore, $s\models
	\phi_q(i)$ iff $s[1{:}i)\in L_q$, 
	$s[i{:}j)\in L_{q,p}$
	and $s[j{:}\infty]\in R_P$.  %
	
\end{proof}
Now we start the proof of Proposition \ref{prop:foflow}.    
\begin{proof}
	For all states $p,q\in Q$, 
	let $L_{(p,X)\flows (q,Y)}$ be the language of strings $u$ such that $(p,X)\flows_1^{u}
	(q,Y)$. We show that $L_{(p,X)\flows (q,Y)}$ is an aperiodic
	language. It is indeed
	definable by an aperiodic non-deterministic automaton $A$ that keeps track of flow information
	when reading $u$. It is constructed from $T$ as follows.
	Its state set $Q'$ are pairs $(r,Z)\in 2^{Q\times \mathcal{X}}$. Its initial state is
	$\{(p,X)\}$ and final states are all states $P$ such that
	$(q,Y)\in P$. There exists a transition $P\xrightarrow{a} P'$ in
	$A$ iff for all $(p_2,X_2)\in P'$, there exists $(p_1,X_1)\in P$
	and a transition $p_1\xrightarrow{a|\rho} p_2$ in $T$ such that
	$\rho(X_2)$ contains an occurrence of $X_1$. Note that by
	definition of $A$, there exists a run from a state $P$ to a state
	$P'$ on some $s\in\Sigma^*$ iff for all $(p_2,X_2)\in P'$, there
	exists $(p_1,X_1)\in P$ such that $(p_1,X_1)\flows^{s}_1
	(p_2,X_2)$  (Remark $\star$). 
	
	Clearly, $L(A) = L_{(p,X)\flows (q,Y)}$. It remains to show that $A$ is
	aperiodic, i.e. its transition monoid $M_A$ is aperiodic.
	Since $T$ is aperiodic, there exists $m\geq 0$ such
	that for all matrices $M\in M_T$, $M^m = M^{m+1}$. 
	For $s\in\Sigma^*$, let $\Phi_A(s) \in M_A$ (resp. $\Phi_T(s)$) 
	the square matrix of dimension
	$|Q'|$ (resp. $|Q|$) associated with $s$ in $M_A$ (resp. in $M_T$). We show that $\Phi_A(s^m) =
	\Phi_A(s^{m+1})$, i.e. $(P,P')\in \Phi_A(s^m)$ iff
	$(P,P')\in\Phi_A(s^{m+1})$, for all $P,P'\in Q'$.

	First, suppose that $(P,P')\in \Phi_A(s^m)$, and let
	$(p_2,X_2)\in P'$.  By definition of $A$, there exists 
	$(p_1,X_1)\in P$ such that $(p_1,X_1)\flows^{s^m}_1 (p_2,X_2)$, and 
	by aperiodicity of $T$, it implies that
	$(p_1,X_1)\flows^{s^{m+1}}_1 (p_2,X_2)$. Since it is true for all
	$(p_2,X_2)\in P'$, it implies by Remark $(\star)$ that there exists a run 
	of $A$ from $P$ to $P'$ on $s^{m+1}$, i.e. $(P,P')\in
	\Phi_A(s^{m+1})$. The converse is proved similarly.

	We have just proved that $L_{(p,X)\flows (q,Y)}$ is aperiodic. Therefore it is
	definable by some FO-formula $\phi_{(p,X)\flows (q,Y)}$. To capture 
	the variable flow in an accepting run, we also need to have some conditions on the states $p$ and $q$.  
	$\phi_{X\flows Y}(x,y)$ is defined by
	$$
	\phi_{X\flows Y}(x,y)\equiv x\preceq y \wedge
	\bigvee_{q,p\in Q}\{ [\phi_{(q,X)\flows (p,Y)}]^{x\preceq
		\cdot\preceq y} \wedge 
	[\psi^L_q]_{\prec x} \wedge [\psi^L_{q,p}]_{x \prec y} \wedge \exists z.\{[\psi^L_{p,r}]_{y \prec z} \wedge 
	[L_r(z) \wedge \psi^{Rec}_P]_{z\preceq}\} \}$$
	where 
	$[\psi^L_q]_{\prec x}, [\psi^L_{q,p}]_{x \prec y}, [\psi^R_p]_{y\preceq}$
	were defined in Proposition
	\ref{prop:fostates} and $[\phi_{(p,X)\flows (q,Y)}]^{x\preceq \cdot\preceq y}$ is
	obtained from $\phi_{(p,X)\flows (q,Y)}$ by guarding all the
	quantifications of any variable  by $x\preceq z'\preceq y$. 
The Muller set $P$ starts from some position $z$ ahead of $y$, in some state $r \in P$. 
	In the case when $p \in P$, consider $r=p$ in the above formula.  
\end{proof}

\subsection{Proof of Proposition \ref{prop:contribution}}
\label{app:useful}
\begin{proof}
	The formula $\contribute_X(x)$ is defined by
	$$
	\begin{array}{llllllll}
	\contribute_X(x) & = & \exists y\cdot[ 
	\bigvee_{P \in 2^Q}\psi_P^{Rec}(y) \wedge \bigvee_{p \in P}\psi^L_p(y) \wedge \bigvee_{\{X_1, \dots, X_n \mid F(P)=X_1 \dots X_n\}}
	\Phi_{X\flows X_i}(x,y)]
	\end{array}
	$$
	where $\psi_P^{Rec}(y)$ defines 
	a position $y$ from where the Muller set $P$ is visited continously. 
	$\psi_P^{Rec}(y)$ is defined in proposition \ref{prop:fostates} and 
	$\Phi_{X\flows Y}(x,y)$ in proposition \ref{prop:foflow}.
\end{proof}

\subsection{Definition of \sst-output graphs}
\label{app:sst-output}
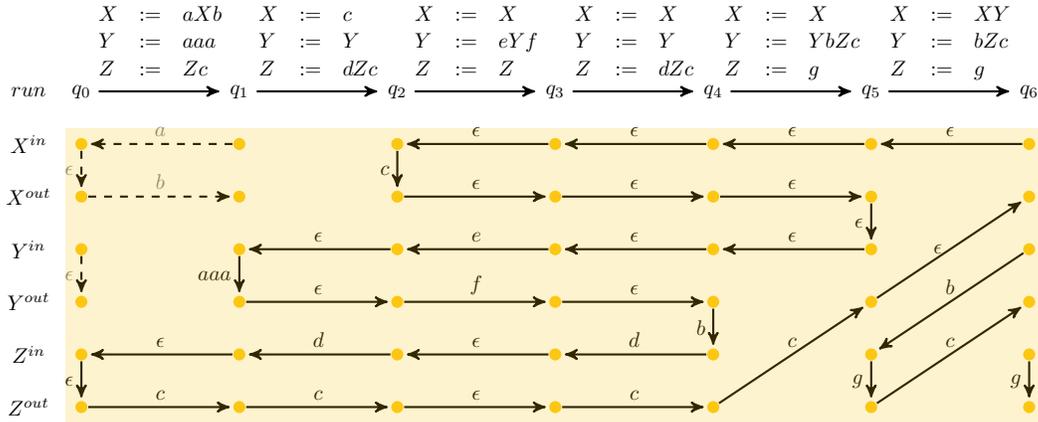
\begin{figure*}[h]
\begin{center}
\begin{tikzpicture}[->,>=stealth',shorten >=1pt,auto,node distance=1.5cm,
                    thick,scale=0.7, every node/.style={scale=0.8}]

\tikzstyle{graphnode}=[circle,fill=gold,thick,inner
sep=0pt,minimum size=2mm]

\tikzstyle{graphnodeblack}=[circle,fill=gold,thick,inner
sep=0pt,minimum size=2mm]

\node [graphnode] (x0in) at (0,5) {} ;
\node [graphnode] (x1in) at (3,5) {} ;
\node [graphnodeblack] (x2in) at (6,5) {} ;
\node [graphnodeblack] (x3in) at (9,5) {} ;
\node [graphnodeblack] (x4in) at (12,5) {} ;
\node [graphnodeblack] (x5in) at (15,5) {} ;
\node [graphnodeblack] (x6in) at (18,5) {} ;

\node [graphnode] (x0out) at (0,4) {} ;
\node [graphnode] (x1out) at (3,4) {} ;
\node [graphnodeblack] (x2out) at (6,4) {} ;
\node [graphnodeblack] (x3out) at (9,4) {} ;
\node [graphnodeblack] (x4out) at (12,4) {} ;
\node [graphnodeblack] (x5out) at (15,4) {} ;
\node [graphnodeblack] (x6out) at (18,4) {} ;

\node [graphnode] (y0in) at (0,3) {} ;
\node [graphnodeblack] (y1in) at (3,3) {} ;
\node [graphnodeblack] (y2in) at (6,3) {} ;
\node [graphnodeblack] (y3in) at (9,3) {} ;
\node [graphnodeblack] (y4in) at (12,3) {} ;
\node [graphnodeblack] (y5in) at (15,3) {} ;
\node [graphnodeblack] (y6in) at (18,3) {} ;

\node [graphnode] (y0out) at (0,2) {} ;
\node [graphnodeblack] (y1out) at (3,2) {} ;
\node [graphnodeblack] (y2out) at (6,2) {} ;
\node [graphnodeblack] (y3out) at (9,2) {} ;
\node [graphnodeblack] (y4out) at (12,2) {} ;
\node [graphnodeblack] (y5out) at (15,2) {} ;
\node [graphnodeblack] (y6out) at (18,2) {} ;

\node [graphnodeblack] (z0in) at (0,1) {} ;
\node [graphnodeblack] (z1in) at (3,1) {} ;
\node [graphnodeblack] (z2in) at (6,1) {} ;
\node [graphnodeblack] (z3in) at (9,1) {} ;
\node [graphnodeblack] (z4in) at (12,1) {} ;
\node [graphnodeblack] (z5in) at (15,1) {} ;
\node [graphnodeblack] (z6in) at (18,1) {} ;

\node [graphnodeblack] (z0out) at (0,0) {} ;
\node [graphnodeblack] (z1out) at (3,0) {} ;
\node [graphnodeblack] (z2out) at (6,0) {} ;
\node [graphnodeblack] (z3out) at (9,0) {} ;
\node [graphnodeblack] (z4out) at (12,0) {} ;
\node [graphnodeblack] (z5out) at (15,0) {} ;
\node [graphnodeblack] (z6out) at (18,0) {} ;

\node (xin) at (-1,5) {$X^{in}$};
\node (xout) at (-1,4) {$X^{out}$};
\node (yin) at (-1,3) {$Y^{in}$};
\node (yout) at (-1,2) {$Y^{out}$};
\node (zin) at (-1,1) {$Z^{in}$};
\node (zout) at (-1,0) {$Z^{out}$};

  \draw[->,dashed]   (x0in) -- node[left] {\textcolor{gray}{$\epsilon$}} (x0out) ;
  \draw[->,dashed]   (y0in) -- node[left] {\textcolor{gray}{$\epsilon$}} (y0out);
  \draw[->]   (z0in) -- node[left] {$\epsilon$} (z0out);


  \draw[->,dashed] (x1in) -- node[above] {\textcolor{gray}{$a$}} (x0in) ;
  \draw[->,dashed] (x0out) -- node[above] {\textcolor{gray}{$b$}} (x1out) ;

  \draw[->] (z1in) -- node[above] {$\epsilon$} (z0in) ;
  \draw[->] (z0out) -- node[above] {$c$} (z1out) ;


  \draw[->] (y1in) -- node[left] {$aaa$} (y1out) ;


  \draw[->] (y2in) -- node[above] {$\epsilon$} (y1in) ;
  \draw[->] (y1out) -- node[above] {$\epsilon$} (y2out) ;

  \draw[->] (z2in) -- node[above] {$d$} (z1in) ;
  \draw[->] (z1out) -- node[above] {$c$} (z2out) ;


  \draw[->] (x2in) -- node[left] {$c$} (x2out) ;
           

  \draw[->] (y3in) -- node[above] {$e$} (y2in) ;
  \draw[->] (y2out) -- node[above] {$f$} (y3out) ;

  \draw[->] (x3in) -- node[above] {$\epsilon$} (x2in) ;
  \draw[->] (x2out) -- node[above] {$\epsilon$} (x3out) ;

  \draw[->] (z3in) -- node[above] {$\epsilon$} (z2in) ;
  \draw[->] (z2out) -- node[above] {$\epsilon$} (z3out) ;


  

  \draw[->] (x4in) -- node[above] {$\epsilon$} (x3in) ;
  \draw[->] (x3out) -- node[above] {$\epsilon$} (x4out) ;

  \draw[->] (y4in) -- node[above] {$\epsilon$} (y3in) ;
  \draw[->] (y3out) -- node[above] {$\epsilon$} (y4out) ;
 
  \draw[->] (z4in) -- node[above] {$d$} (z3in) ;
  \draw[->] (z3out) -- node[above] {$c$} (z4out) ;
  


\draw[->] (x5in) -- node[above] {$\epsilon$} (x4in) ;
  \draw[->] (x4out) -- node[above] {$\epsilon$} (x5out) ;

\draw[->] (y5in) -- node[above] {$\epsilon$} (y4in) ;
  \draw[->] (y4out) -- node[left] {$b$} (z4in) ;
 \draw[->] (z4out) -- node[above] {$c$} (y5out) ;
 \draw[->] (z5in) -- node[left] {$g$} (z5out) ;

\draw[->] (x6in) -- node[above] {$\epsilon$} (x5in) ;
  \draw[->] (x5out) -- node[left] {$\epsilon$} (y5in) ;

\draw[->] (y6in) -- node[above] {$b$} (z5in) ;
 \draw[->] (y5out) -- node[left] {$\epsilon$} (x6out) ;
 \draw[->] (z5out) -- node[above] {$c$} (y6out) ;
 \draw[->] (z6in) -- node[left] {$g$} (z6out) ;

\node (run) at (-1,6) {$run$};

\node (q) at (0,6) {$q_0$} ;
\node (q1) at (3,6) {$q_1$} ;
\node (r) at (6,6) {$q_2$} ;
\node (q2) at (9,6) {$q_3$} ;
\node (r1) at (12,6) {$q_4$} ;
\node (r2) at (15,6) {$q_5$} ;
\node (r3) at (18,6) {$q_6$} ;


 \draw[->] (q) -- node[above] {$\begin{array}{lll} X & := & aXb
       \\ Y & := & aaa \\ Z & := & Zc \end{array}$}(q1) ;

 \draw[->] (q1) -- node[above] {$\begin{array}{lll} X & := & c
       \\ Y & := & Y \\ Z & := & dZc \end{array}$}(r) ;

 \draw[->] (r) -- node[above] {$\begin{array}{lll} X & := & X
       \\ Y & := & eYf \\ Z & := & Z \end{array}$}(q2) ;

 \draw[->] (q2) -- node[above] {$\begin{array}{lll} X & := & X
       \\ Y & := & Y \\ Z & := & dZc \end{array}$}(r1) ;

 \draw[->] (r1) -- node[above] {$\begin{array}{lll} X & := & X
       \\ Y & := & YbZc \\ Z & := & g \end{array}$}(r2) ;

 \draw[->] (r2) -- node[above] {$\begin{array}{lll} X & := & XY
       \\ Y & := & bZc \\ Z & := & g \end{array}$}(r3) ;



\fill[gold,fill opacity=0.2] (-0.3,-0.3) rectangle (18.3,5.3);







\end{tikzpicture}
\end{center}
\vspace{-5mm}
\caption{\label{fig:outputgraph} \sst-output structure depicting a partial run. }
\vspace{-5mm}
\end{figure*}


Figure \ref{fig:outputgraph} gives an example of \sst-output structure. 
We show only the variable updates. 
 Dashed arrows represent variable updates for useless variables, and therefore does not belong to the \sst-output
structure. Initially the variable content of $Z$ is equal to $\epsilon$. 
It is represented by the $\epsilon$-edge from $(Z^{in}, 0)$ to $(Z^{out}, 0)$ in the first column. 
Then, variable $Z$ is updated to $Zc$. 
Therefore, the new content of $Z$ starts with $\epsilon$ (represented by the $\epsilon$-edge from $(Z^{in},1)$ to $(Z^{in},0)$,
which is concatenated with the previous content of $Z$, and then concatenated with $c$ (it is represented by the $c$-edge from
$(Z^{out},0)$ to $(Z^{out},1)$). 
Note that the invariant is satisfied. The content of variable $X$ at position 5 is given by  
the label of the path from $(X^{in},5)$ to $(X^{out},5)$, which is 
$c$. Also note that some edges are labelled by strings with several letters, but there are finitely many possible such strings. 
In particular, we denote by $O_T$ the set of all strings that appear in right-hand side of variable updates.

Let $T = (Q, q_0, \Sigma, \Gamma, \varsst, \delta, \rho, Q_f)$ be an
\sst{}.  Let $u\in (\Gamma\cup X)^*$ and $s\in
\Gamma^*$. The string $s$ is said to \emph{occur} in $u$ if $s$ is a
factor of $u$. In particular, $\epsilon$ occurs in $u$ for all $u$. 
Let $O_T$ be the set of constant strings
occurring in variable updates, i.e. $O_T = \{ s\in\Gamma^*\ |\ \exists
t\in \delta,\ s\text{ occurs in } \rho(t)\}$. Note that $O_T$ is
finite since $\delta$ is finite.

Let $w \in dom(T)$. The \emph{\sst-output graph} of $w$ by $T$, denoted by
$G_T(w)$, is defined as an infinite directed graph whose edges are labelled by
elements of $O_T$. Formally, it is the graph $G_T(w) =
(V,(E_\gamma)_{\gamma\in O_T})$ where 
$V = \{0,1,\dots,\}\times \mathcal{X}\times \{in,out\}$ is the set of
vertices, $E := \bigcup_{\gamma\in O_T} E_\gamma \subseteq V\times V$
is the set of labelled edges defined as follows.
Vertices $(i,X,d)\in V$ are denoted by $(X^d,i)$. 
Let $r = q_0\dots q_n \dots $ be an 
accepting run of $T$ on $w$. The set $E$ is defined as the smallest
set such that for all $X\in \varsst$, 
\begin{enumerate}
	\item $((X^{in},0),(X^{out},0))\in E_\epsilon$ if $(X,0)$ is useful,
	
	\item for all $i$ and $X\in X$, if $(X,i)$ is useful and
	if $\rho(q_{i},w[i+1],q_{i+1})(X) = \gamma$, then 
	$((X^{in},i+1),(X^{out},i+1))\in E_\gamma$,

	\item for all $i$ and $X\in X$, if $(X,i)$ is useful 
	and if $\rho(q_{i},w[i+1],q_{i+1})(X) = \gamma_1X_1\dots
	\gamma_kX_{k}\gamma_{k+1}$ (with $k>1$), then 
	\begin{itemize}
		
		\item $((X^{in},i+1), (X_1^{in},i))\in E_{\gamma_1}$
		\item $((X_k^{out},i), (X^{out},i+1))\in E_{\gamma_{k+1}}$
		\item for all $1\leq j< k$, $((X_j^{out},i), (X_{j+1}^{in},i))\in E_{\gamma_{j+1}}$
		
	\end{itemize}      
	
\end{enumerate}

Note that since the transition monoid of $T$ is $1$-bounded, it is
never the case that two copies of any variable (say $X$) flows into 
a variable (say $Y$), therefore this graph is
well-defined and there are \textbf{no} multiple edges between two
nodes.

We next show that the transformation that maps an $\omega$-string $s$ into its output structure 
is FO-definable, whenever the SST is 1-bounded and aperiodic. Using the fact that variable flow is FO-definable,
we show that for any two variables $X,Y$, we can capture in FO, a path from $(X^d, i)$ to $(Y^e, j)$ 
  for $d, e\in \{in, out\}$ in $G_T(s)$ and all positions $i, j$.
   We are in state $q_i$ at position $i$. 
   
    For example, 
   \begin{enumerate}
  \item  There is a path from $(Z^{d},1)$ to $(Y^{out},5)$ for $d \in \{in,out\}$.  
  This is because $Z$ at position 1 flows into $Z$ at position 4 (path $(Z^{in},4)$ to $(Z^{in},1)$, edge from 
    and $(Z^{in},1)$ to $(Z^{out},1)$, path from $(Z^{out},1)$ to $(Z^{out},4)$) 
      this value of $Z$ is used in updating $Y$ at position 5 as $Y:=bZc$. 
(edge from $(Z^{out},4)$  to $(Y^{out},5)$). 
  \item There is a path  from $(Y^{in},5)$ to $(Z^{d},2)$. This is because $Z$ at position 2 
  flows into $Y$ at position 5 by the update $Y:=YbZc$. (path from $(Z^{in},4)$ to $(Z^{in},2)$; path from 
  $(Z^{in},2)$ to $(Z^{out},2)$ and path from $(Z^{out},2)$ to $(Z^{out},4)$; lastly, edge from $(Z^{out},4)$ 
  to $(Y^{out},5)$. Also, note the edge from $(Y^{in},5)$ to $(Y^{in},4)$, and the path from 
  $(Y^{in},4)$ to $(Y^{out},4)$, edge from $(Y^{out},4)$ to $(Z^{in},4)$). 
   \item There is a path
   from $(X^{in},3)$ to $(Z^{in},1)$ in Figure \ref{fig:outputgraph}. However, $Z$ at position 1 does not flow into 
   $X$ at position 3. Note that this is because of the update $X:=XY$ at position 6, and 
   $Z$ at position 1 flows into $Y$ at position 5, (note the path from $(Z^{out},1)$ to $(Z^{out},4)$, and 
   the edge from $(Z^{out},4)$ to $(Y^{out},5)$) 
  and $X$ at position 3 flows into $X$ at position 5 (note the path from $(X^{out},3)$ to $(X^{out},5)$) and $X$ and $Y$ are catenated in order 
  at position 5 (the edge from $(X^{out},5)$ to $(Y^{in},5)$) to define $X$ at position 6 (edge from $(Y^{out},5)$ to $(X^{out},6)$).
  
    \end{enumerate}
 Thus, the SST output graphs have a nice property,
  which connects a path from $(X^d,i)$ to $(Y^{d'},j)$ based on the variable flow, and the catenation of variables 
  in updates. 
  Formally, let $T$ be an \textbf{aperiodic,1-bounded} \sst{} $T$. 
    Let $s\in dom(T)$, $G_T(s)$ its \sst-output structure and $r=q_0\dots q_n\dots$ the accepting run of $T$ on $s$. 
    For all variables $X,Y\in \varsst$, all positions $i,j\in \dom(s)\cup\{0\}$, all $d,d'\in\{in,out\}$, there exists a path from node 
    $(X^{d},i)$ to node $(Y^{d'},j)$ in $G_T(s)$ iff $(X,i)$ and
    $(Y,j)$ are both useful and one of the following conditions hold: either
           \begin{enumerate}
    \item 
$(q_i,X)\flows^{s[i{+}1{:}j]}_1 (q_j,Y)$ and $d' = out$, or
      \item
$(q_j,Y)\flows^{s[j{+}1{:}i]}_1(q_i,X)$ and $d = in$, or

    \item 
there exists $k\geq max(i,j)$ and two variables $X',Y'$ such
      $(q_i,X)\flows^{s[i{{+}1:}k]}_1 (q_k,X')$, $(q_j,Y)\flows^{s[j{+}1{:}k]}_1 (q_k,Y')$
      and $X'$ and $Y'$ are concatenated in this order\footnote{by concatenated we
        mean that there exists a variable update whose rhs is of the
        form $\dots X'\dots Y'\dots$} by $r$ when reading $s[k+1]$. 
    \end{enumerate}

\subsection{Proof of Lemma \ref{lem:fopath}}
\label{app:fopath}

 Let $s\in dom(T)$, $G_T(s)$ its \sst-output structure and $r=q_0\dots q_n\dots$ the accepting run of $T$ on $s$. 
    For all variables $X,Y\in \varsst$, all positions $i,j\in \dom(s)\cup\{0\}$, all $d,d'\in\{in,out\}$, there exists a path from node 
    $(X^{d},i)$ to node $(Y^{d'},j)$ in $G_T(s)$ iff $(X,i)$ and
    $(Y,j)$ are both useful and one of the following conditions hold: either
    \begin{enumerate}
      \item 
$(q_i,X)\flows^{s[i{+}1{:}j]}_1 (q_j,Y)$ and $d' = out$, or
      \item
$(q_j,Y)\flows^{s[j{+}1{:}i]}_1(q_i,X)$ and $d = in$, or
      \item 
there exists $k\geq max(i,j)$ and two variables $X',Y'$ such
      $(q_i,X)\flows^{s[i{{+}1:}k]}_1 (q_k,X')$, $(q_j,Y)\flows^{s[j{+}1{:}k]}_1 (q_k,Y')$
      and $X'$ and $Y'$ are concatenated in this order\footnote{by concatenated we
        mean that there exists a variable update whose rhs is of the
        form $\dots X'\dots Y'\dots$} by $r$ when reading $s[k+1]$. 
    \end{enumerate}

For all variables $X,Y\in \varsst$, we denote by $Cat_{X,Y}$ the set of 
    pairs $(p,q,a)\in Q^2\times \Sigma$ such that there exists a transition from $p$
    to $q$ on $a$ whose variable update concatenates $X$ and $Y$ (in
    this order). Define a formula for condition $(3)$:
$$
\Psi_3^{X,Y}(x,y) \ \equiv \ \exists z\cdot x\preceq z
\wedge y\preceq z 
\wedge \bigvee_{X',Y'\in\varsst, (p,q,a)\in Cat_{X',Y'}}[ \\
L_a(z)\wedge 
\phi_{X\flows X'}(x,z) \wedge 
\phi_{Y\flows Y'}(y,z)  \wedge \phi_p(z)\wedge \phi_q(z+1)]
$$

Then, formula $\text{path}_{X,Y,d,d'}(x,y)$ is defined by
$$
\begin{array}{llllllllll}
\text{path}_{X,Y,in,in}(x,y) & \equiv & 
    \phi_{Y\flows X}(y,x) \vee \Psi_3^{X,Y} \\
\text{path}_{X,Y,in,out}(x,y) & \equiv & 
    \phi_{Y\flows X}(y,x) \vee \phi_{X\flows Y}(x,y) \vee \Psi_3^{X,Y} \\
\text{path}_{X,Y,out,in}(x,y) & \equiv & 
    \text{false} \\
\text{path}_{X,Y,out,out}(x,y) & \equiv & 
    \phi_{X\flows Y}(x,y) \vee \Psi_3^{X,Y} \\
\end{array}
$$

\begin{enumerate}
\item $\text{path}_{X,Y,in,in}(x,y)$ : Recall that a path from $(X^{in},x)$ to $(Y^{in},y)$ always passes through $(X^{out},y)$. Hence, $y \leq x$, (the $X^{in}$ arrows move on the left) 
and 
it must be that there is an edge from $(X^{out},y)$ to  $(Y^{in},y)$, which happens when $X$ and $Y$ are concatenated in order. This is handled by 
 $\Psi_3^{X,Y}$. The other possibility is when $Y$ occurs in the right side of $X$ at $x$; in this case, we have 
 an edge from $(X^{in},x)$ to some $(Z^{in},x)$ ($Z$ could be $Y$), leading into a path to $(Y^{in},y)$, $y \leq x$. Clearly, here 
 $Y$ flows into $X$ between $y$ and $x$.   
 
 \item $\text{path}_{X,Y,in,out}(x,y)$ : One possibility is that there is a path from $(X^{in},x)$ to $(X^{in},z)$, with $x > z$, and $(X^{in},z)$ has an edge to 
 $(X^{out},z)$.  An edge from $(X^{out},z)$ to $(Y^{in}, z)$ happens if $X, Y$ are concatenated in order  at $z$. 
A path from $(Y^{in}, z)$ to  $(Y^{in}, y)$ and then
 $(Y^{out}, y)$, $z \geq y$ can happen. 
 This  is handled by $\Psi_3^{X,Y}$. 
 A second case is similar to 1, where we have a path from $(X^{in},x)$ to $(Y^{in},y)$, where $Y$ flows into $X$ between $y$ and $x$, and 
 then there is a path from $(Y^{in},y)$ to $(Y^{out},y)$. 
  A third  case is when $X$ occurs in the right of $Y$ at $y$; 
  in this case, we have an edge from $(X^{out},y)$ to 
  $(Y^{out},y)$. Also, there is a path from $(X^{in},y)$
to $(X^{out},y)$  which goes through $(X^{in},x)$, $x < y$. Clearly, $X$ flows into $Y$  
 between $x$ and $y$.

\item Note that $\text{path}_{X,Y,out,in}(x,y)$ is false, since there is no path or edge from $X^{out}$ to $Y^{in}$ capturing flow; 
the edge from $X^{out}$ to $Y^{in}$  occurs only when a catenation of $X, Y$ happens in order.

\item $\text{path}_{X,Y,out,out}(x,y)$ :
One case is when $X,Y$ are catenated in order on the rightside of some variable $Z$. Then there is a 
path from $(X^{out},x)$ to $(Y^{out},y)$ (through $(Y^{in},y)$), as handled by $\Psi_3^{X,Y}$. 
The other case is when $X$ occurs on the right of $Y$ at $y$. Then there is an edge from $(X^{out},x)$ to
 $(Y^{out},y)$ ($x=y-1$) (may be through some $(Z^{in},x)$).
\end{enumerate}

\subsection{Formal Construction of FOT from Aperiodic SST}
\label{app-fot-cons}
We describe the construction of the \fot{} from the 
1-bounded, aperiodic \sst{}.

 In the case of finite strings, if the output function of the accepting state 
is $X_1 \dots X_n$, then $(X_1^{in}, |s|)$ is the start node, on reading string $s$. 
The path from $(X_1^{in}, |s|)$ to $(X_1^{out}, |s|)$, followed 
by the path from $(X_2^{in}, |s|)$ to $(X_2^{out}, |s|)$ and so on gives the output. 

Unlike the finite string case, one of the  difficulties here, is to specify the start node 
of the \fot{} since the strings are infinite. 
The first thing we do, in order to specify the start node, is to identify the position where 
some Muller set starts, and the rest of the run stays in that set. This is done 
for instance, in Proposition \ref{prop:fostates} by the formula $\psi^{Rec}_P$. We can easily catch the first such position where 
$\psi^{Rec}_P$ holds.   This position will be labeled in the \fot{} with a unique symbol $\perp$. Let $O \subseteq \Gamma^*$ be the finite set 
of output strings $\gamma_i$ that appear in the variable updates of the \sst{}. That is, 
$O=\{\gamma_i \mid \rho(q,a)(X)=\gamma_0X_1 \gamma_1 \dots \gamma_{n-1} X_n \gamma_n\}$. 
We build an  \fot{}  that marks the first position where $\psi^{Rec}_P$ evaluates to true as 
as $\perp$, outputs the contents of $X_1, \dots, X_{n-1}$ first till
$\perp$, and then of $X_n$, where 
$F(P)=X_1\dots X_{n-1}X_n$ is the output function of the \sst{}. 

The \fot{} is defined as $(\Sigma, O \cup \{\perp\}, \phi_{dom}, C, \phi_{pos}, \phi_{\prec})$ where:
\begin{itemize}
\item $\phi_{dom}=is\_string \wedge \exists i\psi^{Rec}_P(i)$ where the FO formula $is\_string$ is a simple FO formula that says every position has a 
unique successor and predecessor. The conjunction with $\exists i\psi^{Rec}_P(i)$ says that there is a  position  $i$
of the string from where   $\psi^{Rec}_P$ is true. 
\item $C=\mathcal{X} \times \{in, out\}$, 
\item $\phi_{pos}=\{\phi_{\gamma}^c(i) \mid c \in C, \gamma \in O \cup \{\perp\}\}$ is such that 
\begin{itemize}
\item 
$\phi_{\perp}^{(X_1,in)}(i)=\bigvee_{\{P \in F \mid F(P)=X_1 \dots X_n\}}[\psi^{Rec}_P(i) \wedge \forall j(j \prec i \rightarrow \neg\psi^{Rec}_P(j))]$. 
$i$ is the first position from where a Muller set starts to hold continuosuly. 
 \item for $\gamma \in O$, we define $\phi_{\gamma}^c(i)=\neg \phi_{\perp}^c(i) \wedge \psi_{\gamma}^c(i)$, where 
$\psi_{\gamma}^c(i)$ is 
  
\end{itemize}

\begin{itemize}
\item true, if  $(i=0$), $c=(X,in)$ and $\gamma=\epsilon$,
\item false, if  $(i=0$), $c=(X,in)$ and $\gamma \neq \epsilon$,
\item $\bigvee_{q \in Q, a \in \Sigma}\phi_q(i-1) \wedge L_a(i-1)$ if $c=(X,in)$, $i > 0$
and $\rho(q,a)(X)=\gamma Y_1\gamma_1 Y_2 \dots \gamma_n$. The formula $\phi_q(x)$ is defined in Proposition \ref{prop:fostates}.
\item $\bigvee_{q \in Q, a \in \Sigma}\phi_q(i) \wedge L_a(i)$ if $c=(X,out)$, $i \geq 0$
and $\rho(q,a)(Y)=\gamma_0 Y_1\gamma_1 Y_2 \dots Y_k\gamma_k \dots \gamma_n$, for a unique $Y \in \mathcal{X}$, and 
$Y_k=X$ and $\gamma=\gamma_k$ for some $1 \leq k \leq n$. Note that $Y \in \mathcal{X}$ is unique 
since the \sst{} is 1-bounded (the \sst{} output-structure does not contain useless variables, and 
if $X$ appears in the rightside of two variables $Y,Z$ then one of them will be useless for the output) 
\item $\bigvee_{q \in Q, a \in \Sigma}\phi_q(i) \wedge L_a(i)$ if $c=(X,out)$, $i \geq 0$
and  $\gamma=\epsilon$ if $X$ does not appear in the update of any variable in $\rho(q,a)$.  
\end{itemize}
\item We next define $\phi^{c,d}(i,j)=\bigvee_{P \in F}[\psi^{Rec}_P \wedge \psi^{c,d}_P(i,j)]
$, where 
$\psi^{c,d}_P(i,j)$ is defined as follows. Let $F(P)=X_1X_2 \dots X_{n-1}X_n$. 
Let $k$ be the earliest position where $\psi^{Rec}_P$ holds.  Note that all 
these edges are defined only when a copy is useful, that is, it contributes to the 
output. The notion of usefulness has been defined in section \ref{sec:sst-fot}. 
\begin{itemize}
\item $true \wedge useful_X(i)$, if $c=(X,in)$, $d=(X,out)$ for some variable $X$, 
and $i=j=0$. The formula $useful_X(i)$ is defined in Appendix \ref{app:useful}.  
\item $\phi_q(i-1) \wedge L_a(i-1) \wedge useful_X(i)$ if $c=(X,in), d=(X,out)$ 
$i=j>0$ and $\rho(q,a)(X)=\gamma$.  
\item $\phi_q(i) \wedge L_a(i) \wedge useful_X(i)$ if $c=(X,out)$, $d=(Y,in)$ 
for some variables $X,Y$, $i=j\geq 0$, and for some $Z$ we have 
$\rho(q,a)(Z)=\gamma_0 Z_1 \gamma_1 \dots Z_n \gamma_n$ with $Z_{\ell}=X$ and 
$Z_{\ell+1}=Y$ for some $1 \leq \ell \leq n$
\item $\phi_q(i-1) \wedge L_a(i-1) \wedge useful_X(i)$ if $C=(X,in)$, $d=(Y, in)$  
for some variables $X,Y$, $i >0$, $j=i-1$, $\rho(q,a)(X)=\gamma Y \gamma_1 Y_2 \dots Y_n \gamma_n$ 
\item $\phi_q(i) \wedge L_a(i) \wedge useful_X(i)$ if $c=(X,out)$, $d=(Y,out)$ for some variables $X,Y$,  
$i+1=j$ and $\rho(q,a)(Y)=\gamma X_1 \dots X \gamma_n$.
\item We do the above  between copies upto position $k$, which influences 
the values of variables $X_1, \dots, X_{n-1}$, which 
 produces the variable flow upto position $k$. 
Beyond position $k$, the contents of variables $X_1, \dots, X_{n-1}$ remain unchanged.  
Hence, for all $i \geq k$, and $1 \leq j \leq n-1$, we dont have any edges from $(X_j,out)$ 
at position $i$ to 
  $(X_j,out)$ at position $i+1$, and from $(X_j,in)$ at position $i+1$ 
to $(X_j,in)$ at position $i$.  
We simply add a transition from $(X_j,out)$ to $(X_{j+1},in)$ at position $k$ to 
seamlessly catenate the contents of $X_1, \dots, X_{n-1}, X_n$ 
at position $k$.  
\item From $(X_n,in)$ at position $k$ onwards, we simply follow the edges as defined above, to obtain 
at each position, the correct output $X_1X_2 \dots X_n$. If at position $k$, 
$X_n:=X_n  \gamma_0 Y_1 \dots \gamma_{n-1} Y_n \gamma_n$ then we have the connections 
from $(X_n, in)$ at position $k$ to $(X_n, in)$ at position $k-1$, which will 
eventually reach $(X_n, out)$ at position $k-1$. This is then connected to 
$(Y_1, in)$ at position $k-1$, and so on till we reach 
$(Y_n,out)$ at position $k-1$. This  is then connected to 
$(X_n, out)$ at position $k$, rendering the correct output at position $k$. The same thing repeats 
for position $k+1$ and so on.  
\end{itemize}
\end{itemize}
Thanks to Lemma \ref{lem:fopath}, the transitive closure between some copy $(X,d)$ and  $(Y,d')$ is FO-definable 
as $\phi^{(X,d),(Y,d')}_{\prec}=path_{X,Y,d,d'}(x,y)$. This completes the construction 
of the \fot{}.

\section{Proofs from Section \ref{sec:2wst-sst} : $\twst_{\la} \subset \mathsf{SST}_{\la}$}
\label{app-2wst-sst}
In this section, we show that $\twst_{\la} \subset \mathsf{SST}_{\la}$. 
Let the \twst$_{\la}$ be $(T, A, B)$ with $T = (\Sigma, \Gamma, Q, q_0, \delta, F)$. 
We assume wlg that $L(A_p) \cap L(A_{p'})=\emptyset$ for all $p, p' \in Q_A$ and 
$L(A_r) \cap L(A_{r'})=\emptyset$ for all $r, r' \in Q_B$. Also, we assume that 
$F=2^Q$, and check 
 particular accepting conditions in  the look-ahead automaton $A$.

The \sst$_{sf}$ we construct is $\Tt=((\Sigma, \Gamma, Q', q_0', \delta', X, \rho, F'), A, B)$ where 
$Q'=Q \times [Q \rightharpoonup Q]$ where $\rightharpoonup$ represents  a partial function. $q'_0=(q_0, I)$ where $I$ is the identity. The set of variables
$\mathcal{X}=\{X_q \mid q \in Q\} \cup \{O\}$, 
and $F': 2^{Q'} \rightarrow X^*$ is defined for all $P' \in F'$ as $F'(P')=O$. 
$\delta': Q' \times Q_B \times \Sigma \times Q_A \rightarrow Q'$ is defined as follows:
$\delta'((q,f),r,a,p)=(f'(q),f')$ where $f'$ is defined as follows:
\[
f'(q)=\left \{
  \begin{array}{lll}
  & t & \mbox{if}~ \delta(q,r,a,p)=(t, \gamma, 1),\\
  & f'(t) & \mbox{if}~ \delta(q,r,a,p)=(t, \gamma, 0),\\
  & f'(f(t)) & \mbox{if}~ \delta(q,r,a,p)=(t, \gamma, -1)
  \end{array}\right.
\]
  Thanks to the determinism of the \twst$_{sf}$ and the restriction that the entire input be read, 
  $f'$ is a well-defined, partial function. In each cell $i$, the state of the \sst$_{sf}$ will coincide with the state 
  the \twst$_{sf}$ is in, when reading cell $i$ for the first time.

  The variable update $\rho$ on each transition is defined as follows:
  for $\delta'((q,f),a)=(f'(q),f')$, the update $\rho((q,f),a)$ is defined as follows:
  \[
  \rho((q,f),a)(Y)= \left \{
  \begin{array}{ll}
  O\rho(X_q) & \mbox{if}~Y=O,\\
  \gamma & \mbox{if}~Y=X_q~\mbox{and}~\delta(q,r,a,p)=(t, \gamma,1),\\
  \gamma\rho(X_t) & \mbox{if}~Y=X_q~\mbox{and}~\delta(q,r,a,p)=(t, \gamma,0),\\
  \gamma X_t \rho(X_{f(t)}) & \mbox{if}~Y=X_q~\mbox{and}~\delta(q,r,a,p)=(t, \gamma,-1)~\mbox{and}~f(t)~\mbox{is defined},\\
  \epsilon & \mbox{otherwise}.
    \end{array}\right.
    \]
At the moment, the \sst$_{sf}$ is not 1-bounded. 
One place where 1-boundedness is violated happens 
because it is possible to have $f(s_1)=f(s_2)=t$ for $s_1 \neq s_2$. In particular, 
assume $\delta(s_1,r,a,p)=\delta(s_2,r,a,p)=(t,\gamma,-1)$, and let $f(t)=q$. Then we have 
$X_{s_2}, X_{s_1}:=\gamma X_t \rho(X_q)$. This would mean that we visit the same cell in the same state $t$ 
twice, which would lead to non-determinism or not fully reading the input by the \twst$_{sf}$. 
Thus, this cannot happen in an accepting run of the \twst$_{sf}$. Hence, if we have 
$\delta'((q,f),r,a,p)=(f'(q), f')$ where $f'$ is a partial function which is not 1-1, we can safely replace 
it with a subset $f''$ of $f'$ which is 1-1. Since $f'(s_1)=f'(s_2)$ for $s_1 \neq s_2$ 
does not contribute to an accepting run, we can define $f''$ on those 
states which preserve the 1-1 property. 
The second place where 1-boundedness is violated is when the contents of $\rho(X_q)$ flows into $O$ as well as $X_q$. 
This can be fixed by composition with another \sst{} where on each transition, all variables other than $X_q$ 
are unchanged ($X:=X$ for $X \neq X_q$) and resetting $X_q$ ($X_q:=\epsilon$). Note that this does not affect the ouput since 
$O$ is the only variable that is responsible for the output, and $\rho(X_q)$ has faithfully been reflected into it. 
With these two fixes, we regain 
the 1-boundedness property. 

\subsection{Connecting runs of the \twst$_{sf}$ and \sst$_{sf}$}

\begin{lemma}
Let $a_1a_2 \dots \in \Sigma^{\omega}$, and let 
$r=(q_0, i_0=0) \stackrel{r_1,s(i_0),p_1|z_1}{\rightarrow}(q_1,i_1)\stackrel{r_2,s(i_1),p_2|z_2}{\rightarrow}(q_2,i_2)\dots$  be an accepting run of the \twst$_{sf}$. 
Let $j_n=min\{m \mid i_m=n\}$ be the index when cell $n$ is read for the first time. Let 
$r'=(q'_0, f_0) \stackrel{r'_1,a_1,p'_1}{\rightarrow}  (q'_1, f_1) \stackrel{r'_2,a_2,p'_2}{\rightarrow}(q'_2, f_2)  \dots$ be the 
corresponding run in the constructed \sst$_{sf}$. Then we have 
\begin{itemize}
\item $q'_i=q_{j_i}, a_i=s(j_i), r'_i=r_{j_i}, p'_i=p_{j_i}$,
\item the output $\sigma'(O)$ till position $i$ of the input is $z_1z_2 \dots z_{j_i}$
\item In short, if the $\twst_{sf}$ reads $a_i$ with $r_i \in Q_B, p_i \in Q_A$, and state $q$, then the \twst$_{sf}$ will be in state 
  $f_i(q)$ when it next reads $a_{i+1}$, and produces $\sigma_{r', i}(X_q)$ as output.  
\end{itemize}
\end{lemma}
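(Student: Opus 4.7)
The plan is to prove all three bullets simultaneously by induction on $i$, the number of input symbols consumed by the constructed $\sst_{\la}$. The inductive hypothesis bundles three invariants at $\sst_{\la}$ position $i$: the state-and-look-around correspondence $q'_i = q_{j_i}$, $a_i = s(j_i)$, $r'_i = r_{j_i}$, $p'_i = p_{j_i}$ (first bullet); the output invariant $\sigma_{r', i}(O) = z_1 z_2 \dots z_{j_i}$ (second bullet); and a strengthened form of the third bullet, which says that $f_i$ is defined precisely on those states $q$ that the $\twst_{\la}$ actually occupies at cell $i-1$ while reading $a_i$, and that for each such $q$ the value $f_i(q)$ equals the state in which the $\twst_{\la}$ first arrives at cell $i$ starting from the configuration $(q, i-1)$ reading $a_i$, with $\sigma_{r', i}(X_q)$ equal to the concatenation of all outputs emitted along that segment of the run. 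The base case $i = 0$ is immediate: $j_0 = 0$ gives $q'_0 = q_0 = q_{j_0}$, the initial valuation assigns $\epsilon$ to $O$ and to every $X_q$, and $f_0 = I$ is vacuously correct since no cell has yet been visited.

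For the inductive step, I first note that the look-around automata $A$ and $B$ are deterministic, so the look-around pair $(r'_{i+1}, p'_{i+1})$ consulted by the $\sst_{\la}$ at position $i$ coincides with the pair $(r_{j_i + 1}, p_{j_i + 1})$ consulted by the $\twst_{\la}$ every time it reads the symbol at cell $i$. The substantive content is then the strengthened invariant at position $i+1$, which I prove by a nested induction on the number $k$ of $\twst_{\la}$ transitions between the configuration $(q, i)$ reading $a_{i+1}$ and the first subsequent visit to cell $i+1$. The nested induction splits on the direction $d$ of the transition $\delta(q, r, a_{i+1}, p) = (t, \gamma, d)$: for $d = +1$ the $\twst_{\la}$ reaches cell $i+1$ in a single step and the updates $f'(q) = t$, $X_q := \gamma$ trivially match; for $d = 0$ the update $X_q := \gamma \cdot \rho(X_t)$ composes the single-step output with the data computed for $t$ at the same transition, which has one fewer remaining step; and for $d = -1$ the $\twst_{\la}$ detours into cell $i-1$ in state $t$, so the outer inductive hypothesis applied at position $i$ supplies both the return state $f_i(t)$ at which the $\twst_{\la}$ re-enters cell $i$ and the intermediate output $\sigma_{r', i}(X_t)$ produced during that excursion; the nested induction then handles the remaining journey from $(f_i(t), i)$, matching exactly the update $X_q := \gamma \cdot X_t \cdot \rho(X_{f(t)})$. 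Once the strengthened invariant holds at $i+1$, the first bullet follows by instantiating $q := q'_i = q_{j_i}$, and the second bullet follows because the rule $O := O \cdot \rho(X_{q'_i})$ concatenates $z_1 \dots z_{j_i}$ with $z_{j_i + 1} \dots z_{j_{i+1}}$.

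The main obstacle is ensuring that the nested induction and the recursive definitions of $f'$ and of $\rho(X_q)$ in the leftward-move case are well-founded; the apparently circular clause $f'(q) = f'(f(t))$ makes sense only if the $\twst_{\la}$ eventually terminates its excursion to the left and returns to cell $i$, possibly with several further round-trips before finally crossing into cell $i+1$. This is where determinism of the $\twst_{\la}$ together with the standing assumption that an accepting run reads the entire input is used: no configuration $(q, i)$ can be visited twice along the run, so the number of transitions required to reach cell $i+1$ from any reachable $(q, i)$ is finite, guaranteeing that the nested induction on $k$ terminates. As a last step I will verify that the two post-processing fixes applied to the construction (restricting $f'$ to a one-to-one partial function and resetting $X_q$ to regain $1$-boundedness) do not disturb the invariants, since the entries they affect correspond either to unreachable configurations or to variables whose content has already been absorbed into $O$.
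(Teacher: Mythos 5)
Your proposal is correct and follows essentially the same route as the paper's proof: an outer induction on the length of the prefix of the \sst$_{sf}$ run, a nested induction on the number of \twst$_{sf}$ steps needed to first cross into the next cell, and a case split on the move direction $+1$, $0$, $-1$, with the leftward case resolved by appealing to the outer hypothesis for the return state $f(t)$ and the stored output $X_t$. Your version is somewhat more explicit than the paper's about the strengthened invariant for $f_i$ and $X_q$, the well-foundedness of the recursion (via determinism and the requirement that the whole input be read), and the harmlessness of the $1$-boundedness fixes, but these are refinements of the same argument rather than a different approach.
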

\begin{proof}
The proof is by induction on the length of the prefixes of the run $r'$. Consider a prefix of length 1.
In this case, we are at the first position of the input, in initial state 
$q'_0=q_0$, since the  \twst$_{sf}$  moves right. We also have the states $r'_1=r_1, p'_1=p_1$, and 
the output obtained as $O:=O \rho(X_{q_0})$, and $\rho(X_{q_0})=z_1$. Thus, for a prefix of length 1, the states coincide, and the output
in $O$ coincides with the output produced so far.  

Assume that upto a prefix of length $k$, we obtain the conditions of the lemma, and consider a prefix of length $k+1$. \\

Let $r=(q_0, i_0=0) \stackrel{r_1,s(i_0),p_1|z_1}{\rightarrow}(q_1,i_1)\dots
(q_{k-1},i_{k-1})\stackrel{r_{k},s(i_{k-1}),p_{k}|z_{k}}{\rightarrow}(q_k,i_k)\stackrel{r_{k+1},s(i_k),p_{k+1}|z_{k+1}}{\rightarrow}(q_{k+1},i_{k+1})\dots
$ be a prefix of length $k+1$. The corresponding run $r'$ in the \sst$_{sf}$ is
$r'=(q'_0, f_0) \stackrel{r'_1,a_1,p'_1}{\rightarrow}  (q'_1, f_1) \dots (q'_{k-1}, f_{k-1}) \stackrel{r'_k,a_k,p'_k}{\rightarrow}(q'_k, f_k) \stackrel{r'_{k+1},a_{k+1},p'_{k+1}}{\rightarrow}(q'_{k+1},f_{k+1})\dots $, where by inductive hypothesis, we know that 
$q'_k=q_{j_k}, a_k=s_{j_k}, r'_k=r_{j_k}, p'_k=p_{j_k}$, and the output $\sigma'(O)$ till position $k$ 
of the input is $z_1z_2 \dots z_{j_k}$. $q'_k=q_{j_k}$ is the state the \twst$_{sf}$ is in, when it moves to the right of cell $k$ containing $a_k$, for the first time, and reads 
$a_{k+1}$. 

By construction of the \sst$_{sf}$, $\delta'((q_{j_k}, f), r_{j_k}, p_{j_k})=(f'(q_{j_k}), f')$, where $f'$ is defined based on 
the transition in the \twst$_{sf}$ from state  $q_{j_k}$ on reading $a_{k+1}$. 
\begin{enumerate}
\item If $\delta(q_{j_k},a,r_{j_k}, p_{j_k})=(t, \gamma,1)$, then by construction of the \sst$_{sf}$,  we obtain 
$f'(q_{j_k})=t$ and $\rho((q_{j_k},f),a)(O)=O\rho(X_{q_{j_k}})=O\gamma$. By inductive hypothesis, 
the contents of $O$ agrees with the output till $k$ steps. By catenating $\gamma$,  
in this case also, we obtain the same situation. 
\item If $\delta(q_{j_k},a,r_{j_k}, p_{j_k})=(t, \gamma,0)$, then 
by construction of the \sst$_{sf}$,  we obtain 
$f'(q_{j_k})=f'(t)$ and $\rho((q_{j_k},f),a)(O)=O\gamma\rho(X_t)$. Inducting on the number of steps it takes for 
the \twst$_{sf}$ to move to the right of cell $k+1$, we can show that the condition in the lemma holds. 
\begin{itemize}
\item 
If the \twst$_{sf}$ moves to the right of cell $k+1$ in the next step (one step), in state $u=q'_{k+1}$, then we obtain
$f'(t)=u$, where  $\delta(t,a,r,p)=(u,\gamma',1)$, in which case we obtain 
$\rho((q_{j_k},f),a)(O)=O\gamma\gamma'$ ($O \rho(X_t)$, where the current contents of $O$ is the old contents of $O$ followed by $\gamma$ and 
$\rho(X_t)=\gamma'$) which indeed is the output obtained 
in the \twst$_{sf}$ when it moves to the right of cell $k+1$. 
\item Assume as inductive hypothesis, that the result holds (that is, the output $O$ agrees with the output in the 
\twst$_{sf}$ so far, and that the state of the $\sst_{sf}$ agrees with the state of the \twst$_{sf}$) when it moves to the right 
of cell $k$ in $\leq m$ steps.   
Consider now the case when it moves to the right of cell $k+1$ in $m+1$ steps.

At the end of $m$ steps, the \twst$_{sf}$ is on cell $k+1$, in some state 
$q$, and in the constructed \sst$_{sf}$, by hypothesis, the contents of $O$ correctly reflect the output so far.
By assumption, we have $\delta(q, a_{k+1}, r,p)=(q'', \gamma'',1)$ for some state $q''$. Then we obtain, by 
the inductive hypothesis applied to cell $k$ at the end of $m$ steps, and the construction of the \sst$_{sf}$, 
the \sst$_{sf}$  to be in state 
$(f'(q),f')$, where $f'(q)=q''$, and 
   $\rho((q'',f),a_{k+1})(O)=O\gamma''$. Since $O$ is up-to-date 
with the output with respect to the \twst$_{sf}$, catenating $\gamma''$ indeed keeps it up-to-date, and 
the state of the \sst$_{sf}$ $q'_{k+1}$ is indeed the state the \twst$_{sf}$ is in, when it moves to the right of cell $k+1$ 
for the first time.  
  \end{itemize}
\item If $\delta(q_{j_k},a,r_{j_k}, p_{j_k})=(t, \gamma,-1)$, then 
by construction of the \sst$_{sf}$,  we obtain 
$f'(q_{j_k})=f'(f(t))$ and $\rho((q_{j_k},f),a)(O)=O\gamma X_t \rho(X_{f(t)})$. 
The \twst$_{sf}$ is in cell $k$ at this point of time.  
Inducting on the number of steps it takes for 
the \twst$_{sf}$ to move to the right of cell $k+1$ as above, we can show that the condition in the lemma holds.

\end{enumerate}

\end{proof}
\subsection{Aperiodicity of the  \sst$_{sf}$}
\label{app:aper-sst}
Having constructed the \sst$_{sf}$, the next task is to argue that the \sst$_{sf}$ is aperiodic. 
\begin{lemma}
The constructed \sst$_{sf}$  is aperiodic.
\end{lemma}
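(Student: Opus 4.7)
The plan is to prove that the transition monoid $M_\Tt$ of the constructed $\sst_{sf}$ inherits aperiodicity from the underlying two-way transducer $\Aa$. Since $\Aa$ is aperiodic, there exists $n \geq 1$ such that $M^n = M^{n+1}$ for every element $M$ of $M_\Aa$. I would show that the same $n$ works for $M_\Tt$, i.e. that $M_u^n = M_u^{n+1}$ in $M_\Tt$ for every $u \in \Sigma^*$. Each matrix $M_u \in M_\Tt$ has two components to analyze separately: the state-transition component on $Q' = Q \times [Q \rightharpoonup Q]$ and the variable-flow component between the variables $\{X_q \mid q \in Q\} \cup \{O\}$.

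For the state-transition component, I would argue that it is a homomorphic image of $M_\Aa$. The function-valued part $f$ of an $\sst_{sf}$ state $(q, f)$ is exactly a crossing summary of $\Aa$ on the string read so far: by the inductive definition of $\delta'$, the value $f(q')$ is the state in which $\Aa$ next exits to the right of the current cell, assuming it has just re-entered that cell on the right in state $q'$. Such a summary is entirely determined by the behaviors $\beh_{xy}(u)$ for $x, y \in \{\ell, r\}$, which are the entries of the behavior matrix of $u$ in $M_\Aa$. It follows that the map sending $M_u \in M_\Tt$ to its projection on the state-transition part factors through $M_\Aa$; hence this projection is a monoid homomorphism, and the state-component of $M_\Tt$ inherits aperiodicity from $M_\Aa$.

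For the variable-flow component, I would first use the fact that the $1$-boundedness fix makes every flow entry lie in $\{0, 1\}$, so it suffices to show the flow relations coincide for $u^n$ and $u^{n+1}$. The recursive variable updates of the form $\rho((q,f),a)(X_q) = \gamma X_t \rho(X_{f(t)})$ (the left-moving case) unroll, after finitely many steps, into a concrete trace of the output produced by $\Aa$ while it bounces between cells. The resulting flow from some $X_p$ to some $X_q$ therefore records whether the bouncing trajectory of $\Aa$ (from a given cell in state $p$, exiting on the right of the scanned prefix in state $q$) is realizable for input $u$. This realizability, like the state behavior, depends only on the entries $\beh_{xy}(u)$. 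Thus replacing $u^n$ by $u^{n+1}$ preserves all flow entries, and $M_\Tt$ is aperiodic.

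The main obstacle will be formalizing the variable-flow step, because the update for $X_q$ unfolds recursively through $X_t, X_{f(t)}, X_{f(f(t))}, \ldots$, potentially yielding long output expressions. I would show that the unfolding terminates in at most $|Q|$ steps using the injectivity of $f$ guaranteed by the $1$-boundedness fix, and then verify that the resulting composite flow relation is a function only of the four behavior entries $\beh_{xy}(u)$ of $u$. With this correspondence in hand, transferring aperiodicity from $M_\Aa$ to $M_\Tt$ is immediate.
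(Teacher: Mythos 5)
Your proposal matches the paper's proof in essence: both arguments establish that the state evolution and the variable flow of the constructed \sst$_{sf}$ on a word $u$ are functions of the behavior matrices $\beh_{xy}(u)$ of the two-way transducer (the paper does this by an explicit induction on $u$ characterizing the shape of $\rho(X_q)$ via the languages $L^{xy}_{pq}$, which is exactly the unrolling you describe), and then transfer aperiodicity from $M_\Aa$ using the same exponent. The only pieces you omit are that the transition-monoid entries also carry the Muller-set tracking tuples and the look-ahead/look-behind states, which the paper dispatches briefly by noting that those automata are aperiodic and that their state sequences on $u^m$ and $u^{m+1}$ coincide as well.
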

\begin{proof}
We start with an aperiodic \twst$_{sf}$ $\Aa$ and obtain the \sst$_{sf}$ $\Tt$. Since $\Aa$ is aperiodic, there exists an $m \in \N$  such that $u^m \in L_{pq}^{xy}$ iff $u^{m+1} \in L_{pq}^{xy}$ for all $p, q \in Q$ and $x, y \in \{\ell, r\}$.  
To show the aperiodicity of the \sst$_{sf}$, we have to show that the transition monoid which captures state and variable flow is aperiodic. 
Lets consider a string $u \in \Sigma^*$ and a run in the \sst$_{sf}$ $(q,f) \stackrel{u}{\rightarrow} (q',f')$. 
Based on the transitions of the \sst$_{sf}$, we make a basic observation on variable assignments as follows. 
We show a correspondence between the flow of states for each kind of traversal in the \twst$_{sf}$, and the respective state and variable flow
in the constructed \sst$_{sf}$. Using this information and the aperiodicity of the  \twst$_{sf}$, we prove the aperiodicity 
of the  \sst$_{sf}$ by showing the state, variable flow to be identical for strings $w^m, w^{m+1}$, for 
some chosen $m \in \mathbb{N}$, and any $w$.

We know already that 
if $u=a$, and if $\delta(q,r,a,p)=(\gamma, q', 1)$ in the \twst$_{sf}$ $\Aa$, then in the \sst$_{sf}$ we have the variable update 
$\rho(X_q)=\gamma$.
 If $u=a$, and if $\delta(q,r,a,p)=(\gamma, q', -1)$ in the \twst$_{sf}$ $\Aa$, then in the \sst{} we have the variable update
  $\rho(X_q)=\gamma X_{q'} \rho(X_{f(q')})$. 
We now generalize the above for any $u \in \Sigma^*$ as follows:
\begin{itemize}
\item[(a)] If the run of $u$ in the \twst$_{sf}$ $\Aa$ starts at the rightmost letter of $u$ in state $q$, and 
leaves it at the right with output $\gamma$, then we have $\rho(X_q)=\gamma$.
\item[(b)] Assume the run of $u$ in the \twst$_{sf}$ $\Aa$ starts at the rightmost letter of $u$ in state $q$, and 
leaves it at the left in state $t_1$. Since the whole input 
is read,  eventually the \twst$_{sf}$ will leave $u$  at the right in state $f'(q)$.  Then  
we have $\rho(X_q)=\gamma_0 X_{t_1} \gamma_1 X_{t_2} \dots \gamma_{n-1} X_{t_n} \gamma_n$ such that 
(i) $u \in L_{q,t_1}^{r\ell}$ with output $\gamma_0$, (ii) $u \in L_{f(t_1),t_2}^{\ell\ell}$ with output $\gamma_1$. 
In general,    $u \in L_{f(t_i),t_{i+1}}^{\ell\ell}$ with output $\gamma_i$, and (iii) since $\Aa$ completely reads 
the input, at some point, $u \in L_{f(t_n),f'(q)}^{\ell,r}$ with output $\gamma_n$. 
Also, in state $t_i$, we move one position left from some cell $h$,  the leftmost position 
of $u$, and in state $f(t_i)$ we move to the right 
of cell $h$ for the first time, and the output generated in the interim is obtained from $\rho(X_{t_i})$.  
\end{itemize}
We will show the above  (a), (b)
by inducting on $u$. For the base case $u=a$, we already 
have the proof, by the construction of the updates in the \sst$_{sf}$.
 Let $p$ be any state. 
 Consider the case when in the \twst$_{sf}$ we start at the right of $ua$ in state $p$ and leave it on the right.
 \begin{itemize}
\item If we start at state $p$ on $a$, and $\delta(p,a)=(\gamma, q, -1)$, then 
we are at the right of $u$ in state $q$. We will be back on $a$ in state 
$f(q)$. By inductive hypothesis on $u$, we have
$\rho(X_q)$ as some string over $\Gamma^*$, 
since we will leave $u$ at the right starting in state $q$
 in the right.  
Again, by inductive hypothesis for $a$, we have 
$\rho(X_p)=\gamma X_q \rho(X_{f(q)})$; by the inductive hypothesis 
on $u$, when we leave $u$ on the right starting in state $q$ on the rightmost symbol of $u$, we obtain $\rho(X_q)$ as a constant.
We are now on $a$ in state $f(q)$. 
If we leave $a$ to the right starting at $f(q)$, then we have by inductive hypothesis, 
$\rho(X_{f(q)})$ is a constant, and then we obtain 
$\rho(X_p)$ as a constant, while considering moving to the right of $ua$ starting at 
$a$ in state $p$. This agrees with (a) above.  
\item If we start at state $p$ on $a$, and $\delta(p,a)=(\gamma, q, 0)$.
By construction of the \sst$_{sf}$, we have $\rho(X_p)=\gamma \rho(X_q)$. As long as we stay in the same position, we continue 
obtaining the same situation. If we leave $a$ to the right eventually, then by inductive hypothesis, we obtain 
$\rho(X_p)$ as a string over $\Gamma^*$. If from some state $t$, while on $a$, we move one position to the left, (on the rightmost symbol of $u$),
then by inductive hypothesis on $u$, we obtain
$\rho(X_t)$ as a constant; finally, since we move to the right of $a$ starting in state $f(t)$, 
 we obtain $\rho(X_{f(t)})$ as a string over $\Gamma^*$, and hence, 
$\rho(X_p)$ as well.

Thus,  the inductive hypothesis works since $f(p)=f(q)$.
\item If  we start at state $p$ on $a$, and $\delta(p,a)=(\gamma, q, 1)$, then 
we straightaway have $\rho(X_p)$ as a constant, and are done.   
\end{itemize}
Now consider the case when we start at the right of $ua$ and leave it at the left. Let us start in state 
$p$ on $a$.
\begin{itemize}
\item If we start at state $p$ on $a$, and $\delta(p,a)=(\gamma, q, -1)$, then 
we are at the right of $u$ in state $q$. 
By inductive hypothesis for $a$ we also have 
$\rho(X_p)=\gamma X_q \rho(X_{f'(q)})$. 
If we leave $ua$ at the left 
in state $r$, then starting in the right of $u$ in state $q$, we leave it at the left 
in $r$. By inductive hypothesis on $u$, we have 
$\rho(X_q)=\gamma_0 X_{r} \dots X_{r'}\gamma'$, such that  $u \in L_{f(r'),f'(q)}^{\ell r}$.   
Leaving $a$ on the right from state $f'(q)$ 
gives by inductive hypothesis $\rho(X_{f'(q)})$ as a constant string $\beta$.

On $ua$ starting in state $q$ on the right, the content of 
$X_p$ is then $\gamma \gamma_0 X_r \dots X_{r'} \gamma' \rho(\rho(X_{f'(q)}))$
=$\gamma \gamma_0 X_r \dots X_{r'} \gamma' \beta$, which agrees with (b), since 
$r, \dots, r'$ are the states in which we leave $u$ at the left.

\item If we start at state $p$ on $a$, and $\delta(p,a)=(\gamma, q, 0)$. 
Then the inductive hypothesis works since $f(p)=f(q)$.
\item Lastly, the case $\delta(p,a)=(\gamma, q, 1)$ does not apply since 
we are leaving $ua$ at the left starting on $a$ in state $p$. 
\end{itemize}
The aperiodicity of the \sst$_{sf}$ is now proved as follows.
Since the \twst$_{sf}$ is aperiodic, we know that there is an $m \in \mathbb{N}$ such that 
$u^m \in L_{pq}^{xy}$ iff  
$u^{m+1} \in L_{pq}^{xy}$. Moreover, we also know that the matrices 
of $u^m, u^{m+1}$ are identical in the \twst$_{sf}$, which tells us that the sequence of states 
seen in the left, right traversals of $u^m, u^{m+1}$ are identical. If this is the case, then in the above argument, we 
obtain the variable substitutions and state, variable flow 
for $u^m$ and $u^{m+1}$ to be identical, since the state and variable flow 
of the \sst$_{sf}$ only depends on the state flow of the \twst$_{sf}$.
Then we obtain the transition monoids of $u^m, u^{m+1}$ to be the same in the \sst$_{sf}$. 
The states of the look-behind and look-ahead automata in the \sst$_{sf}$ also follow the same sequence of states  
of the look-behind and look-ahead automata in the \twst$_{sf}$.
 Since the look-behind and look-ahead automata are aperiodic, 
we obtain $M_{u^m}=M_{u^{m+1}}$ in the \sst$_{sf}$. Hence the \sst$_{sf}$ is aperiodic. 
\end{proof}

\section{Proofs from Section \ref{sec:2wst-sst} : $\sst_{\la} \subseteq \sst{}$}
\label{app:starfree}
We now show that we can eliminate the star-free look-around from the $\sst_{sf}$ $(T,A,B)$  without losing expressiveness.
Eliminating the look behind is easy: $B$ can be simulated by computing  
for each state $p_B \in P_B$, the state $p_B'$ of $P_B$ reached by $B$ starting in
$p_B$ on the current prefix, and whenever $p_B'$ is a final state of $B$, the transition is triggered.

In order to remove the look-ahead, we need to keep track of $P_i \in2^{P_A}$ at
every step. On processing a string  $s=a_1 a_2 a_3 \ldots \in \Sigma^{\omega}$ 
in $T$ starting with $P_0=\emptyset$, 
we obtain successively $P_1, P_2, \dots$ where 
$P_{i+1}= P_{i+1}' \cup \{p_{i+1}\}$ such that  
$\delta(q_i, r_{i+1}, a_{i+1}, p_{i+1}) =  q_{i+1}$, and for all $p\in P_i$,
$\delta_A(p,a_{i+1})\in P_{i+1}'$. Thus, starting with $P_0=\emptyset$ and 
$\delta(q_0, r_{1}, a_{1}, p_{1}) =  q_{1}$, we have
$P_1=\{p_1\}$, $P_2=\delta_A(p_1, a_2) \cup \{p_2\}$ and so on. 
A configuration of the \sst$_{sf}$ is thus a tuple $(q, (r'_1, \dots, r'_n), 2^{P_A})$
where $r'_i$ is the state reached in $B$ on reading the current prefix from 
state $r_i$, assuming $P_B=\{r_1, \dots, r_n\}$,  
and $2^{P_A}$ is a set of states in $P_A$ obtained as explained above. 
 We say that
$\rho$ is an accepting run of $s$ if $(q_0,(r_1, \dots, r_n), P_0)$ is an
 initial configuration, i.e.  
$q_0\in Q_0, P_0=\emptyset$ and after some point, we only see all elements of
 exactly one Muller set $M_i$ repeating infinitely often in $Q$ as well as in
 $P_A$ i.e. $\Omega(\rho)_{1} = M_i$ from domain of $F$ and $\Omega(\rho)_2 = M_j$
 from $P_f$. Also, $(q_i, (r'_1, \dots, r'_n), P_i) \stackrel{r_{k}, a,
   p_{i+1}}{\longrightarrow} (q'_i, (r''_1, \dots, r''_n), P_{i+1})$ iff
 $\delta(q_i,a,r_k,p_{i+1})=q'_i$,  
$P_{i+1}= \delta^*_A(P_{i},a) \cup \{p_{i+1}\}$ and
 $\delta_B(r'_j,a)=r''_j$ for $1 \leq j \leq n$ and the state $r''_k$ is an
 accepting  state of $B$.

A configuration in the \sst$_{sf}$ is said to be \emph{accessible} if
it can be reached from an initial configuration, and
\emph{co-accessible} if from it accepting configurations can be
reached. It is \emph{useful} if it is both accessible and
co-accessible. Note that from the mutual-exclusiveness of
look-arounds and the determinism of $A,B$, it follows that for any input string, there is at most one
run of the $\mathsf{SST}_{\la}$ from and to useful configurations, as shown in
Appendix \ref{app:unique}. 

The concept of substitutions induced by a run can be naturally extended from
$\mathsf{SST}$ to $\mathsf{SST}_{\la}$. 
Also, we can define the transformation implemented by an $\mathsf{SST}_{\la}$
in a straightforward manner.
The transition monoid of an $\mathsf{SST}_{\la}$ is defined by matrices indexed by
configurations $(q_i, (r_1, \dots, r_n), P_i)\in Q\times P_B^n \times 2^{P_A}$, using the notion of run defined before,
and the definition of aperiodicity of $\mathsf{SST}_{\la}$ follows that of $\mathsf{SST}$.
\subsection{Uniqueness of Accepting Runs in $\mathsf{SST}_{\la}$}
\label{app:unique}
Let $a_1a_2\dots \in \Sigma^{\omega}$ and $\rho: (q_0, (r_1, \dots, r_n), P_0) \stackrel{a_1}{\rightarrow}(q_1, (r_1^1, \dots, r_n^1), P_1)\stackrel{a_2}{\rightarrow}(q_2, (r_1^2, \dots, r_n^2), P_2) \dots$ be an accepting run in the $\mathsf{SST}_{\la}$. We show that $\rho$ as well as the sequences of transitions associated with $\rho$ are unique. 
Given a sequence of transitions of the $\mathsf{SST}_{\la}$, it is clear that there is exactly one run since both $A, B$ are deterministic. Lets assume that the sequence of transitions are not unique, that is there is another accepting run $\rho'$ for $a_1a_2 \dots$. Let $i$ be the smallest index where 
$\rho$ and $\rho'$ differ. The $(i-1)$th configuration is then some $(q_i, (r'_1, \dots, r'_n), P_i)$ in both $\rho, \rho'$. 
Let us assume that we have two transitions $\delta(q_i, r_j, a_i, p_i)=q_{i+1}$ and 
 $\delta(q_i, r_k, a_i, p'_i)=q'_{i+1}$ enabled such that $r_j \neq r_k$ or $p_i \neq p'_i$ or $q_{i+1} \neq q'_{i+1}$. 
 Assume $r_j \neq r_k$. 
 Since both are trigerred, we have the prefix upto now is in $L(B_{r_j}) \cap L(B_{r_k})$ with $r_j \neq r_k$, which contradicts mutual exclusiveness of look-behind. 
 If $p_i \neq p'_i$, then since both runs are accepting, we have the infinite suffix  in 
 $L(A_{p_i}) \cap L(A_{p'_i})$, which contradicts the mutual exclusiveness of look-ahead. 
 If $r_j=r_k$ and $p_i=p'_i$, but $q_{i+1} \neq q'_{i+1}$, then $\delta$ is not a function, which is again a contradiction. 

\subsection*{Variable Flow and Transition Monoid of $\mathsf{SST}_{\la}$}
\textbf{Variable Flow and Transition Monoid}. 
Let $P_A, P_B$ represent the states of the (deterministic) lookahead and look-behind automaton $A,B$, and $Q$ denote states of the 
 $\mathsf{SST}_{\la}$.

The transition monoid of an $\mathsf{SST}_{\la}$ depends on its configurations
and variables. It extends the notion of transition monoid for $\mathsf{SST}$ with look-behind, look-ahead states
components but is defined only on {\it useful} configurations $(q,(r'_1, \dots, r'_n), P)$. 
A configuration $(q,(r'_1, \dots, r'_n), P)$ is {\it useful} iff it is {\it accessible} and 
{\it co-accessible} : that is, $(q,(r'_1, \dots, r'_n), P)$ is reachable from the initial
configuration $(q_0,(r_1, \dots, r_n),\varnothing)$ (here, $P_B=\{r_1, \dots, r_n\}$) and will reach a 
configuration $(q',(r''_1, \dots, r''_n),P')$ 
from where on, some muller subset of $Q$ is witnessed continuously, and some muller subset of $P_A$ is witnessed continuously.

Note that given two useful configurations $(q,(r'_1, \dots, r'_n),P)$, $(q',(r''_1, \dots, r''_n), P')$ and a
string $s\in\Sigma^*$, there exists \emph{at most} one run from
$(q,(r'_1, \dots, r'_n),P)$ to $(q',(r''_1, \dots, r''_n), P')$
on $s$. 
 Indeed, since $(q,(r'_1, \dots, r'_n), P)$ and 
 $(q',(r''_1, \dots, r''_n), P')$ 
  are both useful, 
there exists $s_1,s_2\in\Sigma^*$ such that 
$(q_0, (r_1, \dots, r_n),\varnothing)\flows^{s_1} (q,(r'_1, \dots, r'_n),P)$ and $(q',(r''_1, \dots, r''_n),P') \flows^{s_2} (q'', (r'''_1, \dots, r'''_n), P'')$
such that from $(q'', (r'''_1, \dots, r'''_n), P'')$, 
 we settle in some Muller set of both $Q$ and  $P_A$ reading some $w \in \Sigma^{\omega}$. 
   If there are two runs from
$(q,(r'_1, \dots, r'_n),P)$ to 
$(q',(r''_1, \dots, r''_n),P')$ on $s$, then 
there are two accepting runs 
for $s_1ss_2w$, 
which contradicts the fact that accepting runs are
unique. 
We denote by 
$\textsf{useful}(T,A,B)$ the useful configurations of $(T,A,B)$.

Thanks to the uniqueness of the sequence of transitions associated
with the run of an $\mathsf{SST}_{\la}$ from and to useful
configurations on a given string, one can extend the notion of
variable flow naturally by considering, as for $\mathsf{SST}$, the composition
of the variable updates along the run. 

Assume that $T$ has $j$ muller sets and $B$ has $k$ muller sets. 
A string $s\in \Sigma^*$ maps to a square  matrix $M_s$ of dimension
$|Q\times (P_B \times \dots \times P_B) \times 2^{P_A}|\cdot |\varsst|$ and is defined as
\begin{itemize}
\item 
  $M_{s}[(q, (r'_1, \dots, r'_n),P), X][(q', (r''_1, \dots, r''_n), P'),
X']=n, \alpha_1, \alpha_2$ if there exists a run $\rho$ from 
\\ $(q, (r'_1, \dots, r'_n), P)$ to
$(q', (r''_1, \dots, r''_n), P')$ on $s$ such that $n$ copies of $X$ flows to $X'$ over the
run $\rho$, and $(q, (r_1, \dots, r_n), P)$ and $(q', (r''_1, \dots, r''_n), P')$ are both useful  (which implies
that the sequence of transitions of $\rho$ from $(q, (r_1, \dots, r_n), P)$ to $(q', (r''_1, \dots, r''_n), P')$ is
unique, as seen before), and 
\item $\alpha_1 \in \{0,1,\kappa\}^j$ and $\alpha_2 \in 2^{[\{0,1,\kappa\}^k]}$. 
$\alpha_1$ is a $j$-tuple keeping track for each of the $j$ muller sets of $T$, whether
the set of states seen on reading a string $s$ is a muller set, a strict subset of it, or has 
seen a state outside of the muller set. Likewise, $\alpha_2$ is a set of $k$-tuples doing the same thing
based on the transition from set $P$ to set $P'$.
 Note that since we keep  a subset 
of states of $P_A$ in the configuration, we need to keep track of this information for each state in the set.  
Thus, $\alpha_1$ keeps track of the run from $q$ and whether it witnesses 
a full muller set (1), a partial muller set ($\kappa$), or 
goes outside of a muller set (0). Likewise, 
$\alpha_2$ keeps track of the same for each $p \in P_i$, the run from $p$. 
If  a run is accepting, then $\alpha_2$ will eventually become the singleton 
$\{(0, \dots, 0, 1, 0, \dots, 0)\}$ corresponding to some muller set of $P_A$. 
\item  $M_{s}[(q, (r'_1, \dots, r'_n),P), X][(q',(r''_1, \dots, r''_n),P'), X']=\bot$, otherwise. 

 \end{itemize}

%

\begin{lemma}\label{lem:aperiodicSSTLA} 
  For all aperiodic 1-bounded $\mathsf{SST}_{\la}$ with star-free
  look-around, there exists an equivalent aperiodic 1-bounded SST.
\end{lemma}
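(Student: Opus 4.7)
The plan is to eliminate $B$ and $A$ in two stages, each realised by a product-style construction that preserves both aperiodicity and $1$-boundedness.

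First, eliminate the look-behind $B$. Because $B$ is deterministic and reads the input in sync with the SST, its current state for each possible starting state $r \in P_B$ can be maintained incrementally in the SST's finite control. The new SST's state is enlarged to $(q,(r'_1,\dots,r'_n))$ with $r'_i = \delta_B^*(r_i, s)$ for the current prefix $s$, and every transition of the $\mathsf{SST}_{\la}$ that required look-behind state $r_i$ now fires precisely when $r'_i$ is an accepting state of $B$ starting from $r_i$. The transition monoid of this product embeds into $M_T \times M_B$, so aperiodicity is preserved (aperiodic monoids are closed under finite products and sub-monoids), and the variable updates are untouched, so $1$-boundedness is inherited. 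The look-ahead $A$ remains, turning the intermediate object into an $\mathsf{SST}$ with only look-ahead.

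Second, eliminate the look-ahead $A$ by a subset-style construction. Following the scheme already used in the paper, the state is enriched with a subset $P \subseteq P_A$ that tracks all look-ahead guesses still consistent with the input read so far: on each symbol $a$, every $p \in P$ advances to $\delta_A(p, a)$, and a fresh guess corresponding to the look-ahead used on the current transition is added. To keep the resulting SST deterministic while the $\mathsf{SST}_{\la}$ is effectively exploring several guess-branches in parallel, the variables are indexed by look-ahead states, replacing $\varsst$ with $\varsst \times P_A$: the update associated with guess $p$ acts only on $p$-indexed copies and mirrors the original update of $T$. Uniqueness of accepting runs (Appendix \ref{app:unique}) together with mutual exclusivity of look-aheads guarantees that exactly one branch survives in the limit, and the output function of the new SST is declared on Muller pairs $(F', P')$ where $F'$ is a Muller set of $T$ and $P'$ is a Muller set of $A$, emitting the concatenation carried by the unique surviving $p$-index inside $P'$.

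For aperiodicity, the transition monoid of the resulting SST embeds into the product of the monoids of $T$, $B$, and the subset automaton induced by $A$. The subset automaton of an aperiodic DFA is itself aperiodic, since if $u^n$ and $u^{n+1}$ induce the same map on $P_A$ then they induce the same map on $2^{P_A}$; combined with closure of aperiodicity under finite products and sub-monoids, this yields aperiodicity of the new SST. For $1$-boundedness, each indexed variable $X^p$ inherits the flow pattern of $X$ in $T$, and variables from distinct guess-indices never appear together on any right-hand side of an update, so per-variable flow remains at most $1$.

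The hard part is verifying that the Muller-style output function is well-defined: one must show that whenever a Muller pair $(F', P')$ is visited infinitely often, there is a unique surviving branch $p \in P'$ whose $A$-trajectory stabilises inside $P'$. This reduces, via the uniqueness-of-accepting-runs lemma together with aperiodicity of $A$, to the claim that incorrect guesses are eventually eliminated after a bounded delay. All other pieces---determinism, finiteness of the state and variable blow-up, and correctness of the variable bookkeeping---then follow routinely from the product construction.
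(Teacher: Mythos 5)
Your overall architecture (eliminate the look-behind by a synchronous product, eliminate the look-ahead by a subset construction with per-branch variable copies, and use uniqueness of accepting runs to resolve the output) matches the paper's, but there is a genuine gap in how you index the variable copies. You replace $\varsst$ by $\varsst\times P_A$ and let ``the update associated with guess $p$ act only on $p$-indexed copies''. A branch of the simulation is not identified by a single look-ahead state: a fresh guess is made at \emph{every} position, so a branch carries a whole set $P\subseteq P_A$ of pending look-ahead obligations, and two branches that happen to share their most recent guess can be in different $T$-states (and carry different pending sets) with different variable histories. Indexing by $P_A$ alone conflates such branches, so their variable contents clash and the construction breaks. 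The paper instead indexes variables by entire useful configurations, taking $\varsst'=\{X_{c}\mid X\in\varsst,\ c\in\textsf{useful}(T,A,B)\}$ with $c=(q,(r'_1,\dots,r'_n),P)$; this is what actually separates branches.

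The second missing idea is what happens when two branches \emph{merge}, i.e.\ when two configurations in the current subset have a common successor on the next letter. If both predecessors write into the successor's variable copies you lose copylessness and $1$-boundedness; the paper fixes a total order on $\textsf{useful}(T,A,B)$ and lets only the \emph{minimal} predecessor perform the update (justified by uniqueness of accepting runs: merged branches would produce the same output anyway). This minimal-predecessor rule is also what makes the aperiodicity argument work: Lemma~\ref{lem:int} characterises each entry of $M_{T',s}$ as either $0$ (configuration absent or not the minimal ancestor) or as the corresponding entry of $M_{(T,A,B),s}$, and aperiodicity of $M_{T'}$ is derived from aperiodicity of $M_{(T,A,B)}$ together with the stabilisation $\Delta^*(S,s^n)=\Delta^*(S,s^{n+1})$. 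Your claim that the transition monoid of the new SST ``embeds into the product of the monoids of $T$, $B$, and the subset automaton induced by $A$'' is fine for the state component but is not justified for the variable-flow component, precisely because of the resets and the minimal-predecessor selection; that part needs the explicit correspondence of Lemma~\ref{lem:int} (or an equivalent argument), which your sketch does not supply.
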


\begin{proof}
Let $(T,A,B)$ be an $\mathsf{SST}_{\la}$, with  $A = (P_A, \Sigma, \delta_A, P_f)$  a deterministic lookahead muller automaton, 
$B=(P_B, \Sigma, \delta_B)$ be a deterministic look-behind automaton. 
Let $T = (\Sigma, \Gamma, Q, q_0,\delta, \varsst, \rho,
F)$. Without loss of generality, we make the following \emph{unique successor} assumption
  \begin{itemize}
 \item For all states $q,q',q''\in Q$, and for all states 
 $p, p' \in P_A$, and for all states $r, r' \in P_B$, and for any symbol $a \in \Sigma$, 
 whenever $p \neq p'$ or $r \neq r'$, and if  $\delta(q,r,a,p)=q'$, 
$\delta(q,r',a,p')=q''$, then $q'\neq q''$. 
 \end{itemize}
If this is not the case, say we have $p \neq p', r=r'$, and $q'=q''$. 
Then considering the state of $T$ as $(q,r)$, we obtain 
$\delta((q,r),a,p)=(q',r)=\delta((q,r),a,p')$. However, it is easy 
to define transitions $\delta((q,r),a,p)=(q',r)$, $\delta((q,r),a,p')=(q'',r)$
by duplicating the transitions of $q', q''$, without affecting anything else, especially the aperiodicity. 
The same argument can be given when $r \neq r'$ or both ($p \neq p'$ and $r \neq r'$). Thus, for our convenience, we assume the 
\emph{unique successor} assumption without loss of generality.

\subsection{Elimination of look-around from $(T,A,B)$ : Construction of $T'$} 
We construct an aperiodic and
1-bounded \sst{} $T'$ equivalent to \sst$_{sf}$ $(T,A,B)$.
As explained in definition of $\mathsf{SST}_{\la}$,  the unique  run of a string $s$ on $(T,A,B)$ is not only a sequence of $Q$-states, but also a collection of the look ahead states $2^{P_A}$, and maintains the reachable state from each state of $P_B$.
 At any time, the current state of $Q$, the $n$-tuple of reachable states of $P_B$, and the collection of look-ahead states $P \subseteq P_A$ is a configuration. 
 For brevity of notation, let $\eta$ (we also use $\zeta, \eta'$ in the sequel) denote the $n$-tuple of $P_B$ states.
 
A configuration $(q_1,\eta_1,P_1)$, on reading $a$, evolves into 
$(q_2,\eta_2,P_2 \cup\{p_2\})$, where $\delta(q_1,r_2,a,p_2)=q_2$ is a transition  in the $\mathsf{SST}_{\la}$  and  
$\delta_A(P_1,a)=P_2$, where $\delta_A$ is the transition function of the look ahead automaton $A$, and 
the state reached from $r_2$  is an accepting state of $P_B$.
Note that the transition monoid of the $\mathsf{SST}_{\la}$ is aperiodic and 1-bounded by assumption. We now show 
how to remove the look-around, resulting in an equivalent $\sst{}$ $T'$ whose transition monoid 
is aperiodic and 1-bounded. 

While defining $T'$, we put together all 
the states resulting from transitions of the form $(q,r,a,p,q')$ and $(q,r',a,p',q'')$ in the $\mathsf{SST}_{\la}$.  
We define $T'=(\Sigma, \Gamma,Q',q'_0, \delta', \varsst', \rho',Q_f')$
with: 
\begin{itemize}
\item $Q' = 2^{\textsf{useful}(T,A,B)}$ where $\textsf{useful}(T,A,B)$ are the
useful configurations of $(T,A,B)$ ;\\
(Note that we can pre-compute $\textsf{useful}(T,A,B)$ once we know $(T,A,B)$),

\item $q'_0=\{(q_0,\eta,\emptyset)\}$ ($\eta=(r_1, \dots, r_n)$ 
where $P_B=\{r_1, \dots, r_n\}$. 
  We assume wlog that, $(T,A,B)$ accepts at least one input. Therefore $(q_0,\eta,\emptyset)$ is useful),

\item $Q'_f$, the set of muller sets of $T'$, is defined by 
the set of pairs $(S,R)$ where $S$ is any of the $j$ muller sets in $F$, and 
$R$ is any of the $k$ muller sets in $P_f$. 

\item $\varsst'=\{X_{q'} \mid X \in \varsst, q' \in
  \textsf{useful}(T,A,B)\}$,

\item   The transitions are defined as follows:
$\delta'(S,a)=\bigcup_{(q,\eta,P)\in S}\Delta((q,\eta,P),a)$ where \\
$\Delta((q,\eta,P),a)=\{(q', \eta', P' \cup\{p'\}) \mid \delta(q,r,a,p') =q'$ and  
$\delta_A(P,a)=P'\}\cap \useful(T,A,B)$. Each component of $\eta'$ is obtained from $\eta$ using $\delta_B$.
\end{itemize}

Before defining the update function, we first assume a total ordering
$\preceq_{\useful(T,A,B)}$ on $\useful(T,A,B)$. 
For all $(p,\eta,P)$, we define the substitution $\sigma_{(p,\eta,P)}$ as  $X \in
\varsst \mapsto X_{(p,\eta, P)}$.  Let $(S,a,S')$ be a transition of $T'$. Given a
state $(q',\eta', P') \in S'$, there might be several predecessor states $(q_1, \eta_1, P_1),\dots,(q_k,\eta_k, P_k)$
in $S$ on reading $a$. The set $\{(q_1,\eta_1, P_1),\dots,(q_k, \eta_k, P_k)\} \subseteq S$ is denoted by 
$Pre_S((q',\eta',P'),a)$. Formally, it is defined by $\{ (q,\eta,P)\in S\ |\
(q',\eta',P')\in \Delta((q,\eta,P),a)\}$.

We consider only the variable update of the
transition from the  minimal predecessor state.
Indeed, since any string has at most one accepting run in the $\mathsf{SST}_{\la}$
$(T,A,B)$ (and at most one associated sequence of transitions), if two runs reach the same state at
some point, they will anyway define the same output and therefore we
can drop one of the variable update, as shown in
\cite{FiliotTrivediLics12}. Formally, the variable update
$\rho'(S,a,S')(X_{(q',\eta',P')})$, for all $X_{(q',\eta',P')}\in\varsst'$ is
defined by $\epsilon$ if $(q',\eta',P') \notin S'$, and by
$\sigma_{(q,\eta,P)}\circ \rho(q,r,a,p,q')(X)$, where $(q,\eta,P) = \text{min}\ \{
(t,\zeta,R)\in S\ |\ (q',\eta',P')\in \Delta((t,\zeta,R),a)\}$, and $\delta(q,r,a,p) = q'$
(by the \emph{unique successor} assumption  the look-around states $p,r$ are unique). 
It is shown in \cite{FiliotTrivediLics12} that indeed $T'$ is
equivalent to $T$. We show here that the transition monoid of $T'$ is aperiodic and
$1$-bounded.

For all $S\in Q'$, and $s \in \Sigma^*$,  define $\Delta^*(S,s)=\{ (q',\eta',P') \mid \exists
(q,\eta,P) \in S$ such that $(q,\eta,P) \rightsquigarrow^s_{T,A,B}
(q',\eta',P')\}\cap\useful(T,A,B)$.

\subsection{Connecting Transition Monoids of $T'$ and $(T,A,B)$}
\begin{lemma}
\label{lem:int}
Let $M_{T'}$ be the transition monoid of $T'$ and
$M_{(T,A,B)}$ the transition monoid of $(T,A,B)$. Let
$S_1,S_2\in Q'$, $X_{(q,\eta,P)},Y_{(q',\eta',P')}\in \varsst'$ and $s\in\Sigma^*$. \\
Then $M_{T',s}[S_1,X_{(q,\eta,P)}][S_2,Y_{(q',\eta',P')}]=i, \alpha_1, \alpha_2$
iff $S_2 = \Delta^*(S_1,s)$ and one of the 
following hold:
\begin{enumerate}
\item either $i=0$ and, $(q,\eta,P)\not\in S_1$ or $(q',\eta',P')\not\in S_2$, or
\item $(q,\eta,P) \in S_1$, $(q',\eta',P')\in S_2$, $(q,\eta,P)$ is the minimal
  ancestor in $S_1$ of $(q',\eta',P')$ (i.e. $(q,\eta,P) = \text{min}\ \{ (t,\zeta,R)\in S_1\ |\ (q',\eta',P')\in
  \Delta^*((t,\zeta,R),s)\}$), and\\
   $M_{(T,A,B),s}[(q,\eta,P),X][(q',\eta',P'),Y] = i, \alpha_1, \alpha_2$. 
\end{enumerate}
\end{lemma}
\begin{proof}
 It is easily shown that 
$M_{T',s}[S_1,X_{(q,\eta,P)}][S_2,Y_{(q',\eta',P')}]=i, \alpha_1, \alpha_2$ with $i \geq 0$ iff $S_2 =
\Delta^*(S_1,s)$. Let us show the two other conditions.

 Assume that 
$M_{T',s}[S_1,X_{(q,\eta,P)}][S_2,Y_{(q',\eta',P')}] = i, \alpha_1, \alpha_2$. The variable
update function is defined in $T'$ as follows: after reading string $s$,
 from state
$S_1$, all the variables $Z_{(t,\zeta,R)}$ such that $(t,\zeta,R)\not\in S_2$ are
 reset to $\epsilon$ (and therefore no variable can flow from
$S_1$ to them). In particular, if  $(q',\eta', P')\not\in S_2$, then no
variable can flow in $Y_{(q',\eta',P')}$ and $i=0$.

Now, assume that $(q',\eta',P')\in S_2$, and consider the sequence of states 
$S_1,S'_1,S'_2,\dots,S'_k,S_2$ of $T'$ on reading $s$. By definition
of the variable update, the variables that are used to update
$Y_{(q',\eta',P')}$ on reading the last symbol of $s$ from $S'_k$ 
are copies of the form $Z_{(t,\zeta,R)}$ such that $(t,\zeta,R)$ is the minimal
predecessor in $S'_k$ of $(q',\eta',P')$ (by $\Delta$). By induction, it is
easily shown that if some variable $Z_{(t,\zeta,R)}$ flows to $Y_{(q',\eta',P')}$ 
from $S_1$ to $S_2$ on reading $s$, then $(t,\zeta,R)$ is necessarily the
minimal ancestor (by $\Delta^*$) of $(q',\eta',P')$ on reading $s$. 
In particular if $(q,\eta,P)\not\in S_1$, then $i=0$. 

Finally, if $i>0$, then necessarily $(q,\eta,P)$ is the minimal ancestor in
$S_1$ of $(q',\eta',P')$ on reading $s$, from $S_1$ to $S_2$, and since $T'$
mimics the variable update of $(T,A,B)$ on the copies, we get 
that $M_{(T,A,B),s}[(q,\eta,P),X][(q',\eta',P'),Y]=i, \alpha_1, \alpha_2$.

The converse is shown similarly. \end{proof}

\subsection{1-boundedness and aperiodicity of $T'$} 
1-boundedness is an obvious consequence of the claim and the fact that $(T,A,B)$ is
$1$-bounded. Let us show that $M_{T'}$ is aperiodic. We know that
$M_{(T,A,B)}$ is aperiodic. Therefore there exists $n\in\mathbb{N}$ such
that for all strings $s\in\Sigma^*$,  $M_{(T,A,B),s}^n =
M_{(T,A,B),s}^{n+1}$.

We first show that $\forall$ $S_1,S_2\in Q'$, and all strings $s\in\Sigma^*$,
$\Delta^*(S_1,s^n) = S_2$ iff $\Delta^*(S_1,s^{n+1}) = S_2$.
  Indeed, 
  \begin{itemize}
\item $S_2 = \Delta^*(S_1,s^n)$, iff 
$$S_2 = \{ (q',\eta', P')\in\textsf{useful}(T,A,B)\ |\ \exists (q,\eta,P)\in S_1,\
(q,\eta,P)\flows^{s^n}_{T,A,B},(q',\eta',P')\},$$ iff
\item $S_2 = \{ (q',\eta',P')\in\textsf{useful}(T,A,B)\ |\ \exists (q,\eta,P)\in S_1,\
M_{(T,A,B),s}^n[(q,\eta,P),X][(q',\eta',P'),Y]=i, \alpha_1, \alpha_2$, $i \geq 0\text{ for some
}X,Y\in\varsst\}$, iff
\item by aperiodicity of $M_{(T,A,B)}$, 
$$S_2 = \{ (q',\eta',P')\in\textsf{useful}(T,A,B)\ |\ \exists (q,\eta,P)\in S_1,$$
$$~~~~~~M_{(T,A,B),s}^{n+1}[(q,\eta,P),X][(q',\eta',P'),Y]=i, \alpha_1, \alpha_2, i \geq 0 \\ \text{ for some
}X,Y\in\varsst\}$$ iff 
\item $S_2 = \Delta^*(S_1,s^{n+1})$.
\end{itemize}

Let $S_1,S_2\in Q'$ and $X_{(q,\eta,P)}, Y_{(q',\eta',P')}\in \varsst$. Let also
$s\in\Sigma^*$. We now show that the matrices corresponding to $s^n, s^{n+1}$ coincide, using Lemma \ref{lem:int}.
By the first condition of Lemma \ref{lem:int}, we have 
\begin{quote}
$M_{T',s}^n[S_1,X_{(q,\eta,P)}][S_2, Y_{(q',\eta',P')}] = i, \alpha_1, \alpha_2$ and condition $(1)$
of Lemma \ref{lem:int} holds, iff \\
$M_{T',s}^{n+1}[S_1,X_{(q,\eta,P)}][S_2,
Y_{(q',\eta',P')}] = i, \alpha_1, \alpha_2$ and condition $(1)$ of Lemma \ref{lem:int} holds. 
\end{quote}

\begin{itemize}
\item Indeed, $M_{T',s}^n[S_1,X_{(q,\eta,P)}][S_2, Y_{(q',\eta',P')}] = 0, \alpha_1, \alpha_2$ and,
$(q,\eta,P)\not\in S_1$ or $(q',\eta',P')\not\in S_2$ iff by Lemma \ref{lem:int},  
$\Delta^*(S_1,s^n) = S_2$, and $(q,\eta,P)\not\in S_1$ or $(q',\eta',P')\not\in
S_2$, iff \\
by what we just showed, $\Delta^*(S_1,s^{n+1}) = S_2$, and $(q,\eta,P)\not\in S_1$ or $(q',\eta',P')\not\in
S_2$, iff by Lemma \ref{lem:int}, $M_{T',s}^{n+1}[S_1,X_{(q,\eta,P)}][S_2,
Y_{(q',\eta',P')}] = 0, \alpha_1, \alpha_2$ and condition $(1)$ of Lemma \ref{lem:int} holds. 
\end{itemize}

Let us now look at condition $(2)$ of Lemma \ref{lem:int}, and  show that 
\begin{quote}
$M_{T',s}^{n}[S_1,X_{(q,\eta,P)}][S_2,
Y_{(q',\eta',P')}] = i, \alpha_1, \alpha_2$ and condition $(2)$ of Lemma \ref{lem:int} holds, iff \\
$M_{T',s}^{n+1}[S_1,X_{(q,\eta,P)}][S_2, Y_{(q',\eta',P')}] = i, \alpha_1, \alpha_2$ and condition
$(2)$ of Lemma \ref{lem:int} holds.
\end{quote}

\begin{itemize}
\item   We only show one direction, the other being
proved exactly similarly. Assume that \\
$M_{T',s}^{n}[S_1,X_{(q,\eta,P)}][S_2, Y_{(q',\eta',P')}] = i, \alpha_1, \alpha_2$ and $(q,\eta,P)\in
S_1$, $(q',\eta',P')\in S_2$, and $(q,\eta,P)$ is the minimal ancestor in $S_1$
of $(q',\eta',P')$, and $M_{(T,A,B),s}^n[(q,\eta,P),X][(q',\eta',P'),Y] = i, \alpha_1, \alpha_2$. 
By Lemma \ref{lem:int},  we have $\Delta^*(S_1,s^n) = S_2$, and therefore 
$\Delta^*(S_1,s^{n+1}) = S_2$. Now, we have
$(q,\eta,P) = \text{min}\ \{ (t,\zeta,R)\in S_1\ |\ (q',\eta',P')\in
\Delta^*((t,\zeta, R),s^n)\}$. Since $\Delta^*((t,\zeta,R),s^n) =
\Delta^*((t,\zeta,R),s^{n+1})$ for all $(t,\zeta,R)\in S_1$, we have
$(q,\eta,P) = \text{min}\ \{ (t,\zeta,R)\in S_1\ |\ (q',\eta',P')\in
\Delta^*((t,\zeta,R),s^{n+1})\}$.

 Finally, $M_{(T,A,B),s}^{n+1}[q,\eta,P,X][q',\eta',P',Y] = 
M_{(T,A,B),s}^{n}[q,\eta,P,X][q',\eta',P',Y]$ (by aperiodicity of $(T,A,B)$). 
By Lemma \ref{lem:int}, we obtain  $M_{T',s}^{n+1}[S_1,X_{(q,\eta,P)}][S_2,
Y_{(q',\eta',P')}] = i, \alpha_1, \alpha_2$ and hence condition $(2)$ of Lemma \ref{lem:int} is satisfied. 
\end{itemize}

Using Lemma \ref{lem:int} and the aperiodicity of $M_{(T,A,B)}$, we obtain the aperiodicity of $M_{T'}$.

\end{proof}

\end{document}